\def\draft{0}
\documentclass[11 pt]{article}

\usepackage{comment}

\setlength{\topmargin}{-1 in} \setlength{\oddsidemargin}{0 in}
\setlength{\evensidemargin}{0 in} \setlength{\textwidth}{6.5 in}
\setlength{\textheight}{9 in} \setlength{\headsep}{0.75 in}
\setlength{\parindent}{0 in} \setlength{\parskip}{0.05 in}


\usepackage{amsmath,amsfonts,amssymb}
\newtheorem{theorem}{Theorem}[section]
\newtheorem{nt}{Selfnote}

\newtheorem{definition}[theorem]{Definition}
\newtheorem{lemma}[theorem]{Lemma}

\newtheorem{corollary}[theorem]{Corollary}
\newtheorem{claim}[theorem]{Claim}
\newtheorem{fact}[theorem]{Fact}

\newtheorem{remk}[theorem]{Remark}
\newtheorem{exmp}[theorem]{Example}


\def\FullBox{\hbox{\vrule width 8pt height 8pt depth 0pt}}

\def\qed{\ifmmode\qquad\FullBox\else{\unskip\nobreak\hfil
\penalty50\hskip1em\null\nobreak\hfil\FullBox
\parfillskip=0pt\finalhyphendemerits=0\endgraf}\fi}

\def\qedsketch{\ifmmode\Box\else{\unskip\nobreak\hfil
\penalty50\hskip1em\null\nobreak\hfil$\Box$
\parfillskip=0pt\finalhyphendemerits=0\endgraf}\fi}

\newenvironment{proof}{\begin{trivlist} \item {\bf Proof:~~}}
  {\qed\end{trivlist}}

\newcommand{\etal}{{\it et~al.\ }}
\newcommand{\ie} {{\it i.e.,\ }}

\newfont{\inhead}{eufm10 scaled\magstep1}

\def\B{\{0,1\}}

\renewcommand{\Pr}{\mathop{\mathbb P}\displaylimits}




\newcommand{\eps}{\varepsilon}

\bibliographystyle{alpha}

\newcommand{\dd}{{n}}

\newcommand{\mcal}[1]{\mathcal{#1}}


\ifnum\draft=1
\newcommand{\anote}[1]{[{\bf Anindya's Note: #1]}}
\else
\newcommand{\anote}[1]{}
\fi
\begin{document}

\title{Lower bounds in differential privacy \ifnum\draft=1 \\ 
\textsc{\small Working Draft: Please Do Not Distribute}\fi}
\author{Anindya De\thanks{Computer Science Division, University of California, Berkeley.  Most of this work was done while the author was a summer intern at Microsoft Research, Silicon Valley. This material is based upon work supported by Luca Trevisan's National Science Foundation grant No.  CCF-1017403. \tt{anindya@cs.berkeley.edu} \tt} }
%
\begin{titlepage}

\maketitle

\begin{abstract}
This is a paper about private data analysis, in which a trusted curator
holding a confidential database responds  to real vector-valued queries.
A common approach to ensuring privacy for the database elements is to
add appropriately generated random noise to the answers, releasing only
these {\em noisy} responses.
A line of study initiated in~\cite{DN03} examines the
amount of distortion needed to prevent privacy violations of various
kinds.
The results in the literature vary according to several parameters,
including the size of the database, the size of the universe from
which data elements are drawn, the ``amount'' of privacy desired, and
for the purposes of the current work, the arity of the query.
In this paper we sharpen and unify these bounds.  
Our foremost result combines the techniques of Hardt and Talwar~\cite{HT10}
and McGregor {\it et al.}~\cite{MMPRTV10} to obtain linear lower
bounds on distortion when providing differential privacy for 
a (contrived) class of low-sensitivity queries. (A query has low sensitivity
if the data of a single individual has small effect on the answer.)
Several structural results follow as immediate corollaries:
\begin{itemize}
\item
We separate so-called {\em counting} queries from arbitrary {\em
  low-sensitivity} queries, proving the latter requires more noise, or
distortion, than does the former;
\item
We separate $(\eps,0)$-differential privacy from 
its well-studied relaxation $(\eps,\delta)$-differential privacy, 
even when
$\delta \in 2^{-o(n)}$ is negligible in the size $n$ of the database,
proving the latter requires less distortion than the former;
\item
We demonstrate that $(\eps,\delta)$-differential privacy is much weaker
than $(\eps,0)$-differential privacy  in terms of mutual information of the transcript of the mechanism with the database,
even when
$\delta \in 2^{-o(n)}$ is negligible in the size $n$ of the database.
\end{itemize}
We also simplify the lower bounds on noise for counting queries in \cite{HT10} and also make them unconditional. Further, we use a characterization of $(\epsilon,\delta)$ differential privacy from \cite{MMPRTV10} to obtain lower bounds on the distortion needed to ensure
$(\eps,\delta)$-differential privacy for $\epsilon,\delta > 0$.

 Next, we revisit the LP decoding argument of \cite{DMT07} and combine it with recent results of Rudelson \cite{Rud11} to show that for some specific $\eta>0$, if the $\ell$-way marginals are released such that at least $1-\epsilon$ fraction of the entries have $o(\sqrt{n})$ noise, then a very minimal notion of privacy called attribute privacy is violated.  This improves on a recent result of Kasiviswanathan \etal \cite{KRSU10} where the same conclusion was shown assuming that all the entries have $o(\sqrt{n})$ noise. 
 
 Finally, we extend the original lower bound of \cite{DN03}  to prevent blatant non-privacy to the case when the universe size is smaller than the size of the database. As we show, the lower bound on the noise required to prevent blatant non-privacy becomes larger as the size of the universe decreases. \end{abstract}
 
\thispagestyle{empty}
\vfill
\noindent \textbf{Keywords:} Differential privacy, LP decoding, Rademacher sums

\end{titlepage}
\newpage
\section{Introduction}
This is a paper about private data analysis, in which a trusted
curator holding a confidential database responds to real vector-valued
queries.  Specifically, we focus
on the practice of ensuring privacy for the database
elements by adding appropriately generated random noise to the
answers, releasing only these {\em noisy} responses.  A line of study
initiated by Dinur and Nissim examines the amount of distortion needed
to prevent privacy violations of various kinds~\cite{DN03}.  Dinur and
Nissim did not have a definition of privacy; rather, they had a notion
that has come to be called {\em blatant non-privacy}; the modest goal,
then, was to add enough distortion to avert blatant non-privacy.
Since that time, the community has raised the bar by definining (and
achieving) powerful and comprehensive notions of
privacy~\cite{DN03,DMNS06,DKMMN06}, and the goal has been to
preserve $(\eps,0)$-differential privacy and
its relaxation, $(\eps,\delta)$-differential privacy.
A final goal considered herein, {\em attribute privacy}, 
has a more complicated description, but may be thought of as
preventing blatant non-privacy for a single data attribute~\cite{KRSU10}
in the presence of a certain kind of contingency table query.

The results in the literature vary according to several parameters,
including the number $n$ of elements in the database, 
the size $d$ of the universe from
which data elements are drawn, the ``amount'' and type
of privacy desired, and
for the purposes of the current work, the arity $k$ of the query.
In this paper we strengthen and unify these bounds.  

As corollaries of our work, we obtain several ``structural'' results
regarding different types of privacy guarantees:
\begin{itemize}
\item
We separate so-called {\em counting} queries from arbitrary {\em
  low-sensitivity} queries, proving the latter requires more noise, or
distortion, than does the former;
\item
We separate $(\eps,0)$-differential privacy from 
its well-studied relaxation $(\eps,\delta)$-differential privacy, 
even when
$\delta \in 2^{-o(n)}$ is negligible in the size $n$ of the database,
proving the latter requires less distortion than the former;
\item
We demonstrate that $(\eps,\delta)$-differential privacy is much weaker
than $(\eps,0)$-differential privacy  in terms of mutual information of the transcript of the mechanism with the database
even when
$\delta \in 2^{-o(n)}$ is negligible in the size $n$ of the database.
\end{itemize}
We also simplify the lower bounds on noise for counting queries in \cite{HT10} and also make them unconditional removing a technical assumption on the mechanism present in their paper.  Next, we use a characterization of $(\epsilon,\delta)$ differential privacy from \cite{MMPRTV10} to obtain lower bounds on the distortion needed to ensure
$(\eps,\delta)$-differential privacy for $\epsilon,\delta > 0$. We remark that \cite{KRSU10} also obtain  quantitatively similar lower bounds on the distortion required to maintain $(\epsilon,\delta)$ differential privacy for the class of $\ell$-way marginals though their proof technique is very different and arguably much more complicated. 

After this, we use results of Rudelson \cite{Rud11} and combine it with LP decoding to show that  attribute privacy is violated if $\ell$-way marginals are released with at least $1-\eta$ fraction of these marginals are released with $o(\sqrt{n})$ noise for some $\eta>0$. The results and the technique  in \cite{KRSU10} required $\eta=0$ making our results more powerful. Finally, we extend the results of \cite{DN03} to the case of small universe size achieving stronger lower bounds to prevent blatant non-privacy.

To describe our results even at a high level we must outline
the privacy-preserving database model, the notion of {\em distortion}
or {\em noise} that may be employed in order to preserve privacy,
and the meaning of the goals of the adversary: blatant non-privacy,
violation of $(\eps,0)$-differential privacy, violation of $(\eps,\delta)$-
differential privacy, and attribute non-privacy.

Typically, the curator of a database receives questions to which it responds with potentially noisy answers.
There are two possible settings here. One is that the queries are received by the curator one at a time.
The other situation is that all the queries are received by the curator at once and it then publishes (noisy) answers to all 
of them at once. The former is called the interactive setting and the latter is called the non-interactive setting. All our lower bounds are in
the non-interactive setting making them applicable to the interactive setting as well. 

We now formally describe a database and a query : A database $X$ is an element of $(\mathbb{Z}^+)^d$ . Here $d$ is called the universe size and intuitively refers to the number of types of elements present in the database. Also, for a database $X$, $n=\sum_{i=1}^d X_i $ is defined as the size of the database and refers to the number of elements in the database. Note that we are representing databases as histograms. 
A query (of arity $k$) is a map $F : (\mathbb{Z}^+)^d \rightarrow \mathbb{R}^k$ such that $\forall i \in [k]$, $\forall x,y \in (\mathbb{Z}^+)^d$, 
$|F(x+y)_i-F(x)_i| \le 1$ if $\Vert y \Vert_1 =1$. In other words, every coordinate of the map $F$ is $1$-Lipschitz. 
We say $F$ is a counting query if $F$ is a linear map. 
The meaning of $d,k,n$ throughout the paper shall be the same as above unless mentioned otherwise. 

We now formally introduce the definition of mechanism and privacy. 
\begin{definition}
Let $\mathcal{F}$ be a family of queries such that $\forall F \in \mcal{F}$, $F:(\mathbb{Z^+})^d \rightarrow \mathbb{R}^k$.  Then, a mechanism $M:(\mathbb{Z^+})^d \times \mcal{F} \rightarrow \mu(\mathbb{R}^k)$ where  $\mu(\mathbb{R}^k)$ 
is simply the set of probability distributions over $\mathbb{R}^k$. On being given a query $F \in \mcal{F}$ and a database $x \in (\mathbb{Z^+})^d$, the curator samples  $z$ from  the probability distribution $M(x,F)$ and returns $z$. 
\end{definition} 
We next state the definition of $\epsilon$-differential privacy (introduced by Dwork \etal in \cite{DMNS06}) and $(\epsilon,\delta)$-differential privacy (introduced by Dwork \etal in \cite{DKMMN06}). 
\begin{definition}
For a family of queries $\mcal{F}$, a mechanism $M:(\mathbb{Z^+})^d \times \mcal{F} \rightarrow \mu(\mathbb{R}^k)$ is said to be $\epsilon$-differentially private if for every $x, y \in (\mathbb{Z^+})^d$ such that $\Vert x-y\Vert_1 \le 1$, every measurable set $S \subseteq \mathbb{R}^k$ and $\forall F \in \mcal{F}$, the following holds : Let $M(x,F) = M_{x,F}$ and $M(y,F) = M_{y,F}$  and for a probability distribution $\Gamma$, let $\Gamma(S)$ denote the probability of set $S$ under $\Gamma$. Then, 
$$
 2^{-\epsilon} \le \frac{M_{x,F}(S) }{ M_{y,F}(S)} \le 2^{\epsilon}
$$
The mechanism is said to be $(\epsilon,\delta)$-differentially private if 
$$
 2^{-\epsilon}\cdot M_{y,F}(S) - \delta \le M_{x,F}(S)   \le 2^{\epsilon} \cdot M_{y,F}(S) +\delta
$$
Typically, $\delta$ is set to be negligible in $n,k$.  
\end{definition}
We remark that we do not define the notion of noise very precisely here as the notion of noise depends on the context. However, in the context of differential privacy, we use the following definition of noise. 
\begin{definition}
For a family of queries $\mcal{F}$, a mechanism $M:(\mathbb{Z^+})^d \times \mcal{F} \rightarrow \mu(\mathbb{R}^k)$ is said to add noise (at  most) $\eta$ if with high probability (say $0.99$) over the randomness of $M$, $\Vert M(x,F) - F(x) \Vert_{\infty} \le \eta$. 
\end{definition}
While differential privacy is a very strong notion of privacy, sometimes one can show that even very modest definitions of privacy get violated. One such notion is that of blatant non-privacy. We say that 
a mechanism $M$ for answering $F$ over databases of size $n$ and universe size $d$ is blatantly non-private, if there is an attack $A$ such that w.h.p. over the answer $y$ returned by the mechanism $M$, $A(y)$ differs from the database only at $o(1)$ fraction of the places. Yet another very weak notion of privacy that is interesting to us is that of attribute non-privacy. The formal definition follows : 
\begin{definition}
For a query $F \in \mathcal{F}$, a mechanism $M : (\B^d)^n \times \mcal{F} \rightarrow \mathbb{R}^k$ is said to be attribute non-private if there exists $Y \in (\B^{d-1})^n$ and an algorithm $A$ such that for every $x \in \B^n$, 
$$
\Pr_{z \in M(Y\circ x,F)} [A(z) = x' : \Vert x-x'\Vert _1 = o(\Vert x\Vert_1)] \ge 1/10
$$
where $Y \circ x$ simply denotes the obvious concatenation of $Y$ and $x$. $A$ need not be computationally efficient and the constant $1/10$ is arbitrary and can be replaced by any positive constant.  \end{definition} 

We show the following results : 
\begin{enumerate}
\item Combining techniques from \cite{HT10} and \cite{MMPRTV10}, we obtain tight lower bounds on the noise  for arbitrary (non-counting) low-sensitivity queries for any $(\eps,0)$-differentially
private mechanism. Given positive results of Blum, Ligett, and Roth~\cite{BLR08}, this separates non-counting queries from counting queries, proving that
the former require more distortion than the latter for maintaining
differential privacy. Also, given the positive results of \cite{DKMMN06} for arbitrary low-sensitivity queries,
this separates $(\eps,\delta)$-differential privacy from $(\eps,0)$-differential
privacy, where $\delta = \delta(n,k)$ denotes a function negligible in its argument. We also use this technique to show that the guarantees in terms of information content is drastically weaker for an $(\epsilon,\delta)$ differentially private protocol as compared to an $\epsilon$-differentially private protocol. Our technique also simplifies the \emph{volume-based} lower bounds on noise for counting queries in \cite{HT10}. In addition, we also make the lower bounds unconditional. The lower bound in \cite{HT10} required the mechanism to be defined on ``fractional'' databases \ie on $(\mathbb{R}^+)^d$ as opposed to just $(\mathbb{Z}^+)^d$ while we do not have any such restrictions. 
\item
We give tight lower bounds on noise for ensuring
$(\eps,\delta)$-differential privacy for $\delta > 0$.  
This proof relies on a lemma due to~\cite{MMPRTV10} showing that
$(\eps,\delta)$-differentially private mechanisms yield a certain kind
of unpredictable source.  On the other hand, any mechanism that is
blatantly non-private cannot yield an unpredictable source.
Thus, if the noise is insufficient to prevent blatant non-privacy then
it cannot provide $(\eps,\delta)$-differential privacy. We subsequently use the lower bounds of \cite{DN03, DMT07} for preventing blatant non-privacy to get lower bounds on the distortion for $(\epsilon,\delta)$ differential privacy. 
\item
We revisit the LP decoding attack of Dwork, McSherry, and
Talwar~\cite{DMT07}, observing that any linear query matrix yielding a
Euclidean section suffices for the attack.  
The LP decoding attack succeeds
even if a certain constant fraction of the responses have wild noise.
Armed with the connection to Euclidean sections, and a recent 
result of Rudelson~\cite{Rud11} bounding from below the least singular
value of the Hadamard product of certain i.i.d.\, matrices,
we qualitatively strengthen a lower bound
of Kasiviswanathan, Rudelson, Smith, and Ullman~\cite{KRSU10} on the noise
needed to avert attribute non-privacy in $\ell$-way marginals release
by making the attack resilient to a constant fraction of wild responses.
\item
We show new lower bounds for blatant non-privacy for the case when the size of the universe is smaller than the size of the database. In particular, we show that there is a counting query such that if the distortion added on at least $1/2+\eta$ fraction of the answers is bounded by $o(n/\sqrt{d})$ (for some $\eta>0$), then there is an attack which recovers a database different from the original database by $o(n)$. Our analysis makes use of a result on large deviation of Rademacher sums. 
\end{enumerate}
\section{Lower bound by volume arguments}\label{sec:diffpriv}
We now recall the volume based argument of Hardt and Talwar \cite{HT10} to show lower bounds on the noise required for $\epsilon$ differential privacy. 
\begin{theorem}\label{thm:thm1}
Assume $x_1, \ldots, x_{2^s} \in \mathbb{(Z^{+})}^d$ such that $\forall i$, $\Vert x_i \Vert_1 \leq n$ and for $i \not =j$, $\Vert x_i - x_j\Vert_1 \le \Delta$. Further, let $F : {(\mathbb{Z}^{+})}^d \rightarrow \mathbb{R}^k$ such that for any $i \not = j$, $ \Vert F(x_i)  - F(x_j) \Vert_{\infty} \geq \eta$. If $\Delta \leq (s-1)/\epsilon$, then any mechanism which is $\epsilon$-differentially private for the query $F$ on databases of size $n$  must add  noise $\eta/2$. 
\end{theorem}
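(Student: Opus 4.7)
The plan is to run the standard packing (volume) argument. Suppose for contradiction that some $\epsilon$-differentially private mechanism $M$ for $F$ adds noise strictly less than $\eta/2$. For each $i \in [2^s]$ define the $\ell_\infty$-ball
\[
S_i \;=\; \bigl\{ z \in \mathbb{R}^k \;:\; \|z - F(x_i)\|_\infty < \eta/2 \bigr\}.
\]
By the noise hypothesis, $\Pr_{z \sim M(x_i,F)}[z \in S_i] \ge 0.99$ for every $i$. Because $\|F(x_i) - F(x_j)\|_\infty \ge \eta$ for $i \neq j$, these balls are pairwise disjoint.

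Next I would invoke the standard ``group privacy'' consequence of $\epsilon$-differential privacy: iterating the one-step definition $\Delta$ times along a path of Hamming-$1$ neighbors between $x_1$ and $x_i$ (which exists since $\|x_1 - x_i\|_1 \le \Delta$) yields
\[
M_{x_1,F}(S_i) \;\ge\; 2^{-\epsilon \Delta}\, M_{x_i,F}(S_i) \;\ge\; 0.99 \cdot 2^{-\epsilon\Delta}.
\]
Summing this bound over the $2^s$ disjoint sets $S_i$, which must fit inside the total probability mass of the single distribution $M_{x_1,F}$, gives
\[
1 \;\ge\; \sum_{i=1}^{2^s} M_{x_1,F}(S_i) \;\ge\; 2^s \cdot 0.99 \cdot 2^{-\epsilon \Delta}.
\]
Rearranging, $\epsilon \Delta \ge s + \log_2(0.99) > s - 1$, which contradicts the hypothesis $\Delta \le (s-1)/\epsilon$. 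Hence the noise must be at least $\eta/2$.

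There is no real obstacle here: the argument is essentially three moves (packing, group privacy, summing over disjoint events). The only modest care points are (i) making sure $S_i$ is defined using the $\ell_\infty$ norm so that the separation assumption on $F(x_i)$'s yields disjointness, and (ii) handling the discrete $\ell_1$ path in group privacy, which is immediate since each $x_i \in (\mathbb{Z}^+)^d$ and one can interpolate by unit $\ell_1$ moves. The slack $(s-1)$ versus $s$ in the hypothesis is exactly what absorbs the $\log_2(0.99)$ loss coming from the $0.99$ confidence bound on the noise.
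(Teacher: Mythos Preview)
Your proof is correct and uses essentially the same packing/volume argument as the paper: disjoint $\ell_\infty$-balls around the $F(x_i)$'s combined with group privacy over the $\Delta$-length $\ell_1$ path. The only cosmetic difference is that you sum the probabilities of all $2^s$ balls under the single reference distribution $M(x_1,F)$, whereas the paper instead fixes $x_i$, pigeonholes to find one ball $S_j$ with $M_{x_i,F}(S_j)\le 2^{-s}$, and compares it to $M_{x_j,F}(S_j)\ge 1/2$; these are equivalent ways to finish the same argument.
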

While the line of reasoning in the proof is same as that of \cite{HT10}, we do the proof here as  the argument in \cite{HT10} works only for counting queries \ie when $F$ is a linear transformation. On the other hand, the statement and proof of our result works for any query $F$. 
\begin{proof}
Consider the $\ell_{\infty}$ balls of radius $\eta/2$ around each of the $F(x_i)$. By the hypothesis, these balls are disjoint. Now assume, any mechanism $M$ which adds noise $\eta/2$  and consider any $x_i$. Then, because all the balls are disjoint, we have that there is some $j \not = i$ such that if $S$ is the $\ell_{\infty}$ ball of radius $\eta/2$ around $F(x_j)$, then 
$$
\Pr_{z \in M(x_i,F)} [ z \in  S ] \le 2^{-s}
$$
However, we can also say  that because the noise added by the mechanism $M$ is at most $\eta$, 
$$
\Pr_{z \in M(x_j,F)} [ z \in  S ] \ge 1/2
$$
Also, because the mechanism $M$ is $\epsilon$-differentially private and $\Vert x_i - x_j \Vert_1 \le \Delta$, then 
$$
\frac{\Pr_{z \in M(x_i,F)} [ z \in  S ] }{\Pr_{z \in M(x_j,F)} [ z \in  S ] } \geq 2^{- \epsilon \cdot \Delta}
$$
This leads to a contradiction if $\Delta \le (s-1)/\epsilon$ thus proving the assertion. 
\end{proof}
\subsection{Linear lower bound for arbitrary queries}
In this subsection, we prove the following theorem. 
\begin{theorem}\label{thm:main}
For any $k , d, n  \in \mathbb{N}$ and $1/40\geq \epsilon>0$, where $ n \geq \min\{k/\epsilon, d/\epsilon\}$, there is a query $F: (\mathbb{Z}^+)^d \rightarrow \mathbb{R}^k$ such that any mechanism $M$ which is $\epsilon$-differentially private adds noise $\Omega(\min \{ d/\epsilon, k/\epsilon \})$.

If $\epsilon>1$,  then there is a query $F: (\mathbb{Z}^+)^d \rightarrow \mathbb{R}^k$ such that any mechanism $M$ which is $\epsilon$-differentially private adds noise $\Omega(\min \{ d/(\epsilon \cdot 2^{5\epsilon}), k/\epsilon \})$ as long as $n \geq \min\{k/\epsilon, d/(\epsilon \cdot 2^{5\epsilon})\}$
\end{theorem}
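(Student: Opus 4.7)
The plan is to apply Theorem~\ref{thm:thm1} with parameters $s = \Theta(m)$, $\Delta = O(m/\eps)$, $\eta = \Omega(m/\eps)$, where $m := \min\{d,k\}$, embedding the action into the first $m$ coordinates of the universe and the first $m$ query coordinates (the rest are padded with $0$). The role of the (non-counting) query will be to saturate the coordinate-wise $1$-Lipschitz budget between $\ell_1$ and $\ell_\infty$, something counting queries cannot do (and this is precisely why the volume-based bounds of \cite{HT10} for counting queries do not exceed $\widetilde{O}(\sqrt{m}/\eps)$-type rates). First I would fix a binary code $C \subseteq \{0,1\}^m$ of size $|C| = 2^{cm}$ for a small constant $c > 0$, with all pairwise Hamming distances in $[\alpha m, \beta m]$ for constants $0 < \alpha < \beta$; a uniformly random code of this size has the desired property with high probability by a Chernoff bound together with a union bound over the $\binom{|C|}{2}$ pairs (making $c$ small concentrates $\alpha$ and $\beta$ near $1/2$).

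Next, let $t := \lfloor \gamma/\eps \rfloor$ for a constant $\gamma > 0$ chosen so that $\gamma\beta < c$; the hypothesis $\eps \le 1/40$ guarantees $t \ge 1$. Define databases $x_v := t \cdot (v, 0^{d-m}) \in (\Z^+)^d$ for $v \in C$; then $\|x_v\|_1 \le tm \le m/\eps \le n$, and the pairwise $\ell_1$-diameter is at most $\Delta := t\beta m$. For the query, set $\eta := t\alpha m$ and prescribe $F_j(x_v) := \eta\, v_j$ for $j \in [m]$, and $F_j(x_v) := 0$ for $j > m$. On the set $\{x_v : v \in C\}$ this prescription is coordinate-wise $1$-Lipschitz with respect to the $\ell_1$ metric: for distinct $v, v' \in C$, $|F_j(x_v) - F_j(x_{v'})| \le \eta = t\alpha m \le t|v \oplus v'| = \|x_v - x_{v'}\|_1$, using $|v \oplus v'| \ge \alpha m$. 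By McShane's extension theorem, each coordinate $F_j$ extends to a $1$-Lipschitz function on all of $(\Z^+)^d$, producing the query $F$.

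Any two distinct $v, v' \in C$ differ in some coordinate $j \in [m]$, so $\|F(x_v) - F(x_{v'})\|_\infty \ge \eta$. Setting $s := cm$, the hypothesis $\Delta \le (s-1)/\eps$ of Theorem~\ref{thm:thm1} reduces (for $m$ large) to $\gamma\beta < c$, which holds by construction; Theorem~\ref{thm:thm1} then forces noise at least $\eta/2 = t\alpha m/2 = \Omega(m/\eps)$.

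The main technical obstacle is the simultaneous control of the code's minimum distance (needed for $F$ to be genuinely $1$-Lipschitz and to separate all pairs of databases) and its maximum distance (which bounds $\Delta$, and hence via Theorem~\ref{thm:thm1} limits how large $s$ can be); a random code handles both demands in one stroke. For the $\eps > 1$ case the same skeleton applies, but integrality forces $t \ge 1$, so one must use a denser code in $\{0,1\}^m$ of size $2^{\Theta(\eps m)}$ whose minimum distance must shrink in accordance with the Plotkin bound; this trade-off is what produces the $2^{5\eps}$ factor in the second part of the stated theorem.
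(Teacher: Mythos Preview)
Your argument for the regime $\eps \le 1/40$ is correct and is a genuinely different (and arguably cleaner) route than the paper's. The paper builds the query explicitly: it first defines a $1$-Lipschitz ``bump'' map $\mathcal{L}:(\Z^+)^d \to \R^{2^s}$ whose $i$-th coordinate is $\max\{n'/30 - \|z - x_i\|_1,\,0\}$ (the $x_i$ being sufficiently spread out that at most one bump is nonzero at any point), and then compresses to $\R^k$ via random $\pm 1$ sign vectors. Your construction bypasses both the bump map and the random compression by prescribing $F$ only on the $2^s$ databases and invoking McShane. What you lose is the $\ell_2$ separation: the paper obtains $\|F(x_h) - F(x_j)\|_2 \ge \Omega(n'\sqrt{k})$, not merely the $\ell_\infty$ bound, and this stronger conclusion is used downstream (e.g., in the mutual-information separation of Section~\ref{sec:infoloss}). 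For Theorem~\ref{thm:main} as stated, your $\ell_\infty$ bound suffices.

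Your sketch for $\eps > 1$, however, does not work. You write that one should use ``a denser code in $\{0,1\}^m$ of size $2^{\Theta(\eps m)}$'', but with $m = \min\{d,k\}$ and $\eps > 1$ this asks for more than $2^m$ codewords in $\{0,1\}^m$. More fundamentally, once integrality pins $t = 1$, the constraint $\Delta \le (s-1)/\eps$ together with $\Delta \ge \alpha m$ (the code's minimum distance) forces $s \ge \eps\alpha m$; to get noise $\Omega(m/\eps)$ you need $\alpha = \Omega(1/\eps)$, hence $s = \Omega(m)$. But a binary code in $\{0,1\}^m$ with maximum pairwise distance $\le (s-1)/\eps = O(m/\eps)$ lies inside a Hamming ball of radius $O(m/\eps)$ and therefore has at most $2^{m\,h(O(1/\eps))} = 2^{O(m\log(\eps)/\eps)}$ elements, far short of $2^{\Omega(m)}$. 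The paper resolves this by abandoning the first-$m$-coordinates restriction and placing the databases as low-weight vectors in $\{0,1\}^d$ with $d$ on the order of $2^{\Theta(\eps)}$ times the target, via a combinatorial-design (Nisan--Wigderson style) packing; this is precisely where the $2^{5\eps}$ factor originates. You should rework the $\eps > 1$ case along these lines rather than trying to stay inside $\{0,1\}^m$.
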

Before starting the proof, we make a couple of observations. First of all, note that the statement of the theorem does not give any lower bound for $1\ge \epsilon > 1/40$. However, any mechanism which is $\epsilon$-differentially private for $\epsilon$ in the aforementioned range is also $\epsilon'$-differentially private for $\epsilon'=10/9$. Hence, the noise lower bounds for $\epsilon'$-differential privacy for $\epsilon'=10/9$ are also applicable for the range of $1\ge \epsilon > 1/40$. It is easy to see that up to constant factors, the lower bounds with $\epsilon'=10/9$ are optimal for $\epsilon$ in the aforementioned range. 

Secondly, we note that it is enough to add noise $O(k/\epsilon)$ to maintain $\epsilon$-differential privacy (using the Laplacian mechanism). Also, because the databases are of size $n$, it is enough to add noise $O(n)$ to maintain $\epsilon$-differential privacy for any $\epsilon \ge 0$. Thus, as long as $k = O(d)$, our lower bounds are tight up to constant factors. Next, we do the proof of Theorem~\ref{thm:main}.

\begin{proof}
Our proof strategy is to construct a set of databases and a query which meets the conditions stated in the hypothesis of Theorem~\ref{thm:thm1} and then get the desired lower bound on the noise. We first deal with the case when $0 <\epsilon<1/40$.  Let $\ell = \min\{d,k \}$.  We can now use Claim~\ref{clm:clm1}  to construct $2^{s}$ databases $x_1, \ldots , x_{2^s}$  (for $s=\ell/400$) such that $x_i  \in (\mathbb{Z}^{+})^d$ with the property that  $\forall i \not = j$, $\Vert x_i  - x_j \Vert_1 \ge n'/10 $ and
$\Vert x_i\Vert_1 \leq n'$ where $n'=\ell/(1280\epsilon)$ (Application of Claim~\ref{clm:clm1} uses $d'=\ell/320$). Note that our databases are of size bounded by $n' \le n$. We now describe a mapping $\mcal{L}:   (\mathbb{Z}^{+})^d \rightarrow \mathbb{R}^{2^s}$ which is related to a construction in \cite{MMPRTV10}. The mapping is as follows : 
\begin{itemize}
\item For every $x_i$, there is a coordinate $i$ in the mapping. 
\item The $i^{th}$ coordinate of $\mcal{L}(z)$ is $\max \{ n'/30- \Vert x_i -z \Vert_1, 0 \}$. 
\end{itemize}
\begin{claim}
The map $\mcal{L}$ is $1$-Lipschitz  \ie  if $\Vert z_1 - z_2 \Vert _1 =1$, then $\Vert \mcal{L}(z_1) - \mcal{L} (z_2) \Vert_1 \le 1$. 
\end{claim}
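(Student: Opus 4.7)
The plan is to show $\mathcal{L}$ is $1$-Lipschitz by exploiting the fact that the databases $x_1,\ldots,x_{2^s}$ are well-separated (pairwise $\ell_1$-distance at least $n'/10$) while the ``bump'' around each $x_i$ in the definition of $\mathcal{L}$ has radius only $n'/30$. The key observation is a \emph{disjoint support} property: for any $z\in(\mathbb{Z}^+)^d$, at most one coordinate $\mathcal{L}(z)_i$ is nonzero. Indeed, if $\mathcal{L}(z)_i>0$ and $\mathcal{L}(z)_j>0$ with $i\ne j$, then $\|x_i-z\|_1<n'/30$ and $\|x_j-z\|_1<n'/30$, so by the triangle inequality $\|x_i-x_j\|_1<n'/15<n'/10$, contradicting the separation property guaranteed by Claim~\ref{clm:clm1}.

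Given this, I would fix $z_1,z_2$ with $\|z_1-z_2\|_1=1$ and split into cases depending on the (unique) supports. If $\mathcal{L}(z_1)$ and $\mathcal{L}(z_2)$ are either both zero or both supported on the same coordinate $i$, then only coordinate $i$ contributes, and since $z\mapsto\max\{n'/30-\|x_i-z\|_1,0\}$ is itself $1$-Lipschitz (as a composition of $1$-Lipschitz functions), we immediately get $\|\mathcal{L}(z_1)-\mathcal{L}(z_2)\|_1\le\|z_1-z_2\|_1=1$. The remaining case is when $\mathcal{L}(z_1)$ is supported at $i$ and $\mathcal{L}(z_2)$ at $j\ne i$, so that $\mathcal{L}(z_1)_j=\mathcal{L}(z_2)_i=0$ by the disjoint support property. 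Then
\[
\|\mathcal{L}(z_1)-\mathcal{L}(z_2)\|_1 \;=\; \mathcal{L}(z_1)_i+\mathcal{L}(z_2)_j \;=\; \tfrac{n'}{15}-\|x_i-z_1\|_1-\|x_j-z_2\|_1,
\]
and one more application of the triangle inequality, $\|x_i-z_1\|_1+\|x_j-z_2\|_1\ge\|x_i-x_j\|_1-\|z_1-z_2\|_1\ge n'/10-1$, yields the bound $1-n'/30\le 1$, as required.

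The only potential obstacle is checking that the constants $n'/30$ and $n'/10$ are chosen compatibly: the ratio must be small enough that the bumps are \emph{robustly} disjoint under a unit perturbation, which is precisely what the case analysis exploits. Since the separation is three times the bump radius, a unit shift of $z$ cannot push us from one bump into another nontrivially, and all cross-terms absorb into the constant $1$. No new properties of the $x_i$ beyond the size and separation guarantees from Claim~\ref{clm:clm1} are needed.
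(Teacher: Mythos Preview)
Your proof is correct and follows essentially the same approach as the paper: both rely on the separation $\|x_i-x_j\|_1\ge n'/10$ being large compared to the bump radius $n'/30$, together with the $1$-Lipschitz property of each coordinate map. The paper is slightly terser: it observes directly that for $\|z_1-z_2\|_1\le 1$ the \emph{union} of the supports of $\mathcal{L}(z_1)$ and $\mathcal{L}(z_2)$ has at most one element, so your third case (supports at distinct $i\ne j$) never occurs---indeed, your own bound $1-n'/30$ there is negative, confirming the case is vacuous.
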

\begin{proof}
We observe that for any $z_1, z_2$ such that $\Vert z_1 - z_2\Vert  \le 1$, if $A$ denotes the set of coordinates where at least one of $\mcal{L}(z_1)$ or $\mcal{L}(z_2)$ are non-zero, then $A$ is either empty or is a singleton set. Given this, the statement in the claim is obvious, since the mapping corresponding to any particular coordinate is clearly $1$-Lipschitz. 
\end{proof}
We now describe the queries. Corresponding to any $r \in \{ -1,1 \}^{2^s}$,  we define $f_r: (\mathbb{Z}^+)^d \rightarrow \mathbb{R}$, as 
$$
f_r(x) = \sum_{i=1}^d \mathcal{L}(x)_i \cdot r_ i 
$$
Now, we define a random map $F : (\mathbb{Z}^+)^d \rightarrow \mathbb{R}^k$ as follows. Pick $r_1, \ldots, r_k \in \{ -1,1 \}^{2^s}$ independently  and uniformly at random  and define $F$ as follows  : 
$$
F(x) = (f_{r_1}(x), \ldots, f_{r_k}(x))
$$
Now consider any $x_h,x_j \in S$ such that $h \not =j$. Because of  the way $\mathcal{L}$ is defined, it is clear that for any $r_i$, $$\Pr_{r_i} [|f_{r_i}(x_h) - f_{r_i}(x_j)| \geq n'/15] \geq 1/2$$
A basic application of the Chernoff bound implies that 
$$\Pr_{r_1,\ldots,r_k} [\textrm{For at least }1/10 \textrm{ of the }r_i\textrm{'s},\quad|f_{r_i}(x_h) - f_{r_i}(x_j)| \geq n'/15] \geq 1 - 2^{-k/30}$$
Now, note that the total number of pairs $(x_i,x_j)$ of databases such that $x_i, x_j \in S$ is at most $2^{2s} \leq 2^{\ell/200} \leq 2^{k/200}$. This implies (via a union bound) 
$$
\Pr_{r_1,\ldots,r_k} [ \forall h \not = j, \quad \textrm{For at least }1/10 \textrm{ of the }r_i\textrm{'s},\quad|f_{r_i}(x_h) - f_{r_i}(x_j)| \geq n'/15 ] \geq 1 -2^{-k/40}
$$
This implies that we can fix $r_1,\ldots, r_k$ such that the following is true. 
$$
\forall h \not = j, \quad \textrm{For at least }1/10 \textrm{ of the }r_i\textrm{'s},\quad|f_{r_i}(x_h) - f_{r_i}(x_j)| \geq n'/15$$
This implies that for any $x_h \not = x_j \in S$, $\Vert F(x_h) - F(x_j)\Vert_{\infty} \ge   n'/15$.  In fact, $\Vert F(x_h) - F(x_j)\Vert_{2} \ge   n'\sqrt{k}/150$ which is a much stronger assumption than what we require and is quantitatively similar to the results in \cite{HT10} where they consider $\ell_2$ noise as opposed to $\ell_{\infty}$ noise. 

We can now apply Theorem~\ref{thm:thm1} by putting $\Delta  =2n'$ and $s = \ell/400   > 3 \epsilon  n'$ and $\eta=n'/15$ and observe that $ \Delta \leq (s-1)/\epsilon$ thus proving the result.

We next deal with the case when $\epsilon>1$. This part of the proof differs from the case when $\epsilon<1$ only in the construction of $x_1, \ldots, x_{2^s}$. We also emphasize that had we not insisted on integral databases, our proof would have been identical to the first part. We construct the databases $x_1, \ldots , x_{2^s}$  using combinatorial designs. More precisely,  for some sufficiently large constant $C$,  let $\ell = \min\{d/(C \cdot 2^{5\epsilon}),k \}$.  We can now use Claim~\ref{clm:clm6}  to construct $2^{s}$ databases $x_1, \ldots , x_{2^s}$  (for $s=\ell/400$) such that $x_i  \in (\mathbb{Z}^{+})^d$ with the property that  $\forall i \not = j$, $\Vert x_i  - x_j \Vert_1 \ge n'/10 $ and
$\Vert x_i\Vert_1 \leq n'$ where $n'=\ell/(1280\epsilon)$ (using $d'=\ell/320$ in Claim~\ref{clm:clm6}). Again, we note here that the databases constructed are of size $n'$. 

From this point onwards, we define the map $\mathcal{L}$ and the query $F$ as we did in the proof of Theorem~\ref{thm:main} and the proof proceeds identically. In particular, we get a query $F:( \mathbb{Z}^+)^d \rightarrow \mathbb{R}^k$ such that for any $i \not =j$,  $\Vert F(x_i) - F(x_j) \Vert_2 \geq n' \sqrt{k}/150$. As before, we can now apply Theorem~\ref{thm:thm1} by putting $\Delta  =2n'$ and $s = \ell/100   > 3 \epsilon  n'$ and $\eta=n'/15$ and observe that $ \Delta \leq (s-1)/\epsilon$ thus proving the result 
\end{proof}

For the subsequent part of this paper, we only consider lower bounds on $\epsilon$-differential privacy for $0<\epsilon<1$ as opposed to $\epsilon>1$. This is because the privacy guarantees one gets becomes unmeaningful when $\epsilon$ is large. However, we do remark that the results can be carried in a straightforward way to the regime of $\epsilon>1$ using combinatorial designs (like we did for Theorem~\ref{thm:main}). 

\subsubsection*{Consequences of the linear lower bound}
We briefly describe the two consequences of the linear lower bound on the noise  proven in Theorem~\ref{thm:main}. The first is separation of counting queries from non-counting queries. While our separation gives quantitatively the same results as long as $d = k^{O(1)}$ and $n = \Theta(k/\epsilon)$, for simplicity, we consider the setting when $k = d $ and $n = k/\epsilon$.  In this case, Theorem~\ref{thm:main} shows existence of  a (non-counting) query such that maintaining $\epsilon$-differential privacy requires noise $\Omega(n)$. On the other hand, \cite{BLR08} had proven that for any counting query with the same setting of parameters, there is a mechanism which adds noise $\tilde{O}(n^{2/3})$ and maintains $\epsilon$-differential privacy. This shows that maintaining $\epsilon$-differential privacy inherently  requires more distortion in case of non-counting queries than counting queries. 

The next consequence is a separation of $(\epsilon,\delta)$ differential privacy from $(\epsilon,0)$ differential  privacy for $\delta =2^{-o(n)}$. We note that Hardt and Talwar \cite{HT10} had shown such a separation but that was only when $k =O(\log n)$ and $\delta = n^{-O(1)}$. Again, we use the setting of parameters when $k=d$ and $n=k/\epsilon$. The gaussian mechanism of \cite{DKMMN06} shows that to maintain $(\epsilon,\delta)$ differential privacy for any $k$ queries, it sufficies to add noise $O(\sqrt{ k \log(1/\delta) } /\epsilon) =o(n)$. However, Theorem~\ref{thm:main} shows that there is a query which requires adding noise $\Omega(n)$ to maintain $(\epsilon,0)$ differential privacy. 

The last consequence of our result is more indirect and is explained next. 
\subsection{Information loss in differentially private protocols}\label{sec:infoloss}
In \cite{MMPRTV10}, a connection was established between differentially private protocols and the notion of mutual information from information theory. 
In fact, as \cite{MMPRTV10} was dealing with 2-party protocols,  the connection was actually between differentially private protocols and that of information content \cite{BYJKS04,BBCR10} which is a symmetric variant of mutual information  useful in 2-party protocols. In that paper, it was shown that the information content (which simplifies to mutual information in our setting) between transcript of a $\epsilon$-differentially private mechanism and the database vector is bounded by $O(\epsilon n)$. Using the construction used in the previous subsection, we show that in case of $(\epsilon,\delta)$ differentially private protocols (for any $\delta= 2^{-o(n)}$), there is no non-trivial bound on the mutual information between the transcript of the mechanism and the database vector. Thus as far as information theoretic guarantees go, the situation is drastically different for pure differentially private protocols vis-a-vis approximately differentially private protocols. The contents of this subsection are a result of personal communication between the author and Salil  Vadhan \cite{DVa10}.

We first define the notion of mutual information (can be found in standard information theory textbooks).
\begin{definition}
Given two random variables $X$ and $Y$, their mutual information $I(X;Y)$ is defined as 
$$
I(X;Y) = H(X) + H(Y) - H(X,Y) = H(X) - H(X|Y) 
$$
where $H(X)$ denotes the Shannon entropy of $X$. 
\end{definition}
The next claim establishes an upper bound on the mutual information between transcript of a differentially private protocol and the database vector. 
\begin{claim}
Let $F:(\mathbb{Z}^+)^d \rightarrow \mathbb{R}^k$ be a query and  $M : (\mathbb{Z}^+)^d \rightarrow \mu(\mathbb{R}^k)$ be an $\epsilon$-differentially private protocol for answering $F$ for databases of size $n$. If $X$ is a distribution over the inputs in $(\mathbb{Z}^+)^d$, then $I(M(X);X) \le 3\epsilon n$. 
\end{claim}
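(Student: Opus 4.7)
The plan is to rewrite the mutual information in the standard way as an expected KL divergence and then bound the log-density ratio pointwise using the differential privacy guarantee across the entire support. Concretely, I would use
\[
I(M(X);X) \;=\; \E_{x\sim X}\bigl[D_{KL}\bigl(M(x,F)\,\big\|\,\bar M\bigr)\bigr],
\]
where $\bar M$ is the marginal distribution of $M(X,F)$, obtained by averaging $M(x,F)$ against $X$. This identity is the standard rewriting of $I(X;Z) = H(Z) - H(Z|X)$ and holds for continuous output distributions provided the KL divergence is interpreted via Radon–Nikodym derivatives; since $\epsilon$-DP gives pointwise bounded density ratios, all the relevant densities will be equivalent and this causes no measure-theoretic issue.

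The next step is the key quantitative bound. Any two databases $x,y$ with $\|x\|_1,\|y\|_1 \le n$ satisfy $\|x-y\|_1 \le 2n$, so iterating the $\epsilon$-differential privacy inequality along any path of length $\le 2n$ in the hypercube gives, for every measurable $S \subseteq \mathbb{R}^k$,
\[
M_{x,F}(S) \;\le\; 2^{2\epsilon n}\, M_{y,F}(S).
\]
Averaging over $y\sim X$ then yields $M_{x,F}(S) \le 2^{2\epsilon n}\,\bar M(S)$, i.e., the Radon–Nikodym derivative $dM_{x,F}/d\bar M$ is bounded by $2^{2\epsilon n}$ almost everywhere. Taking logarithms and integrating against $M_{x,F}$ gives the pointwise bound $D_{KL}(M(x,F)\,\|\,\bar M) \le 2\epsilon n$ (in bits), and hence $I(M(X);X) \le 2\epsilon n \le 3\epsilon n$ after taking the expectation over $x \sim X$.

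The bound of $3\epsilon n$ in the statement is therefore somewhat loose; the proof actually gives $2\epsilon n$, and the only place flexibility is spent is in the coarse triangle-inequality estimate $\|x-y\|_1 \le 2n$. The main conceptual obstacle is really a cosmetic one: making sure the argument is presented cleanly for mechanisms whose output is a continuous distribution, so that the pointwise density-ratio bound legitimately implies a bound on the KL divergence without summing over discrete outcomes. Once one appeals to the Radon–Nikodym derivative as above, there is nothing further to do. No auxiliary construction from the previous section is needed, though the contrast with Section~\ref{sec:infoloss}'s impossibility statement for $(\epsilon,\delta)$-DP (which uses the $2^s$ databases built in Theorem~\ref{thm:main}) is what makes this upper bound meaningful.
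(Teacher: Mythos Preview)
Your argument is correct. Writing $I(M(X);X)$ as an expected KL divergence to the mixture $\bar M$, then using group privacy over a path of length at most $2n$ in $(\mathbb{Z}^+)^d$ to bound $dM_{x,F}/d\bar M \le 2^{2\epsilon n}$, is a clean and self-contained proof; it even yields the sharper constant $2\epsilon n$. One cosmetic point: you say ``hypercube'' but the walk is in the integer lattice $(\mathbb{Z}^+)^d$; also, if one worries that the mechanism is only guaranteed to be DP on $\{z:\|z\|_1\le n\}$, you should note that the path can be chosen to first decrease coordinates and then increase them, so it never leaves that set.

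This is a genuinely different route from the paper's. The paper does not prove the claim directly at all: it observes that the support of $X$ sits inside the finite set $[n]^d$ and then invokes Proposition~7 of \cite{MMPRTV10} as a black box, specializing that two-party information-cost bound by making the second party's input a constant. The MMPRTV10 argument proceeds coordinate by coordinate via a chain-rule decomposition and bounds each per-element contribution by $O(\epsilon)$, which is where the constant $3$ comes from. Your approach trades that finer decomposition for a single global group-privacy step; what you lose in granularity you gain in simplicity, and for the present claim (where one only needs an $O(\epsilon n)$ bound for arbitrary input distributions) your argument is both shorter and yields a better constant.
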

\begin{proof}
We first note that since the databases are of size bounded by $n$, hence instead of assuming that $\mu$ is a distribution over the inputs $X \in (\mathbb{Z}^+)^d$, we can assume that $\mu$ is a distribution over the inputs $X \in [n]^d$
where $[n] =\{0,1,\ldots, n\}$.  Now, we can apply Proposition~$7$  from \cite{MMPRTV10}.  We note that the aforesaid proposition is in terms of information content for $2$-party protocols but we observe that we can simply make the second party's input as a constant and get that $I(M(X);X) \le 3\epsilon n$.
\end{proof}
Next, we state the following claim which says that for $(\epsilon,\delta)$ differentially private protocols, even for an exponentially small $\delta$, the mutual information between the transcript and the input can be as large as $n(1-\eta)$ for any value of $0<\epsilon,\eta<1$. In other words, an $(\epsilon,\delta)$ differentially private protocol does not imply any effective bound on the mutual information between the input and the transcript even as $\epsilon \rightarrow 0$ and $\delta$ is exponentially small. 

\begin{lemma}\label{lem:mutual}
For $n \in \mathbb{N}$ and $0<\epsilon,\eta<1$, there is a constant $C = C(\epsilon,\eta)>0$ and a distribution $X$ over $(\mathbb{Z}^+)^n$ with a support over databases of size $n$ and   a query $F: (\mathbb{Z}^+)^n \rightarrow \mathbb{R}^k$  and an $(\epsilon,\delta)$-differentially private protocol $M$ for answering $F$ such that $I(X; M(X))  \ge n(1-2\eta)$ if $\delta \ge 2^{-C(\epsilon,\eta) n}$. 
\end{lemma}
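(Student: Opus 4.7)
The plan is to reuse the $1$-Lipschitz map $\mathcal{L}$ introduced in the proof of Theorem~\ref{thm:main}, but to feed it a much denser anchor set coming from an error-correcting code, and then to answer the resulting query with the Gaussian mechanism of~\cite{DKMMN06}. The key point is that for a query of $\ell_2$-sensitivity $1$, the Gaussian mechanism gives $(\epsilon,\delta)$-differential privacy with per-coordinate noise $O(\sqrt{\log(1/\delta)}/\epsilon)$, so taking $\delta = 2^{-Cn}$ for a small $C=C(\epsilon,\eta)$ yields noise of order only $\sqrt{n}$; against this, each anchor will be mapped by $\mathcal{L}$ to a vector of the form $\Theta(n)\,e_i$, i.e.\ a signal of strength $\Omega(n)$ concentrated on its own coordinate, which is more than enough to identify the anchor.

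Concretely, I use the Gilbert--Varshamov bound to obtain $K\geq 2^{n(1-\eta)}$ codewords $c_1,\ldots,c_K\in\{0,1\}^n\subset(\mathbb{Z}^+)^n$ of pairwise Hamming distance at least $\gamma n$, where $\gamma = \gamma(\eta)>0$ is any value with $H(\gamma)<\eta$ ($H$ being the binary entropy). I then define $\mathcal{L}:(\mathbb{Z}^+)^n\to\mathbb{R}^K$ by $\mathcal{L}(z)_i=\max\{\gamma n/3-\|z-c_i\|_1,\,0\}$ and set $F=\mathcal{L}$. The singleton-support argument of the previous subsection (together with $2(\gamma n/3)+1<\gamma n$ for $n$ larger than an absolute constant) shows that $F$ has $\ell_1$-sensitivity, and hence $\ell_2$-sensitivity, at most $1$, while by construction $F(c_i)=(\gamma n/3)\,e_i$. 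Setting $M(x)=F(x)+\xi$ with $\xi\sim\mathcal{N}(0,\sigma^2 I_K)$ and $\sigma=\Theta(\sqrt{\log(1/\delta)}/\epsilon)$ then makes $M$ an $(\epsilon,\delta)$-differentially private mechanism for $F$ by the standard Gaussian-mechanism analysis of~\cite{DKMMN06}.

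Take $X$ to be uniform on $\{c_1,\ldots,c_K\}$, so that $H(X)=\log K \geq n(1-\eta)$ and every database in the support has size at most $n$. I will bound $H(X\mid M(X))$ via Fano's inequality applied to the maximum-coordinate decoder $\hat\imath(y)=\arg\max_j y_j$: conditioned on $X=c_i$, this decoder errs only if $\xi_i<-\gamma n/6$ or some $\xi_j$ with $j\neq i$ exceeds $\gamma n/6$, which by the Gaussian tail bound and a union bound happens with probability at most $2K\exp(-\gamma^2 n^2/(72\sigma^2))=2^{n(1-\eta)+1}\exp(-\Omega(\gamma^2\epsilon^2 n/C))$. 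Choosing $C=C(\epsilon,\eta)$ smaller than an appropriate multiple of $\gamma(\eta)^2\epsilon^2/(1-\eta)$ drives this to $2^{-\Omega(n)}$, so Fano gives $H(X\mid M(X))\leq 1+2^{-\Omega(n)}\cdot n(1-\eta)=o(1)$ and therefore $I(X;M(X))\geq n(1-\eta)-o(1)\geq n(1-2\eta)$ once $n$ exceeds a constant depending on $\eta$. The delicate step is precisely this parameter balance: we must pack $K=2^{\Omega(n)}$ anchors while keeping their pairwise separation a constant fraction of $n$, so that the peak value of $\mathcal{L}$ still dominates the maximum of $K$ Gaussians of variance $\Theta(Cn/\epsilon^2)$; it is exactly the freedom to take $\delta$ exponentially small in $n$ that closes this gap and separates $(\epsilon,\delta)$-privacy from the $\epsilon$-private regime constrained by the mutual-information bound of~\cite{MMPRTV10}.
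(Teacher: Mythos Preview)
Your proof is correct and takes a somewhat different route from the paper's. Both arguments begin by fixing a code $\{c_1,\ldots,c_K\}\subset\{0,1\}^n$ with $K\ge 2^{n(1-\eta)}$ and pairwise $\ell_1$-distance $\Omega(n)$, take $X$ uniform on this set, and build the query from the $1$-Lipschitz ``bump'' map $\mathcal{L}$. The divergence is in what happens next. The paper composes $\mathcal{L}$ with $k=80n$ random sign vectors (exactly as in the proof of Theorem~\ref{thm:main}) to obtain a query $F:(\mathbb{Z}^+)^n\to\mathbb{R}^{80n}$ with $\|F(x_i)-F(x_j)\|_2=\Omega(\eta^2 n\sqrt{k})$, invokes the Gaussian mechanism in its per-coordinate-sensitivity form ($\sigma^2=k\log(1/\delta)/\epsilon^2$), and recovers $X$ by nearest-neighbor decoding in $\ell_2$ using the chi-squared tail bound of Appendix~\ref{sec:chi}. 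You instead set $F=\mathcal{L}$ itself (so $k=K=2^{n(1-\eta)}$), exploit the fact that $\mathcal{L}$ has global $\ell_2$-sensitivity~$1$ to apply the Gaussian mechanism with the dimension-free noise level $\sigma=\Theta(\sqrt{\log(1/\delta)}/\epsilon)$, and decode by taking the maximum coordinate, controlling the error via a Gaussian tail and union bound over the $K$ coordinates. Your argument is the more elementary of the two---it dispenses with the random projection, the union bound over pairs, and the chi-squared concentration---and it makes transparent the parameter balance: a signal of height $\Theta(\gamma n)$ must dominate the maximum of $2^{(1-\eta)n}$ Gaussians of variance $\Theta(Cn/\epsilon^2)$, which is exactly the stated constraint $C\lesssim\gamma^2\epsilon^2$. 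The paper's route, on the other hand, keeps the query arity polynomial ($k=O(n)$ rather than $k=2^{\Theta(n)}$), which is closer in spirit to the queries appearing elsewhere in the paper and yields a more ``realistic'' separating example.
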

\begin{proof}
We first construct $2^s$ vectors in $\{0,1\}^n$ (for $s={n(1-\eta)}$) with the property that for any $x_i, x_j$ $(i \ne j)$, $\Vert x_i - x_j \Vert_1 \ge  \eta^2n/8$.  It is easy to guarantee the existence of such a set of vectors by a simple application of the probabilistic method. The distribution $X$ is simply the uniform distribution over the set $\{x_1, \ldots, x_{2^s}\}$. By construction, all the databases in $X$ are of size bounded by $n$. 

Next, we define the query $F : (\mathbb{Z}^+)^n \rightarrow \mathbb{R}^k$ be defined in the same way as the query $F$ in the proof of Theorem~\ref{thm:main}.  Following, exactly the same calculations, we can show that if we set $k =80 n$, we get a query $F : (\mathbb{Z}^+)^n \rightarrow \mathbb{R}^k$ such that for any $i \not = j$,  $\Vert F(x_i) - F(x_j) \Vert_2 \ge \eta^2 n\sqrt{k}/50$.  
We now recall the Gaussian mechanism of \cite{DKMMN06} which maintains $(\epsilon,\delta)$ differential privacy. 
\begin{lemma}\cite{DKMMN06}  
Let $F: (\mathbb{Z}^+)^d \rightarrow \mathbb{R}^k$ be a query.  Let $Y  = (Y_1, \ldots, Y_k)$ be a distribution over $\mathbb{R}^k$ such that each $Y_i$ is an i.i.d. $\mcal{N}(0,\sigma)$ random variable. Here $\sigma^2 = \frac{k \log(1/\delta)}{\epsilon^2}$.  Then the mechanism $M$ which for a database $x$ and query $F$, which samples $Y_0$ from $Y$ and responds by $F(x) + Y_0$ is an $(\epsilon,\delta)$ differentially private mechanism.
\end{lemma}
Note that for the above mechanism $M$, and database $x$,  if $Z$ is sampled from $M(x)$, then the distribution of $M(x) - F(x)$ is same as $(Y_1, \ldots, Y_k)$ where each $Y_i$ is an i.i.d. $\mcal{N}(0,\sigma)$ random variable. Thus, 
$$
\Vert M(x) - F(x) \Vert_2^2 \sim Y_1^2 + \ldots + Y_k^2
$$
As the following fact shows, the distribution on the right hand side is concentrated around its mean. The fact is possibly well-known but we could not find a reference and hence we prove it in Appendix~\ref{sec:chi}.
\begin{fact}
If $Y_1, \ldots, Y_k$ are i.i.d. $\mcal{N}(0,\sigma)$ random variables, then, 
$$
\Pr_{Y_1, \ldots, Y_k} [ Y_1^2 + \ldots + Y_k^2 > 2 (1+\xi)\cdot k   \cdot \sigma^2 ] \le 2^{ - \frac{k\xi}{2}}
$$
\end{fact}
Using the above fact, we get 
$$
\Pr \left[ \Vert M(x) - F(x) \Vert_2^2  > \frac{2(1+\xi )  k^2 \log(1/\delta)}{\epsilon^2} \right] 
\leq 2^{\frac{-\xi k}{2}}$$
Here the probability is over the randomness of the mechanism.  Putting $\xi=1$ and $\delta = 2^{-C(\epsilon,\eta) n}$ for an appropriate constant $C(\epsilon,\eta)$, we get that 
$$
\Pr \left[ \Vert M(x) - F(x) \Vert_2  > \frac{\eta^2 n \sqrt{k}}{200} \right] 
\leq 2^{-40n}$$
As we know, for any $i \not = j$, $\Vert F(x_i) - F(x_j) \Vert_2 \ge \eta^2 n\sqrt{k}/50$. Hence, with probability at least $1-2^{-n}$ over the randomness of the mechanism,  for any database $x_i \in supp(X)$, if $y$ is sampled from $M(x_i)$,$$
\forall j \not = i \ \ \Vert F(x_j) - y \Vert_2 > \Vert F(x_i) - y \Vert_2
$$
Thus, for any $x_i$, given $M(x_i)$, we can recover $x_i$ with high probability and hence, we can say 
$$
\Pr_{y \sim M(X)}  [ H(X | M(X) =y) =0] >1-2^{-n}
$$
This means that 
$$
H(X |M(X)) \le 2^{-n} n  <1
$$
Recall that $I(X;M(X)) = H(X) - H(X|M(X)) \ge H(X) - 1 = (1-\eta) n -1 \ge (1-2 \eta) n$. This completes the proof of the Lemma~\ref{lem:mutual}.
\end{proof}
\section{Lower bound on noise for counting queries}
In the last section, we proved that to preserve $\epsilon$ differential privacy for $k$ queries, one may need to add $\Omega(k/\epsilon)$ noise  provided $d, n \gg k$. However, these queries were not counting queries. It is interesting to derive lower bounds on noise required  to preserve privacy for counting queries as these are the queries mostly used in practice. While one might initially hope to prove a similar lower bound for counting queries, \cite{BLR08} states 
that there is a  $\epsilon$-differentially private mechanism which adds $\tilde{O}(n^{2/3}/\epsilon)$ noise per query and can answer $O(n)$ counting queries (when $d  =n^{O(1)}$).

Still, Hardt and Talwar \cite{HT10} showed that to answer $k$ counting queries, any mechanism which is $\epsilon$-differentially private must add $\Omega(\min \{k/\epsilon, \sqrt{k \log(d/k)}/\epsilon \})$ noise  (in fact, this is true for $k$ random queries). However, \cite{HT10} make a technical assumption that the  mechanism has a smooth extension which works for ``fractional" databases as well. In other words, they require the domain of the mechanism to be $(\mathbb{R}^+)^d$ as opposed to $(\mathbb{Z}^+)^d$. However, it is not clear if this is always true \ie if given a mechanism which is defined only over true (integral) databases, one can get a mechanism which is defined over ``fractional" databases with similar privacy guarantees. 

Next, we prove the same result without making any such technical assumptions. Again, our constructions are dependent on combinatorial designs \cite{EFF85}. First, we prove the following simple but useful claim. 
\begin{claim}\label{clm:clm3}
Let $a \in \mathbb{Z}$ and  assume $x_1, x_2 , \ldots, x_{2^s} \in (\mathbb{Z}^+)^d$ such that $\forall i$, every entry of $x_i$ is either $0$ or $a$. Also, for every $i \not =\ell$,  $\Vert x_i -x_{\ell} \Vert_1 \ge \Delta$. Then, for $k \ge 20 s$,  there is a linear query $F:  (\mathbb{Z}^+)^d \rightarrow \mathbb{R}^{k}$ such that for every $i,\ell \in [2^s]$ and  $i \not =\ell$, the following holds : 
$$
\Pr_{j \in [k]} [|F(x_i)_j - F(x_{\ell})_j| \ge \Delta'/10] \ge 1/40
$$
where $\Delta' = \sqrt{\Delta \cdot a}$. 
\end{claim}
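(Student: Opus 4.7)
The plan is to prove the claim by the probabilistic method: choose the $k$ rows of $F$ as i.i.d. uniformly random sign vectors in $\{-1,+1\}^d$, then show that for every fixed pair $(x_i,x_\ell)$ a constant fraction of the $k$ coordinates of $F(x_i)-F(x_\ell)$ exceed $\Delta'/10$ in absolute value with probability doubly exponentially close to $1$, and finally union-bound over all pairs.

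First I would set up the randomness. For $t\in[k]$, let $r^{(t)}=(r^{(t)}_1,\ldots,r^{(t)}_d)$ be independent uniform vectors in $\{-1,+1\}^d$, and define $F(x)_t=\sum_{j=1}^d r^{(t)}_j x_j$. Fix any $i\neq\ell$ and let $S_{i\ell}=\{j:(x_i)_j\neq (x_\ell)_j\}$; since the entries are in $\{0,a\}$, we have $(x_i-x_\ell)_j\in\{-a,0,a\}$ on $j\in S_{i\ell}$, so $|S_{i\ell}|\ge \Delta/a$. Writing signs $\sigma_j=(x_i-x_\ell)_j/a\in\{\pm 1\}$ for $j\in S_{i\ell}$, we get
\[
F(x_i)_t - F(x_\ell)_t \;=\; a\sum_{j\in S_{i\ell}} \sigma_j\, r^{(t)}_j \;=\; a\cdot R_t,
\]
where $R_t$ is a Rademacher sum of $|S_{i\ell}|\ge \Delta/a$ independent $\pm 1$ terms (because $\sigma_j r^{(t)}_j$ is again uniform $\pm 1$).

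Next I would apply standard Rademacher anticoncentration (Paley--Zygmund) to $R_t^2$: since $\mathbb{E}[R_t^2]=|S_{i\ell}|$ and $\mathbb{E}[R_t^4]\le 3|S_{i\ell}|^2$, one has $\Pr[R_t^2 \ge |S_{i\ell}|/2] \ge (1/2)^2/3 = 1/12$. Thus with probability at least $1/12$,
\[
|F(x_i)_t - F(x_\ell)_t| \;\ge\; a\sqrt{|S_{i\ell}|/2} \;\ge\; \sqrt{a^2\cdot(\Delta/a)/2} \;=\; \Delta'/\sqrt 2 \;\ge\; \Delta'/10.
\]
So each coordinate $t\in[k]$ exceeds the threshold $\Delta'/10$ independently with probability at least $1/12$.

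Now I would apply a Chernoff bound to the $k$ independent indicator variables. The expected fraction of indices $t$ on which $|F(x_i)_t-F(x_\ell)_t|\ge \Delta'/10$ is at least $1/12$, so by a standard multiplicative Chernoff estimate the probability that this fraction falls below $1/40$ is at most $e^{-ck}$ for some absolute constant $c>0$. There are fewer than $2^{2s}$ pairs $(i,\ell)$, and the hypothesis $k\ge 20s$ makes $2^{2s}\cdot e^{-ck}<1$ (by choosing the numerical constants appropriately, or by enlarging the factor $20$ to a sufficiently large absolute constant implicit in the statement). Hence there exists a realization of the $r^{(t)}$'s, and therefore a fixed linear query $F$, achieving the desired property simultaneously for all pairs.

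The only delicate step is calibrating the anticoncentration constant so that the threshold $\Delta'/10$ and the fraction $1/40$ both come out with room to spare for the union bound; apart from this bookkeeping the argument is routine, since Paley--Zygmund gives a clean constant for Rademacher sums and Chernoff handles the aggregation.
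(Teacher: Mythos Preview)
Your approach is essentially the same as the paper's: pick the $k$ rows of $F$ as i.i.d.\ uniform sign vectors, use anticoncentration to show each coordinate of $F(x_i)-F(x_\ell)$ exceeds $\Delta'/10$ with constant probability, then Chernoff plus a union bound over the $\le 2^{2s}$ pairs. The only substantive difference is the anticoncentration tool. The paper invokes its Corollary~\ref{cor:cor1} (derived from Montgomery--Smith's large-deviation bound for Rademacher sums) to get $\Pr[|R_t|\ge \sqrt{m}/10]\ge 9/10$, which gives plenty of room and makes the Chernoff/union-bound step go through with $k\ge 20s$ and the stated constant $1/40$ directly. Your Paley--Zygmund argument with $\theta=1/2$ yields only $p\ge 1/12$, and with that value the exponent from Chernoff is too small to beat $2^{2s}$ when $k=20s$; however, since you already overshoot the threshold ($\Delta'/\sqrt{2}$ versus $\Delta'/10$), you can simply take $\theta=1/100$ in Paley--Zygmund to get $\Pr[|R_t|\ge \sqrt{m}/10]\ge (1-1/100)^2/3>0.32$, and then the Chernoff/union-bound step does go through with $k\ge 20s$. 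So your sketch is correct once this numerical choice is adjusted; the paper's route via Montgomery--Smith just gives a sharper per-coordinate probability and avoids the tuning.
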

\begin{proof}
Consider any $x_i, x_{\ell}$ such that $i \not = \ell$. Note that, $z$ defined as $z=x_i -x_{\ell}$ is such that all its entries are $0, \pm a$ and also that $z$ has at least $\Delta/a$ or more non-zero entries. If we choose $r \in \{-1,1\}^{d}$ u.a.r., then note that 
$$
Y=\sum_{i=1}^d z_i \cdot r_i = \sum_{z_i = \pm a} z_i \cdot r_i 
$$
Note that the total number of summands is $\ell'  \geq \Delta/a$ and hence the distribution of the random variable $Y$ is same as  choosing $r' \in \{-1,1\}^{d}$ and considering the random variable
$$
Y' =  a \cdot \left( \sum_{i=1}^{\ell'} r'_i \right)
$$
However using Corollary~\ref{cor:cor1}, we get 

\begin{equation}\label{eq:ref1}
\Pr \left[|Y' | \geq  \frac{\sqrt{\Delta \cdot a} }{10} \right] = \Pr \left[  | \sum_{i=1}^{\ell'} r'_i|  \geq  \frac{\sqrt{\Delta/ a} }{10}\right] \geq \frac{9}{10}
\end{equation}
Now, let us choose $r'_1, \ldots, r'_k$ uniformly and independently at random from $\{-1,1\}^{d}$ and consider the linear query $F: (\mathbb{Z}^+)^d \rightarrow \mathbb{R}^k$ defined as
$$
F(x) = \left(\sum_{j=1}^d x_j \cdot r'_{1j}, \ldots, \sum_{j=1}^d x_j \cdot r'_{kj} \right)
$$
Set $\Delta'= \sqrt{\Delta \cdot a}$. Now,  (\ref{eq:ref1}) and an application of Chernoff bound implies that for any  $x_i,x_{\ell}$ ($i \not = \ell$)  
$$
\Pr_{r'_1,\ldots,r'_k} \left[\Pr_{j \in [k]} [|F(x_i)_j - F(x_{\ell})_j| \ge \Delta'/10] \ge 1/40\right] > 1- 2^{-k/10}
$$
We now observe that the total number of pairs $(x_i, x_\ell)$ ($i \not = \ell$)  is at most $2^{2s} \leq 2^{k/10}$. Applying a union bound, we get that there is some choice of $r'_1,\ldots, r'_k$ (and hence a fixed $F$) such that 
$$ \Pr_{j \in [k]} [|F(x_i)_j - F(x_{\ell})_j| \ge \Delta'/10] \ge 1/40 $$
\end{proof}

We now prove a lower bound on the noise required to maintain privacy for random counting queries. As we have said before, Hardt and Talwar \cite{HT10} proved the same result under an additional assumption that the mechanism defined over integral databases can be smoothly extended to fractional databases as well.
\begin{theorem}
For every $k, d \in \mathbb{N}$ and $1>\epsilon>0$, there is a counting query $F: (\mathbb{Z}^+)^d \rightarrow \mathbb{R}^k$ such that any mechanism which maintains $\epsilon$-differential privacy adds noise $\Omega(\min\{k/\epsilon, \sqrt{k \log (d/k)}/\epsilon\})$. The size of the database \ie $n=O(k/\epsilon)$.
\end{theorem}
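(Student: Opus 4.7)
The plan is to invoke the volume-based packing argument of Theorem~\ref{thm:thm1} with a family of $0/a$-valued ``design databases'' and the single deterministic counting query produced by Claim~\ref{clm:clm3}. First I would fix $s := k/20$ so that Claim~\ref{clm:clm3} is applicable, and use a combinatorial design (a Nisan--Wigderson-style packing, in the spirit of the earlier Claim~\ref{clm:clm6}) to obtain $2^{s}$ subsets $S_1,\ldots,S_{2^{s}}$ of $[d]$, each of size $w$, with pairwise intersection at most $w/2$, where $w$ is chosen as small as the packing admits, namely $w = \Theta(\max\{1,\,s/\log(d/s)\})$. Setting $x_i := a\cdot\mathbf{1}_{S_i} \in (\mathbb{Z}^+)^d$ then produces databases whose entries lie in $\{0,a\}$, with $\|x_i\|_1 = aw$ and $\|x_i-x_j\|_1 \geq aw$ for $i \neq j$.

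Next I would calibrate the scalar $a$ so as to saturate the constraint of Theorem~\ref{thm:thm1}, i.e., $\Delta := aw = \Theta(s/\epsilon) = \Theta(k/\epsilon)$. This simultaneously pins the database size at $n = aw = O(k/\epsilon)$, matching the claim. Feeding the $2^{s}$ databases into Claim~\ref{clm:clm3} (whose hypothesis is precisely that they are $\{0,a\}$-valued) then produces a deterministic linear query $F\colon(\mathbb{Z}^+)^d\to\mathbb{R}^k$ with $\|F(x_i)-F(x_j)\|_\infty \geq \sqrt{\Delta a}/10 = a\sqrt{w}/10 =: \eta$ for every $i \neq j$, since any two coordinates where the dimensions differ by $\eta$ make $\ell_\infty$-distance $\geq \eta$.

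Now Theorem~\ref{thm:thm1} applied with the pair $(\Delta,\eta)$ immediately delivers a noise lower bound of $\eta/2 = \Omega(a\sqrt{w}/\epsilon)$ (the requirement $\Delta \leq (s-1)/\epsilon$ holds by construction). Substituting the optimal $w$ recovers both branches of the $\min$: when $d$ is small enough that the design forces $w = \Theta(s/\log(d/s))$, one obtains $\eta = \Theta(s/(\epsilon\sqrt{w})) = \Theta(\sqrt{s\log(d/s)}/\epsilon) = \Theta(\sqrt{k\log(d/k)}/\epsilon)$; for $d$ so large that $w=1$ is admissible (i.e., $2^{k/20}$ disjoint singletons exist in $[d]$) one instead obtains $\eta = \Theta(a) = \Theta(k/\epsilon)$. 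Either way the bound is $\Omega\bigl(\min\{k/\epsilon,\,\sqrt{k\log(d/k)}/\epsilon\}\bigr)$, which is what the theorem asserts.

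The main technical obstacle is producing the design with the right parameters, namely $2^{k/20}$ subsets of $[d]$ of size $w = \Theta(k/\log(d/k))$ with pairwise intersections at most $w/2$; this crossover governs which branch of the $\min$ is binding. Standard Nisan--Wigderson-type packings deliver this, and crucially keep every $x_i$ integral, so Claim~\ref{clm:clm3} applies verbatim and the argument goes through for mechanisms defined only on $(\mathbb{Z}^+)^d$, bypassing the fractional-database extension assumed by Hardt and Talwar \cite{HT10}.
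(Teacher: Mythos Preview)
Your proposal is correct and follows essentially the same approach as the paper: the paper makes an explicit case split ($k \leq \log d$ versus $k > \log d$, corresponding precisely to your $w=1$ versus $w>1$) but otherwise builds the same $\{0,a\}$-valued design databases (via Claim~\ref{clm:clm0}), applies Claim~\ref{clm:clm3} to obtain the linear query, and finishes with Theorem~\ref{thm:thm1}. Two minor bookkeeping slips aside---$\Delta$ in Theorem~\ref{thm:thm1} must be the \emph{upper} bound $2aw$ on pairwise distance, and the extra $/\epsilon$ in ``$\eta/2=\Omega(a\sqrt{w}/\epsilon)$'' is redundant since $a$ already carries the $1/\epsilon$---the argument matches.
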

\begin{proof}
The proof strategy is to come up with databases meeting the hypothesis of Claim~\ref{clm:clm3} and use Claim~\ref{clm:clm3} to get a counting query $F$. 
We then use Theorem~\ref{thm:thm1} to get a lower bound on the distortion required by any private mechanism to answer $F$. 
We consider two cases : $ k\leq \log d$ and $k > \log d$. 

The first case is trivial : Namely, consider databases $x_1, \ldots, x_{2^{k/20}}$ such that each $x_i = \lfloor(k/80\epsilon)\rfloor \cdot e_i$ where $e_i$ is the standard unit vector in the $i^{th}$ direction. This is possible as there are $d \ge 2^k$ different unit vectors.  Note that for any $i \not = \ell$, $\Vert x_i -x_{\ell} \Vert_1 =2 \cdot \lfloor k/(80\epsilon) \rfloor$. We can now apply Claim~\ref{clm:clm3} and get that there is a linear query $F:  (\mathbb{Z}^+)^d \rightarrow \mathbb{R}^k$ (using $\Delta = 2\cdot \lfloor k/(80 \epsilon) \rfloor$ and $a = \lfloor k/(80 \epsilon) \rfloor$) such that 
$$ \Pr_{j \in [k]} \left[|F(x_i)_j- F(x_\ell)_j| \ge \frac{\sqrt{2}}{10} \lfloor k/(80 \epsilon) \rfloor \ \geq \frac{k}{800 \epsilon}\right] \ge 1/40 $$
We see that there are $2^{k/20}=2^s $ databases which differ by exactly $2 \cdot \lfloor k/(80\epsilon) \rfloor = \Delta$. Note that $\Delta \leq (s-1)/\epsilon$.  Hence we can apply Theorem~\ref{thm:thm1} to note that to maintain $\epsilon$-differential privacy, any mechanism  needs to add $k/(800 \epsilon)$ noise. In fact, we note that the $\ell_2$ error of the answer returned by the mechanism  needs to be $\Omega(k^{3/2}/ \epsilon)$ which is quantitatively the same as the result in \cite{HT10}. 

The second case is  slightly more complicated.  We use Claim~\ref{clm:clm0} to construct $x_1,\ldots, x_{2^{k/20}} \in (\mathbb{Z}^+)^d$ with the following properties :
\begin{itemize}
\item Every entry of any of the $x_i$'s is either $0$ or  $a \in \mathbb{Z}$ such that $a \geq (\log(d/k)/160\epsilon)$. 
\item $\forall i$, $\Vert x_i \Vert_1 \leq k/80\epsilon$ and $\forall i \not = j$, $\Vert x_i -x_j \Vert_1 \geq k/160\epsilon$
\end{itemize}
Again, we can apply Claim~\ref{clm:clm3} and get that there is a linear query $F:  (\mathbb{Z}^+)^d \rightarrow \mathbb{R}^k$ (using $\Delta \ge k/(160 \epsilon)$ and $a \geq  (\log(d/k)/160\epsilon)$) such that $\forall i \not= \ell$
$$ \Pr_{j \in [k]} \left[|F(x_i)_j- F(x_\ell)_j| \ge \frac{1}{10} \cdot\frac{ \sqrt{k \log (d/k) } }{{160\epsilon}} \right] \ge 1/40 $$
Again, we have $2^{k/20}$ databases which differ by at most $k/(40 \epsilon)$ and hence we can apply Theorem~\ref{thm:thm1} to get that to maintain $\epsilon$-differential privacy, any mechanism needs to add $\Omega\left(\frac{ \sqrt{k \log (d/k) } }{\epsilon}\right)$ noise.  
\end{proof}
\section{Lower  bounds for approximate differential privacy} 
In this section, we prove lower bounds on the noise required to maintain $(\epsilon,\delta)$ differential privacy for $\epsilon,\delta>0$. Our lower bounds are valid for any positive $\delta>0$ and are in fact tight for a constant $\epsilon$ and $\delta$. We note that a quantitatively similar lower bound was proven for the class of $\ell$-way marginals by \cite{KRSU10} though our proof (for random queries) is arguably much simpler.

  In this section, we consider databases which are elements of $\B^n$ or in other words we consider the case when the universe size $d=n$ and the databases are allowed to have exactly one element of each type. We note that restricting databases to bit vectors is a well-considered model in literature including \cite{DN03,DMT07,MMPRTV10} among others.

We prove the following theorem.
\begin{theorem}\label{thm:epsdel}
For any $n \in \mathbb{N}$, $\epsilon>0$ and $1/20>\delta>0$, there exist positive constants $\alpha,\gamma$ and $\eta$ 
 such that  there is a counting query $F : \B^n \rightarrow \mathbb{R}^k$ with $k=\alpha n$ such that any mechanism $M$ that satisfies 
$$
\Pr_M [ \Pr_{i \in [k]} [|M(x,F)_i - F(x)_i|\le \eta \sqrt{n} ] \geq 1/2+\gamma] \geq 3\sqrt{\delta}
$$
is not $(\epsilon,\delta)$ differentially private.  In other words, any mechanism $M$ which with significant  probability \ie $3 \sqrt{\delta}$ answers at least $1/2+\gamma$ fraction of the $k$ queries with at most $\eta \sqrt{n}$ noise, is not $(\epsilon,\delta)$ differentially private. 
\end{theorem}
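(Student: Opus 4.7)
The plan is to combine a Dinur--Nissim-style reconstruction attack, as sharpened in~\cite{DMT07}, with the unpredictability characterization of $(\epsilon,\delta)$-differentially private mechanisms from~\cite{MMPRTV10}.

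First I would take $F:\B^n\to\R^k$ to be a random counting query, $F(x)=Ax$, where $A\in\{-1,+1\}^{k\times n}$ is a random Rademacher matrix and $k=\alpha n$ for a sufficiently large constant $\alpha=\alpha(\gamma,\eta)$. Each row of $A$ lies in $\{\pm1\}^n$, so every coordinate of $F$ is $1$-Lipschitz and $F$ is a legitimate linear (counting) query in the sense of the paper.

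The second and main step is a robust reconstruction lemma of the following flavor: with probability $1-o(1)$ over the draw of $A$, for every $x\in\B^n$, any $y\in\R^k$ with $|y_i-(Ax)_i|\le\eta\sqrt{n}$ on at least $(1/2+\gamma)k$ coordinates uniquely determines $x$ up to Hamming distance $o(n)$. I would prove this by fixing $x'\in\B^n$ with $\|x-x'\|_1\ge\beta n$ for a small constant $\beta$, and showing by anti-concentration of Rademacher sums (in the style of Corollary~\ref{cor:cor1} used elsewhere in the paper, applied to $\langle a,x-x'\rangle$) that a random row $a$ satisfies $|\langle a,x-x'\rangle|\ge 2\eta\sqrt{n}$ with probability at least $1/2+2\gamma$, provided $\eta$ is chosen small relative to $\gamma$ and $\beta$. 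A Chernoff bound over the $k=\alpha n$ rows, followed by a union bound over the $2^n$ candidates $x'$, then completes the argument once $\alpha$ is taken large enough relative to $\beta,\gamma$.

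Third, I would package this into an adversary $\mathcal{A}$ that on transcript $M(x,F)$ outputs the $\B^n$-point $x'$ minimizing $\bigl|\bigl\{i:|M(x,F)_i-(Ax')_i|>\eta\sqrt n\bigr\}\bigr|$. Drawing $x$ uniformly from $\B^n$ and averaging the hypothesis over $x$, the attacker $\mathcal{A}$ recovers $x$ to within $o(n)$ Hamming error with joint probability at least $3\sqrt{\delta}$ over $x$ and the coins of $M$. On the other hand, the characterization in~\cite{MMPRTV10} states that if $M$ is $(\epsilon,\delta)$-differentially private then no such attacker can approximately reconstruct a uniformly random database from $M(x,F)$ with probability exceeding roughly $\sqrt{\delta}$; the factor $3\sqrt{\delta}$ in the hypothesis is calibrated precisely to strictly exceed this MMPRTV10 threshold, yielding the desired contradiction.

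The main obstacle is calibrating the anti-concentration step so that reconstruction remains reliable even when up to a $(1/2-\gamma)$ fraction of the answers may be arbitrary; this is the passage from~\cite{DN03} (all answers accurate) to~\cite{DMT07} (a constant fraction wild), adapted here to $\ell_\infty$ noise by working row-by-row against $A$. A subsidiary issue is matching constants so that the overall failure probability of reconstruction sits comfortably below $\sqrt{\delta}$, so that beating the unpredictability bound of~\cite{MMPRTV10} goes through cleanly for every positive $\delta<1/20$.
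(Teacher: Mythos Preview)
Your proposal is correct and follows essentially the same approach as the paper: combine the robust reconstruction attack of~\cite{DMT07} (which tolerates a $1/2-\gamma$ fraction of wild answers) with the unpredictable-source characterization of $(\epsilon,\delta)$-differential privacy from~\cite{MMPRTV10}. The only presentational difference is that the paper invokes the reconstruction result of~\cite{DMT07} (its Theorem~\ref{thm:thmDMT} and Corollary~\ref{cor:cor47}) as a black box, whereas you sketch how to reprove it from anti-concentration, Chernoff, and a union bound; and the paper makes the MMPRTV10 step fully explicit via Lemma~\ref{lem:lem23} followed by Markov's inequality on $\delta_t$ (yielding that with probability $1-2\sqrt{\delta}$ the posterior is a $2\sqrt{\delta}$-approximate strongly unpredictable source), which is the precise statement your informal ``no attacker can reconstruct with probability exceeding roughly $\sqrt{\delta}$'' is standing in for.
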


An immediate corollary is that there exists a positive constant $\alpha$ and a counting query $F : \B^n \rightarrow \mathbb{R}^k$ where $k = \alpha n$  such that any mechanism which adds $o(\sqrt{n})$ noise is not $(\epsilon,\delta)$ differentially private for $\epsilon>0$ and $\delta<1/20$.

To do the proof of Theorem~\ref{thm:epsdel}, we first need to introduce some definitions previously discussed in \cite{MMPRTV10}. We do note that the paper \cite{MMPRTV10} deals with the two-party setting but the relevant definitions and the lemma we use here easily extend to the standard (curator-client) setting of privacy. \begin{definition}
A random variable $Y \in \B^{n}$ is said to be $\delta$-approximate  strongly $\alpha$-unpredictable bit source (for $\alpha\ge1$) if with probability $1-\delta$ over $i \in [n]$ and $(y_1, \ldots,y_{i-1},y_i, y_{i+1},\ldots,y_n) \leftarrow Y $ 
\begin{displaymath}
 \frac{1}{\alpha} \leq \frac{\Pr[Y_i=1 | Y_1 = y_1 , \ldots, Y_{i-1}=y_{i-1}, Y_{i+1}=y_{i+1},\ldots,Y_n=y_n]}{\Pr[Y_i=0 | Y_1 = y_1 , \ldots, Y_{i-1}=y_{i-1}, Y_{i+1}=y_{i+1},\ldots,Y_n=y_n]}\leq \alpha
\end{displaymath}
\end{definition}
The next lemma (proven in \cite{MMPRTV10} for the two-party setting) roughly says that for any $(\epsilon,\delta)$ private mechanism, conditioned on the transcript of the mechanism, the distribution of the database is a $\delta$-approximate strong $2^{\epsilon}$-unpredictable source. More precisely, we have the following lemma. 
\begin{lemma}\label{lem:lem23}
Let $F : \B^n \rightarrow \mathbb{R}^k$ be a query and $M$ be a $(\epsilon,\delta)$-differentially private mechanism for answering $F$. Let $X$ be the uniform distribution over $\B^n$ and $\Gamma$ be the probability distribution over the transcripts of $M(x)$ when $x$ is drawn from $X$. Then for any $\mu>0$ and $t \leftarrow \Gamma$, the distribution $X|_{\Gamma=t}$ is $\delta_t$ approximate strongly $2^{\epsilon+\mu}$-unpredictable sources such that 
$$\mathop{\mathbb{E}}_{t \in \Gamma}  \ [\delta_t] \le  2\delta \cdot \frac{ 1+ e^{-\epsilon - \mu}}{1-e^{-\mu}}$$.

\end{lemma}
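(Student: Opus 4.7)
The plan is to translate the unpredictability condition into a statement about density ratios and then invoke the standard density-level consequence of $(\eps,\delta)$-differential privacy.

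First, I would fix $i\in[n]$ and $y_{-i}\in\B^{n-1}$ and consider the neighboring databases $x_b=(y_{-i},b)$ for $b\in\{0,1\}$, writing $\mu_b$ for the density of $M(x_b,F)$ (or the Radon--Nikodym derivative with respect to a common dominating measure, say $\mu_0+\mu_1$). Since the prior $X$ is uniform on $\B^n$, Bayes' rule collapses the posterior odds to the likelihood ratio:
\[
\frac{\Pr[X_i=1\mid X_{-i}=y_{-i},T=t]}{\Pr[X_i=0\mid X_{-i}=y_{-i},T=t]}=\frac{\mu_1(t)}{\mu_0(t)}.
\]
Thus the ``bad'' transcripts for bit $i$ at row $y_{-i}$ are exactly those where this likelihood ratio falls outside the interval $[2^{-\eps-\mu},2^{\eps+\mu}]$, so $\delta_t$ (unconditionally on $i,y_{-i}$) is the probability of that event under the appropriate marginal.

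Next, I would prove the standard density-level consequence of $(\eps,\delta)$-DP. Let $A=\{t:\mu_0(t)>2^{\eps+\mu}\mu_1(t)\}$. Applying the DP guarantee to $T\in A$ gives $\int_A \mu_0 \le 2^{\eps}\int_A\mu_1+\delta$, while by the definition of $A$ we have $\int_A\mu_1\le 2^{-\eps-\mu}\int_A\mu_0$. Combining yields $\Pr_{M(x_0)}[A]\le \delta/(1-2^{-\mu})$, and consequently $\Pr_{M(x_1)}[A]\le 2^{-\eps-\mu}\delta/(1-2^{-\mu})$. The symmetric set $A'=\{t:\mu_1(t)>2^{\eps+\mu}\mu_0(t)\}$ obeys the symmetric bounds.

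Finally, I would unfold $\E_t[\delta_t]$ as the joint probability over $(i,x,t)$ with $i$ uniform on $[n]$, $x\sim X$, and $t\sim M(x,F)$, that $t\in A\cup A'$; conditioning on $(i,x_{-i})$ and averaging $x_i$ uniformly over $\{0,1\}$ gives, for each of $A$ and $A'$, the bound $\tfrac{1}{2}(1+2^{-\eps-\mu})\delta/(1-2^{-\mu})$, and union-bounding and then averaging over $(i,y_{-i})$ preserves this, matching the claimed expression up to the change of base between $2$ and $e$ (absorbed into the factor of $2$). The conceptually nontrivial step is the density-level reformulation of $(\eps,\delta)$-DP in step two, since DP is a priori only about measures of events; once that is in hand the Bayesian reduction is mechanical because uniformity of $X$ makes every prior odds-ratio equal to $1$. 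The only potential obstacle is measure-theoretic handling when $M(x_b,F)$ has no Lebesgue density, but passing to Radon--Nikodym derivatives against $\mu_0+\mu_1$ resolves this without altering the bounds, and so the entire argument goes through essentially unchanged from the two-party version in~\cite{MMPRTV10} by collapsing the other party's input to a constant.
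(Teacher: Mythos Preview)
Your proposal is correct and in fact does more than the paper itself: the paper does not prove this lemma at all but simply states that it ``trivially follows from Lemma~20 of~\cite{MMPRTV10} (full version),'' whereas you have spelled out the underlying argument. Your proof is precisely the MMPRTV10 argument specialized to the one-party setting, so there is no divergence of approach to discuss; the Bayes-rule reduction to likelihood ratios and the density-level extraction from $(\eps,\delta)$-DP via the set $A=\{t:\mu_0(t)>2^{\eps+\mu}\mu_1(t)\}$ are exactly the standard steps, and your handling of the base-$2$ versus base-$e$ discrepancy (an inconsistency already present in the paper's statement) is appropriate.
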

The above lemma trivially follows from Lemma~20 of \cite{MMPRTV10} (full version) and hence we do not prove it here. 
Before, proving Theorem~\ref{thm:epsdel}, we need to recall the following theorem from \cite{DMT07} (Theorem 24 in the paper).
\begin{theorem}\label{thm:thmDMT}
For any $\gamma>0$ and any $\nu=\nu(n)$, there is a  constant $\alpha=\alpha(\gamma)>0$  such that for $k=\alpha n$, there is a counting query $F: \B^n \rightarrow \mathbb{R}^k$ and an algorithm $A$ such that given $\tilde{y}$ which satisfies
$$
\Pr_{i \in [k]} [|\tilde{y}_i - F(x)_i| \leq \nu] \ge \frac{1}{2} + \gamma
$$
the output of $A$ on $\tilde{y}$ \ie $ A(\tilde{y}) =x'$ such that $x' \in \B^n$ and $\Vert x -x' \Vert_1 \leq \frac{4\nu^2}{\gamma^2}$\end{theorem}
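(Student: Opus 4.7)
The plan is to prove Theorem~\ref{thm:thmDMT} by instantiating the LP decoding approach of~\cite{DMT07}, which reduces the statement to establishing a concentration/well-spread property of a random counting-query matrix. I take $F$ to be the linear map associated with a random $k \times n$ matrix $M$ with i.i.d.\ Bernoulli$(1/2)$ entries, where $k = \alpha n$ and $\alpha = \alpha(\gamma)$ is a sufficiently large constant. Each row has $\{0,1\}$ entries, so $F(x) = Mx$ is a valid counting query whose coordinates are $1$-Lipschitz. For the analysis it is convenient to pass to the centered matrix $\widehat M = 2M - J$, whose entries are i.i.d.\ $\pm 1$; since $F(x) - \widehat M x / 2$ is a deterministic shift, this change is absorbed by a corresponding shift of $\tilde y$.

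The reconstruction algorithm $A$ is standard $\ell_1$ decoding: solve
$$x^{*} \;=\; \arg\min_{z \in [0,1]^n} \|\widehat M z - \tilde y\|_1,$$
and round each coordinate of $x^{*}$ to $\{0,1\}$ to obtain $x'$. The core of the argument is to establish two properties of $\widehat M$ that hold with high probability once $\alpha$ is large enough in terms of $\gamma$:
\textbf{(P1)} a spread (Euclidean section) property saying that for every $u \in \R^n$ and every $T \subseteq [k]$ with $|T| \le (1/2-\gamma)k$,
$$\sum_{i \in T} |(\widehat M u)_i| \;\le\; \left(\tfrac12 - \tfrac{\gamma}{2}\right) \|\widehat M u\|_1;$$
and \textbf{(P2)} an $\ell_1$ lower bound $\|\widehat M u\|_1 \ge \beta k \|u\|_2$ for every $u \in \R^n$, for some $\beta = \beta(\gamma) > 0$. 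Each $(\widehat M u)_i$ is a Rademacher sum with variance $\|u\|_2^2$, so by Khintchine's inequality and a standard subgaussian tail bound one controls both the magnitude and the spread of the entries; (P2) then follows by combining $\|\widehat M u\|_1 \asymp k \|u\|_2$ (proved for a single $u$) with a covering net over the unit sphere of $\R^n$, and (P1) follows by a union bound over subsets $T$ of $[k]$ against the same net. The factor $\binom{k}{k/2} = 2^{O(k)}$ from the subset union bound is what forces $k = \alpha n$ with $\alpha = \alpha(\gamma)$ large.

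Given (P1) and (P2), the LP analysis is short. Let $e = \tilde y - \widehat M x$ and $S = \{i : |e_i| \le \nu\}$, so $|S| \ge (1/2+\gamma)k$. Setting $u = x^{*} - x$, optimality of $x^{*}$ yields $\|\widehat M u - e\|_1 \le \|e\|_1$; splitting over $S$ and $S^c$ and applying the reverse triangle inequality on each part gives
$$\sum_{i \in S} |(\widehat M u)_i| \;-\; \sum_{i \notin S} |(\widehat M u)_i| \;\le\; 2 \sum_{i \in S} |e_i| \;\le\; 2 k \nu.$$
Property (P1) applied to $T = S^c$ (whose size is at most $(1/2-\gamma)k$) lower-bounds the left-hand side by $\gamma \|\widehat M u\|_1$, so $\|\widehat M u\|_1 \le 2 k \nu/\gamma$; plugging into (P2) gives $\|u\|_2 \le 2\nu/(\beta\gamma)$. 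Finally, any coordinate where the rounded $x'$ disagrees with $x \in \B^n$ must satisfy $|u_i| \ge 1/2$, so $\|x - x'\|_1 \le 4\|u\|_2^2 \le O(\nu^2/\gamma^2)$, which gives the desired bound after enlarging $\alpha$ to absorb $\beta$ into the leading constant.

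The main obstacle is the quantitative verification of (P1) with the specific constant $1/2 - \gamma/2$ (rather than some generic $c < 1$ one would get from a black-box invocation of Kashin-style results); this requires a direct Rademacher/Chernoff analysis exploiting the $\gamma$ gap, and is precisely what dictates the dependence $\alpha = \alpha(\gamma)$ in the statement. Everything else, in particular the LP-decoding calculation and the rounding step, is a mechanical consequence of (P1) and (P2).
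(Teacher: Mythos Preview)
The paper does not prove this theorem; it is quoted verbatim from \cite{DMT07} and used as a black box. So there is no ``paper's proof'' to match. That said, your sketch has a genuine gap, and it is worth naming precisely.

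You propose $\ell_1$ (LP) decoding as the algorithm $A$. But LP decoding is \emph{not} the algorithm that achieves the $1/2+\gamma$ threshold in \cite{DMT07}; as this very paper remarks in Section~6, the $1/2+\gamma$ result in \cite{DMT07} uses an inefficient (exhaustive-search) attack, while their LP-based attack only succeeds once at least a $0.761$ fraction of answers are accurate. Your property (P1), which is what would be needed to push LP decoding down to $1/2+\gamma$, is simply false for random $\pm 1$ matrices. For a generic $u$ (say with $\Theta(n)$ nonzero coordinates), the entries $(\widehat M u)_i$ are approximately i.i.d.\ Gaussians, and for a half-normal the largest $50\%$ of the samples carry about $80\%$ of the total $\ell_1$ mass. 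Hence for $|T|=(1/2-\gamma)k$ one has $\sum_{i\in T}|(\widehat M u)_i|\approx 0.8\,\|\widehat M u\|_1$, not $\le (1/2-\gamma/2)\|\widehat M u\|_1$. No choice of $\alpha=\alpha(\gamma)$ repairs this: increasing $k$ only makes the empirical top-half mass concentrate more tightly around $0.8$. Consequently the step ``(P1) lower-bounds the left-hand side by $\gamma\|\widehat M u\|_1$'' fails, and the whole LP analysis collapses at the $1/2+\gamma$ threshold. (This is exactly the phenomenon behind the $0.761$ barrier.)

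The correct route to the $1/2+\gamma$ statement is the inefficient one: let $A$ output any $x'\in\{0,1\}^n$ for which a $1/2+\gamma$ fraction of coordinates satisfy $|\tilde y_i-F(x')_i|\le \nu$, and show via an anti-concentration / small-ball estimate for Rademacher sums that whenever $\|x-x'\|_1 > 4\nu^2/\gamma^2$, the fraction of $i$ with $|F(x)_i-F(x')_i|\le 2\nu$ is below $2\gamma$, so $x'$ cannot pass. A union bound over $x'$ (which is where $k=\alpha(\gamma)n$ enters) then fixes a single good $F$. A separate minor point: your ``deterministic shift'' from $M$ to $\widehat M$ is not deterministic --- $F(x)-\tfrac12\widehat M x=\tfrac12 Jx$ depends on $\sum_j x_j$; the clean fix is to take the query matrix itself to have $\pm 1$ entries, which is permitted by the paper's definition of a counting query.
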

The following corollary follows immediately from Theorem~\ref{thm:thmDMT}. 
\begin{corollary}\label{cor:cor47}
For any $\delta'>0$, there are positive constants  $\gamma=\gamma(\delta'), \eta=\eta(\delta'), \alpha=\alpha(\delta')$  such that for $k=\alpha n$, there is a counting query $F: \B^n \rightarrow \mathbb{R}^k$ and an algorithm $A$ such that given $\tilde{y}$ which satisfies
$$
\Pr_{i \in [k]} [|\tilde{y}_i - F(x)_i|  \le \eta \sqrt{n}] \ge \frac{1}{2} + \gamma
$$
the output of $A$ on $\tilde{y}$ \ie $ A(\tilde{y}) =x'$ such that $x' \in \B^n$ and $\Vert x -x' \Vert_1\leq \delta' n$. 
\end{corollary}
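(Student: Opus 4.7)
The plan is to derive the corollary by applying Theorem~\ref{thm:thmDMT} with a specific choice of the error parameter $\nu$, chosen as a function of $\delta'$. Since Theorem~\ref{thm:thmDMT} gives an $\ell_1$ recovery guarantee of $4\nu^2/\gamma^2$, the only thing to arrange is that $4\nu^2/\gamma^2 \le \delta' n$ when $\nu = \eta\sqrt{n}$; this reduces to the algebraic constraint $4\eta^2/\gamma^2 \le \delta'$, which I can satisfy by picking $\eta$ small relative to $\gamma$ and $\delta'$.

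More concretely, I would proceed as follows. First, fix any convenient positive constant $\gamma = \gamma(\delta')$ (for instance $\gamma = 1/4$, although the precise value is immaterial as long as it is a positive constant depending only on $\delta'$). Second, invoke Theorem~\ref{thm:thmDMT} with this $\gamma$ to obtain the constant $\alpha = \alpha(\gamma) = \alpha(\delta') > 0$, the counting query $F : \B^n \to \mathbb{R}^k$ with $k = \alpha n$, and the decoding algorithm $A$. Third, set $\eta = \eta(\delta') \eqdef \gamma\sqrt{\delta'}/2$, and let $\nu = \nu(n) \eqdef \eta\sqrt{n}$ play the role of the noise tolerance in Theorem~\ref{thm:thmDMT}.

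Now suppose $\tilde{y} \in \mathbb{R}^k$ satisfies $\Pr_{i \in [k]}[\,|\tilde{y}_i - F(x)_i| \le \eta\sqrt{n}\,] \ge 1/2 + \gamma$. By construction this is exactly the hypothesis of Theorem~\ref{thm:thmDMT} with the chosen $\nu$, so applying the theorem yields $x' = A(\tilde{y}) \in \B^n$ with
\[
\|x - x'\|_1 \;\le\; \frac{4\nu^2}{\gamma^2} \;=\; \frac{4\eta^2 n}{\gamma^2} \;=\; \delta' n,
\]
which is the desired bound. The three constants $\gamma(\delta'), \eta(\delta'), \alpha(\delta')$ are positive and depend only on $\delta'$, as required.

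There is no real obstacle here; the corollary is just Theorem~\ref{thm:thmDMT} specialized to the regime where the per-coordinate noise scales as $\sqrt{n}$, and the only quantitative check is the trivial inequality $4\eta^2/\gamma^2 \le \delta'$. The one point worth flagging is that the statement of Theorem~\ref{thm:thmDMT} makes $\alpha$ depend on $\gamma$ but allows $\nu$ to be an arbitrary function of $n$, which is precisely what permits us to take $\nu = \eta\sqrt{n}$ while keeping the number of queries linear in $n$; this is what allows the noise threshold in the corollary to grow with $n$ at the $\sqrt{n}$ rate rather than being bounded.
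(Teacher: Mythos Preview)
Your proposal is correct and matches the paper's approach: the paper simply states that the corollary ``follows immediately from Theorem~\ref{thm:thmDMT},'' and your derivation is exactly that immediate specialization, choosing $\nu=\eta\sqrt{n}$ with $\eta=\gamma\sqrt{\delta'}/2$ so that $4\nu^2/\gamma^2=\delta' n$.
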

We now prove Theorem~\ref{thm:epsdel}.
\begin{proof}[of Theorem~\ref{thm:epsdel}] 

Let $X$ denote the uniform distribution over $\B^n$. First, using Lemma~\ref{lem:lem23}, we get  that over the randomness of the mechanism $M$ and the choice of $x \in X$, if we sample a transcript $t$ from $M(x,F)$, then for any positive $\mu$, the distribution $X|_{M(x,F)= t}$ is  a $\delta_t$-approximate strongly $2^{\epsilon + \mu}$-unpredictable sources where $\delta_t$ satisfies 
$$\mathop{\mathbb{E}}_{t \in M(x,F)}  \ [\delta_t] \le  2\delta \cdot \frac{ 1+ e^{-\epsilon - \mu}}{1-e^{-\mu}}$$.
Clearly, we can put $\mu=10$ and get that the distribution $X|_{M(x,F)= t}$ is  a $\delta_t$-approximate strongly $2^{\epsilon + 10}$-unpredictable sources where $\mathop{\mathbb{E}}_{t \in M(x,F)}  \ [\delta_t] \le 3 \delta$. By an application of Markov's inequality, we get that with probability $1-2\sqrt{\delta}$ over the choice of $x$ and the randomness of the mechanism $M$, the distribution $X|_{M(x,F)= t}$ is $2\sqrt{\delta}$-approximate strongly $2^{\epsilon + 10}$-unpredictable source.

We now apply corollary~\ref{cor:cor47}. In particular, we put $\delta'=\sqrt{\delta}$ and get that for some positive $\gamma,\eta,\alpha$ (which are functions of $\delta'$ and hence $\delta$), there is a counting query $F:\B^n \rightarrow \mathbb{R}^{\alpha n}$ and  an algorithm $A$ such that given $\tilde{y}$ which satisfies
$$
\Pr_{i \in [k]} [|\tilde{y}_i - F(x)_i|  \le \eta \sqrt{n}] \ge \frac{1}{2} + \gamma
$$
the output of $A$ on $\tilde{y}$ \ie $ A(\tilde{y}) =x'$ such that $x' \in \B^n$ and $\Vert x -x' \Vert_1\leq \sqrt{\delta} \cdot n$. Now, consider a mechanism $M$ which satisfies 
$$
\Pr_M [ \Pr_{i \in [k]} [|M(x,F)_i - F(x)_i|\le \eta \sqrt{n} ] \geq 1/2+\gamma] \geq \beta
$$
for $\beta=3\sqrt{\delta}$. Clearly such a mechanism $M$ is not $(\epsilon,\delta)$ differentially private because with probability at least $\beta = 3\sqrt{\delta}$, the algorithm $A$ will be able to predict at least $1- \sqrt{\delta}$ fraction of the positions which contradicts that with probability $1-2\sqrt{\delta}$, the distribution $X|_{M(x,F)= t}$ is a $2\sqrt{\delta}$ -approximate strongly $2^{\epsilon + 10}$-unpredictable source.
\end{proof}

\section{LP decoding, Euclidean sections and hardness of releasing $\ell$-way marginals}
In this section, we consider attacks on privacy  using linear programming. In particular, we use the technique of LP decoding (previously used in \cite{DMT07} in context of privacy) to give attacks which violate even minimal notions of privacy 
when $1- \epsilon_0$ (for some $\epsilon_0>0$) fraction of the queries are released with insufficient noise. We do this by establishing a connection between Euclidean sections and use of LP decoding in context of privacy which does not seem to have explicitly  appeared in the  literature before.  We remark that the relation between LP decoding and Euclidean spaces is very well known in context of compressed sensing \cite{CRTV05}. However, in case of privacy, 
the adversary is allowed to add small error to say $99\%$ of the entries and arbitrary error to the remaining $1\%$ of the entries. In context of compressed sensing however, the adversary is allowed to add error to only $1\%$ of the entries. 

We first describe how to use linear programming in context of privacy. Assume $x \in \mathbb{Z^{+}}^d$ is a database and $A: \mathbb{R}^d \rightarrow \mathbb{R}^{k}$ is a linear map  which represents a counting query with arity $k$ made on the database $x$.  Further, the right set of answers is given by $y = A \cdot x$. (To make sure that the queries are $1$-Lipschitz, all the entries of $A$ come from $[-1,1]$.) Suppose, $\tilde{y} \in \mathbb{R}^k$ is the answer returned by the mechanism. Then, consider the following optimization problem (which can be written as a linear program) :
\begin{equation}\label{eq:eqnLP}
\textrm{Minimize }\Vert y - \tilde{y}\Vert_1 \textrm{ subject to } y=A \cdot \tilde{x}
\end{equation}
We would like to understand the conditions such that the solution to the above linear program, call it $\tilde{x}$, is such that $\Vert x - \tilde{x} \Vert_1 $ is small.  To state our main theorem, we need to first define  a Euclidean section
and list some of its basic properties. 
\begin{definition}
$V \subseteq \mathbb{R}^k$ is said to be a $(\delta,d,k)$ euclidean section if $V$ is a linear subspace of dimension $d$ and for every $x \in V$, the following holds:
$$
\sqrt{k} \Vert x\Vert_2\geq \Vert x\Vert_1 \geq \delta \sqrt{k} \Vert x \Vert_2 
$$
A linear operator $A: \mathbb{R}^d \rightarrow \mathbb{R}^k$ is said to be $\delta$-Euclidean if the range of $A$ is a euclidean $(\delta,d,k)$  section.  We also note that because of Jensen's inequality, $\sqrt{k} \Vert x\Vert_2\geq \Vert x\Vert_1$ holds trivially. 
\end{definition}
We remark that when we say a subspace $A \subset \mathbb{R}^k$ is Euclidean, we simply mean that there is some constant $\delta>0$ such that $A$ is $(\delta,\dim(A),k)$ Euclidean. The next claim states a useful fact about Euclidean sections. We also introduce the following notation. For any $\ell_p$ norm defined on $\mathbb{R}^k$ and any $S \subseteq [k]$ and $x \in \mathbb{R}^k$, $\Vert x \Vert_{S,p}$ denotes the $\ell_p$ norm of the vector $x_S$ where $x_S$ is the projection of $x$ on the subset $S$. 
\begin{claim}\label{clm:euc}
Let $V \subseteq \mathbb{R}^k$ be a $(\delta, d , k)$ Euclidean section. Then for every $0 < \delta ' < \delta^2/4$, $S \subseteq [k]$ such  that $|S| \leq \delta' \cdot k$ and $x \in V$
$$
\Vert x \Vert _1  - 2 \cdot \Vert x \Vert_{S,1} \geq \beta\sqrt{k}  \Vert x \Vert_2
$$
where $\beta=\delta -2\sqrt{\delta'}$.
\end{claim}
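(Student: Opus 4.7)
The plan is to split the $\ell_1$ norm of $x$ across $S$ and its complement, lower bound the full $\ell_1$ norm using the Euclidean section hypothesis, and upper bound the $\ell_1$ mass on $S$ using Cauchy--Schwarz together with the cardinality bound on $S$.

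Concretely, first I would write
\[
\Vert x\Vert_1 - 2\Vert x\Vert_{S,1} = \Vert x\Vert_1 - 2\Vert x\Vert_{S,1}
\]
and note that by the Euclidean section property of $V$,
\[
\Vert x\Vert_1 \ \ge\ \delta\sqrt{k}\,\Vert x\Vert_2.
\]
Next, I would control $\Vert x\Vert_{S,1}$ by Cauchy--Schwarz applied to the $|S|$ coordinates of $x$ indexed by $S$:
\[
\Vert x\Vert_{S,1} \ \le\ \sqrt{|S|}\cdot \Vert x\Vert_{S,2} \ \le\ \sqrt{|S|}\cdot \Vert x\Vert_2 \ \le\ \sqrt{\delta' k}\cdot \Vert x\Vert_2,
\]
using $|S|\le \delta' k$ and $\Vert x\Vert_{S,2}\le \Vert x\Vert_2$ trivially. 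Plugging the two bounds together gives
\[
\Vert x\Vert_1 - 2\Vert x\Vert_{S,1} \ \ge\ \delta\sqrt{k}\,\Vert x\Vert_2 \ -\ 2\sqrt{\delta' k}\,\Vert x\Vert_2 \ =\ (\delta - 2\sqrt{\delta'})\sqrt{k}\,\Vert x\Vert_2 \ =\ \beta\sqrt{k}\,\Vert x\Vert_2,
\]
as desired. The hypothesis $\delta' < \delta^2/4$ is exactly what is needed to ensure $\beta > 0$, so the bound is nontrivial.

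There is essentially no obstacle in this proof; the only subtlety is remembering that the Euclidean section property is used only for the \emph{global} lower bound $\Vert x\Vert_1\ge \delta\sqrt{k}\Vert x\Vert_2$, while the $\ell_1$ mass on the small set $S$ is controlled purely by counting and Cauchy--Schwarz, with no appeal to the Euclidean section structure. This asymmetric use of the two norms is the whole content of the claim: in a Euclidean section the $\ell_1$ mass cannot be concentrated on a small coordinate set, and the quantitative loss when one removes a $\delta'$-fraction of coordinates is a clean $2\sqrt{\delta'}\sqrt{k}\Vert x\Vert_2$. This is exactly the property that later lets the LP decoding argument tolerate a constant fraction of wild responses.
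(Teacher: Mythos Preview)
Your proof is correct and matches the paper's argument essentially line for line: bound $\Vert x\Vert_{S,1}$ by $\sqrt{|S|}\,\Vert x\Vert_2$ via Cauchy--Schwarz (the paper phrases it as Jensen), then invoke the Euclidean section lower bound $\Vert x\Vert_1\ge\delta\sqrt{k}\,\Vert x\Vert_2$ and subtract. There is nothing to add.
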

\begin{proof}
Let $S \subseteq [k]$ and $|S| \leq \delta' \cdot k$. Then, by Jensen's inequality, we can say that 
$$
\Vert x \Vert_{S,1} \leq \sqrt{|S|} \cdot \Vert x \Vert_{S,2} \leq \sqrt{|S|} \cdot \Vert x \Vert_{2}
$$
This implies that 
$$
\Vert x \Vert_{1} - 2 \Vert x \Vert_{S,1} \geq \Vert x \Vert_{1}  - 2\sqrt{|S|} \cdot \Vert x \Vert_{2} \geq (\delta - 2\sqrt{\delta'}) \sqrt{k} \Vert x \Vert_2
$$
We get the stated result by putting $\beta = \delta - 2\sqrt{\delta'}$.
\end{proof}
\subsection*{Correctness of the LP decoding}
We next state a theorem which roughly says the following   : Assume $A$ is a linear map whose range is a Euclidean section and all the singular values of $A$ are at least $\sigma$. Also, let $F : (\mathbb{Z}^+)^d \rightarrow \mathbb{R}^k$ be the query corresponding to the linear map defined by $A$ (with $k=\Theta(d)$). Then, given noisy values of $A \cdot x$ where the noise is such that but for some (fixed) positive  fraction of the coordinates, all other coordinates are within $\alpha=o( \sigma)$ of the correct value, the linear program (\ref{eq:eqnLP}) gives $\tilde{x}$ such that $\Vert x - \tilde{x} \Vert_1 \ll d$. 
\begin{theorem}\label{thm:thmLPDe}
Let $A : \mathbb{R}^d \rightarrow \mathbb{R}^k$ be a full rank linear map ($k>d$) and all the singular values of $A$ are at least $\sigma$. Further, the range of $A$ (denoted by $\mathcal{L}(A)$) is a $(\delta, d ,k)$ Euclidean section. 
Let $F : (\mathbb{Z}^+)^d \rightarrow \mathbb{R}^k$ the query corresponding to $A$. Then, there exists $\gamma = \gamma(\delta)$ such that if
$$
\Pr_{i \in [k]} [|F(x)_i - \tilde{y}_i| \leq \alpha] \geq 1-\gamma
$$
then, any solution $\tilde{x}$ to the linear program (\ref{eq:eqnLP}) satisfies  $\Vert \tilde{x} - x \Vert_1 \le O(\alpha\sqrt{ kd}/\sigma)$ where the constant inside the $O(\cdot)$ notation depends on $\delta$. 
\end{theorem}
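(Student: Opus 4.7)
The plan is to reduce everything to a statement about the vector $\hat{z} = A(\tilde{x}-x)$, which automatically lies in the $(\delta,d,k)$ Euclidean section $\mathcal{L}(A)$. First I would set $e = F(x) - \tilde{y}$ and let $S = \{ i \in [k] : |e_i| > \alpha\}$. By hypothesis $|S| \le \gamma k$, and off $S$ we have the pointwise control $|e_i|\le \alpha$, so $\|e\|_{S^c,1}\le \alpha k$. I will choose $\gamma$ later so that $\gamma < \delta^2/4$, which is exactly the regime where Claim~\ref{clm:euc} applies.

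Next I would exploit LP optimality. Since $x$ is a feasible choice in (\ref{eq:eqnLP}), the optimum $\tilde{x}$ satisfies $\|A\tilde{x}-\tilde{y}\|_1 \le \|Ax-\tilde{y}\|_1$, i.e. $\|e+\hat{z}\|_1\le \|e\|_1$. Splitting both sides across $S$ and $S^c$ and applying the triangle inequality twice gives
\begin{equation*}
\|\hat{z}\|_{S^c,1} - \|e\|_{S^c,1} + \|e\|_{S,1} - \|\hat{z}\|_{S,1} \;\le\; \|e\|_{S,1} + \|e\|_{S^c,1},
\end{equation*}
which after cancellation reduces to $\|\hat{z}\|_1 - 2\|\hat{z}\|_{S,1} \le 2\|e\|_{S^c,1} \le 2\alpha k$. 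This is a purely combinatorial inequality about $\hat{z}\in\mathcal{L}(A)$; the adversarial noise on $S$ has been cleanly excised.

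Now I would invoke Claim~\ref{clm:euc} with $\delta' = \gamma$: choosing $\gamma=\gamma(\delta)$ so that $\beta := \delta - 2\sqrt{\gamma} > 0$, the Euclidean-section property yields $\|\hat{z}\|_1 - 2\|\hat{z}\|_{S,1} \ge \beta\sqrt{k}\,\|\hat{z}\|_2$. Combined with the LP bound this gives $\|\hat{z}\|_2 \le 2\alpha\sqrt{k}/\beta$. Since all singular values of $A$ are at least $\sigma$, we have $\|\hat{z}\|_2 = \|A(\tilde{x}-x)\|_2 \ge \sigma\|\tilde{x}-x\|_2$, hence $\|\tilde{x}-x\|_2 \le 2\alpha\sqrt{k}/(\beta\sigma)$. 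Finally, Cauchy--Schwarz in $\mathbb{R}^d$ gives $\|\tilde{x}-x\|_1 \le \sqrt{d}\,\|\tilde{x}-x\|_2 = O(\alpha\sqrt{kd}/\sigma)$, with the constant depending only on $\delta$ (through $\beta$), as claimed.

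The conceptually delicate step is the second one: isolating the right inequality from LP optimality so that the adversary's wild noise on the unknown set $S$ drops out entirely, leaving only the $\ell_1$-versus-$\ell_2$ comparison for vectors in $\mathcal{L}(A)$. Everything else is essentially bookkeeping: Claim~\ref{clm:euc} is tailor-made for the resulting inequality, and the passage from $\ell_2$ back to $\ell_1$ on $\mathbb{R}^d$ is just Cauchy--Schwarz. I also need to record that the choice of $\gamma$ depends only on $\delta$, which is automatic from $\gamma < \delta^2/4$.
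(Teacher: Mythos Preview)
Your proposal is correct and follows essentially the same approach as the paper's proof: set $\gamma$ below $\delta^2/4$ (the paper takes $\gamma=\delta^2/8$), use LP optimality together with a split over $S$ and $S^c$ to obtain $\|A(\tilde{x}-x)\|_1 - 2\|A(\tilde{x}-x)\|_{S,1} \le 2\alpha k$, then apply Claim~\ref{clm:euc} and the singular-value bound, and finish with Cauchy--Schwarz on $\mathbb{R}^d$. The only differences are cosmetic (sign convention for $e$, the specific value of $\gamma$, and the order in which the two ingredients are introduced).
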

\begin{proof}
First of all, using Claim~\ref{clm:euc}, we get that for any $S \subseteq [k]$ such that $|S| \leq \delta^2k/8$ and $z \in \mathbb{R}^d$ 
\begin{equation}\label{eqn:eqnLPd}
\Vert A \cdot z \Vert_{1} - 2 \Vert A \cdot z \Vert_{S,1} \geq (\delta -2 \sqrt{\delta'}) \sqrt{k} \Vert A \cdot z \Vert_2 \geq \frac{\delta \sqrt{k}}{4} \Vert A \cdot  z \Vert_2 \geq  \frac{\delta \sigma \sqrt{k}}{4} \Vert   z \Vert_2  
\end{equation}
The last inequality uses that the singular value of $A$ is at least $\sigma$. Now, assume that $\tilde{x}$ is the solution to the linear program (\ref{eq:eqnLP}). Also, assume that $\tilde{y} = A \cdot x + e$.  Set $\gamma  = \delta^2/8$ and then given the assumption on the noise, $e$ satisfies
$$
\Pr_{i \in [k]} [|e_i| >\alpha] \leq \delta^2/8
$$
Also, let $\tilde{x} = x +z$. The next few steps are identical to \cite{DMT07} but we do it here for the sake of completeness.  Since, $\tilde{x}$ is the solution to the linear program, we get
$$
\Vert A \cdot x - \tilde{y} \Vert_1\geq \Vert A \cdot \tilde{x} - \tilde{y} \Vert_1  
$$
$$
\Vert A \cdot x - \tilde{y} \Vert_1 \geq \Vert A \cdot (\tilde{x} - x) + A \cdot x  - \tilde{y} \Vert_1 
$$
$$
\Vert e \Vert_1 \geq \Vert A \cdot z - e\Vert_1 
$$

Let $S   = \{ i : |e_i | > \alpha \}$. Then, from the above, we get that 
$$
\Vert e \Vert_{S,1} + \Vert e \Vert_{\overline{S},1} \geq \Vert A \cdot z -  e \Vert_{S,1} + \Vert A \cdot z -  e  \Vert_{\overline{S},1} \geq \Vert e \Vert_{S,1}  -  \Vert A \cdot z \Vert_{S,1} + \Vert A \cdot z -  e  \Vert_{\overline{S},1}  
$$ 
We next use $ \Vert A \cdot z -  e  \Vert_{\overline{S},1}  \geq \Vert A \cdot z \Vert_{\overline{S},1}  -  \Vert e  \Vert_{\overline{S},1} $ on the right hand side of the above inequality to simplify and get 
$$
 \Vert e \Vert_{\overline{S},1} \geq  \Vert A \cdot z   \Vert_{\overline{S},1}   - \Vert e   \Vert_{\overline{S},1}   -  \Vert A \cdot z \Vert_{S,1} 
$$
$$
   \Vert A \cdot z   \Vert_{1} - 2 \cdot \Vert A \cdot z \Vert_{S,1} \leq 2  \Vert e   \Vert_{\overline{S},1} 
$$
$$
   \Vert A \cdot z   \Vert_{1} - 2 \cdot \Vert A \cdot z \Vert_{S,1} \leq 2 \alpha   |\overline{S}|
$$
$$
\Vert A \cdot z \Vert_1 - 2 \Vert A \cdot z \Vert_{S,1} \leq 2 \alpha (1-\delta^2/8) k
$$
Combining the above with (\ref{eqn:eqnLPd}), we get that 
$$
\Vert z \Vert_2 \leq \frac{8 \alpha (1-\delta^2/8) \sqrt{k}}{\delta \sigma} \quad \Rightarrow \quad \Vert z \Vert_1 \leq \frac{8 \alpha (1-\delta^2/8)\sqrt{k d} }{\delta \sigma} = O\left( \frac{\alpha \sqrt{kd}}{\sigma} \right)
$$
\end{proof}
\subsection{Releasing $\ell$-way marginals privately}
We now formally define the problem of releasing $\ell$-way marginals.  The universe $X$ is of the form $\{-1,1\}^{d'}$ and we have a counting query $F:( \mathbb{Z}^+)^d \rightarrow (\mathbb{R})^k$ where $k = \binom{d'}{\ell} \cdot 2^{\ell}$ and $d=2^{d'}$. For any $x \in ( \mathbb{Z}^+)^d$, each coordinate of $x$ is identified with an element of $\{-1,1\}^{d'}$. Note that $[d']$ can be identified with a set of $d'$ binary attributes and any $z \in \{-1,1\}^{d'}$ denotes a specific setting of the attributes. Hence, for $x \in ( \mathbb{Z}^+)^d$, the row associated with $z \in \{-1,1\}^{d'}$ counts how many elements have that specific setting of  the $d'$ attributes.

 Similarly, consider any $c \in \{-1,0,1\}^{d'}$ and define $|c|  = \sum_{j=1}^{d'} |c_i|$ \ie $|c|$ represents the number of non-zero entries in $c$.  We note that the set $S$ defined as 
$$
S = \{ c : c \in \{-1,0,1\}^{d'} \textrm{ and } |c| = \ell \}
$$
has size $2^{\ell} \cdot \binom{d'}{\ell} = k$. Thus, every $c \in S$  can be identified with some specific setting of a specific set of $\ell$ among the $d'$ attributes. Hence, we identify the set $S$ with $[k]$ and for any $y \in \mathbb{R}^k$, every coordinate of $y$ is identified with some specific $c \in S$. We now define $F$. Consider a database $x$ and a row corresponding to $c$. Let $Z$ be the set of $\ell$ attributes and $\Delta$ be their setting corresponding to $c$. Then the row of $F(x)$ corresponding to $c$ simply counts the number of elements in $x$ where the attributes in $Z$ are indeed set to $\Delta$.  We now define $F$ more formally. 

Now, for any $c \in \{-1,0,1\}^{d'}$ and $z \in \{-1,1\}^{d'}$, we define $F_c(z)$ as follows :
$$
F_c(z) = \prod_{j : c_j=1} \left( \frac{1+z_j}{2} \right) \prod_{j : c_j=-1} \left( \frac{1-z_j}{2} \right)
$$
In other words, $F_c(z)$ is $1$ iff the following holds for every $j$ : If $c_j=1$, then $z_j=1$ and if $c_j=-1$, then $z_j=-1$. 

We now define the counting query $F:( \mathbb{Z}^+)^d \rightarrow (\mathbb{R})^k$ as
$$
F(x) = \left(\sum_{z \in \{-1,1\}^{d'}} F_1(z), \sum_{z \in  \{-1,1\}^{d'}} F_2(z),\ldots,  \sum_{z \in  \{-1,1\}^{d'}} F_k(z)\right)
$$
We now state our main theorem of this section.
 \begin{theorem}
 Let $q,\ell \in \mathbb{N}$ be constant integers. Then, there exists a constant $\gamma = \gamma(q,\ell)>0$  such that any mechanism which releases the $\ell$-way marginals of a table of size $n$ over $d'$ attributes and $ n \le d'^{\ell-1} \log_{(q)} n$ by adding at most $\eta$ noise to $1-\gamma$ fraction of the queries where 
 $$
 \eta = o(\sqrt{n})
 $$
is attribute non-private. Further, the algorithm which violates attribute privacy is efficient and uses LP decoding. 
 \end{theorem}
Here $ \log_{(q)} n$ is an  iterated logarithm which is defined precisely later on. However, we wanted to state the main result of this section in the beginning itself before diving into the proof structure.

In order to go into the proof structure, we first  state the following theorem due to Kasiviswanathan \etal \cite{KRSU10} which is a weaker version of our result. We state their result only for the case when $d'^{\ell-1} \gg n \cdot \log^{2\ell-4} n$  because our subsequent improvements are valid (and possible) only in the regime of $d'^{\ell-1} \gg n$. 
\begin{theorem}\label{thm:thmKRSU}\cite{KRSU10}
Let $\ell \in \mathbb{N}$ be a constant and $n,d \in \mathbb{N}$ such that  $d'^{\ell-1} \gg n \cdot \log^{2\ell-4} n$. Then, for every mechanism $M$ which releases $\ell$-way marginals  of a database of size $n$ (and universe $\B^{d'}$) such that the noise for every single query is bounded by $\eta$ where
$$
\eta \ll \frac{\sqrt{n}}{\log^{\ell^2- \ell +1} n}
$$
is attribute non-private. In fact, the attack is an efficient algorithm based on $\ell_2$ norm minimization. 
\end{theorem}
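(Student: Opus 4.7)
The plan is to reduce the statement to an application of the LP decoding theorem (Theorem~\ref{thm:thmLPDe}) on a cleverly chosen submatrix of the $\ell$-way marginals query. To set up an attribute privacy attack, I would split the attributes: fix $Y \in (\B^{d'-1})^n$ (the publicly known attributes) and let $x \in \B^n$ denote the values of the last attribute for the $n$ individuals. The adversary restricts attention to the sub-collection $\mathcal{Q}$ of those marginal queries indexed by $c \in \{-1,0,1\}^{d'}$ with $c_{d'} \neq 0$ and $|c| = \ell$. For a fixed $Y$, each such query is an affine-linear function of $x$, so the noisy answers on $\mathcal Q$ yield an equation of the form $\tilde y = A x + e$, where $A = A(Y) : \mathbb{R}^n \to \mathbb{R}^{k'}$ with $k' = \binom{d'-1}{\ell-1} \cdot 2^\ell = \Theta(d'^{\ell-1})$.

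Next I would analyze the random matrix $A$ when the public attributes $Y$ are drawn uniformly from $\B^{(d'-1)n}$. Each row of $A$ is, up to a fixed sign, the entrywise (Hadamard) product of $\ell - 1$ i.i.d.\ Bernoulli vectors extracted from $Y$, with one such vector for each of the $\ell - 1$ attributes selected by the support of $c$. This is exactly the ``Hadamard product of random matrices'' setting studied by Rudelson~\cite{Rud11}. I would invoke his bound to conclude that, with high probability over $Y$, (i) every singular value of $A$ is at least $\sigma = \Omega(\sqrt{k'/n}) = \Omega(\sqrt{d'^{\ell-1}/n})$, and (ii) the range $\mathcal{L}(A) \subseteq \mathbb{R}^{k'}$ is a $(\delta, n, k')$-Euclidean section for some absolute constant $\delta > 0$. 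The regime $n \le d'^{\ell-1} \log_{(q)} n$ is exactly where Rudelson's estimates give these guarantees with only iterated-logarithmic slack, accounting for the shape of the hypothesis in the theorem.

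With the structural properties of $A$ in hand, the third step is a direct application of Theorem~\ref{thm:thmLPDe}. Choose $\gamma = \gamma(\delta)$ as provided by that theorem. For any mechanism whose output $\tilde y$ satisfies $|\tilde y_i - (Ax)_i| \le \eta$ on a $1 - \gamma$ fraction of the queries in $\mathcal Q$ (inherited from the mechanism's guarantee on the full query), solving the LP \eqref{eq:eqnLP} returns $\tilde x$ with
\[
\Vert \tilde x - x \Vert_1 \;\le\; O\!\left(\frac{\eta\sqrt{k' n}}{\sigma}\right) \;=\; O\!\left(\frac{\eta \sqrt{k' n}}{\sqrt{k'/n}}\right) \;=\; O(\eta \cdot n/\sqrt{n}\cdot \sqrt{n}/\sqrt{n})\;=\; O(\eta \sqrt{n}).
\]
Since $\eta = o(\sqrt{n})$, we obtain $\Vert \tilde x - x \Vert_1 = o(n)$. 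Rounding $\tilde x$ entrywise to $\B$ then yields $x' \in \B^n$ with $\Vert x' - x\Vert_1 = o(n)$, violating attribute privacy.

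The main obstacle is the matrix analysis in the second step, namely establishing both the singular-value lower bound and the Euclidean-section property for the Hadamard-product matrix that arises from restricting $\ell$-way marginals to one ``secret'' column of $Y$. The singular-value bound is exactly Rudelson's contribution; the Euclidean-section property requires translating his concentration estimates into an $\ell_1$-versus-$\ell_2$ lower bound valid uniformly over $\mathcal L(A)$, which is where the $\log_{(q)} n$ slack enters and is the technical heart of the argument. Once this structural fact is in place, the reduction to LP decoding and hence to Theorem~\ref{thm:thmLPDe} follows the template already laid out for counting queries in \cite{DMT07}, strengthened by our Euclidean-section viewpoint that allows a constant fraction of wild responses.
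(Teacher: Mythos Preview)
You have conflated two different theorems. The statement you were asked to prove is Theorem~\ref{thm:thmKRSU}, which is quoted from \cite{KRSU10}: it assumes the noise is bounded on \emph{every} marginal, the noise threshold carries the polylogarithmic factor $\log^{\ell^2-\ell+1} n$, the parameter regime is $d'^{\ell-1} \gg n\log^{2\ell-4} n$, and the attack is explicitly $\ell_2$-minimization (matrix inversion). What you have outlined instead is the proof of the paper's own main theorem of this section, which strengthens \cite{KRSU10} by (i) replacing $\ell_2$-minimization with LP decoding via Theorem~\ref{thm:thmLPDe}, (ii) tolerating a constant fraction of wild answers, and (iii) invoking Rudelson's bound \cite{Rud11} (Theorem~\ref{thm:Rud11}) to remove the polylog losses. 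The paper's treatment of Theorem~\ref{thm:thmKRSU} is the combination of Lemma~\ref{lem:lemKRSU} (the reduction to a singular-value lower bound for a Hadamard product of $\{0,1\}$-matrices, with the $\ell_2$ attack) and the subsequent lemma giving $\sigma_n(A) \ge \sqrt{d'^{\ell-1}}/\log^{\ell^2-\ell+1} n$ for random Hadamard products. Your write-up does not touch either ingredient; in particular you never use an $\ell_2$ attack, and you never explain where the specific polylog exponents in the hypothesis and conclusion come from.

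Two further concrete errors in your sketch, even viewed as a proof of the stronger theorem: first, Rudelson's bound gives $\sigma_n(A)=\Omega(\sqrt{d'^{\ell-1}})=\Omega(\sqrt{k'})$, not $\Omega(\sqrt{k'/n})$ as you wrote. Second, your displayed arithmetic does not track: with your stated $\sigma=\sqrt{k'/n}$ one gets $\eta\sqrt{k'n}/\sqrt{k'/n}=\eta\, n$, not $O(\eta\sqrt{n})$. With the correct $\sigma=\Omega(\sqrt{k'})$ the bound $O(\eta\sqrt{k'n}/\sqrt{k'})=O(\eta\sqrt{n})$ does follow, which is what Lemma~\ref{lem:aux} records.
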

Before, we glimpse into how they prove their result and our improvement on that, we need to describe the Hadamard product of matrices. 
\begin{definition}
Let $A_1, \ldots, A_{s} \in \mathbb{R}^{\ell_i \times n}$. Then, the Hadamard product of $A_1, \ldots, A_s$ is denoted by $A =A_1 \circ A_2 \ldots \circ A_s \in \mathbb{R}^{L \times n}$ where $L = \ell_1 \cdot \ldots \cdot \ell_s$ and is defined as follows :  Every row of $A$ is identified with a unique element of $[\ell_1] \times \ldots \times [\ell_s]$. Also corresponding to $i= (i_1,\ldots, i_s)$, 
$$
A[i,k] = \prod_{j=1}^s A_j[i_j,k]
$$
where $A[i,k]$ represents the element in row $i$ and column $k$. 
\end{definition}
The attack in \cite{KRSU10} is an efficient algorithm (is simply matrix inversion)   and is basically dependent on showing existence of a Hadamard product of small matrices such that all its singular values are large. In particular, they show the following reduction. 
\begin{lemma}\label{lem:lemKRSU}
Let $A_1  , \ldots, A_{\ell-1} \in \mathbb{R}^{d' \times n}$ such that  $A =A_1 \circ A_2 \ldots \circ A_{\ell-1}$ (with $d'^{\ell-1} >n$)  such that all singular values of $A$ are at least $\sigma$. Further, all entries of $A_1, \ldots, A_{\ell-1}$ are in the set $\B$. Then, any mechanism $M$ which adds $\ll (\sqrt{n} \cdot \sigma)/(\sqrt{d'^{\ell-1}})$ noise to every coordinate while releasing all the $\ell$-way marginals of a table of size $n$ with $d'$ attributes, is attribute non-private. 
\end{lemma}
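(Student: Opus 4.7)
The plan is to combine Lemma \ref{lem:lemKRSU}'s reduction from attribute non-privacy to the spectral/geometric properties of a Hadamard-product matrix with the LP-decoding guarantee of Theorem \ref{thm:thmLPDe}. The improvement over Theorem \ref{thm:thmKRSU} comes from two places: first, replacing their $\ell_2$-minimization-based attack with the $\ell_1$-minimization (LP decoding) attack, which is what lets us tolerate a constant fraction $\gamma$ of wild entries; second, using a stronger input to the reduction than just a singular-value bound, namely that the column span of the Hadamard product is a Euclidean section with a \emph{constant} distortion $\delta$.

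The first step is the reduction. As in Lemma \ref{lem:lemKRSU}, the attribute-non-privacy game fixes a table $Y \in (\{-1,1\}^{d'-1})^n$ of $d'-1$ known attributes and asks us to recover the unknown bit-column $x \in \{-1,1\}^n$ from noisy $\ell$-way marginals of $Y \circ x$. Standard manipulation of the marginal operator shows that the answers containing $x$ are of the form $A \cdot x + e$, where $A = A_1 \circ \cdots \circ A_{\ell-1}$ is the Hadamard product of $\ell-1$ matrices whose rows are determined by $Y$, and $A \in \{-1,1\}^{k \times n}$ with $k = \Theta(d'^{\ell-1})$. So it suffices to exhibit an instance of $Y$ for which $A$ satisfies the hypotheses of Theorem \ref{thm:thmLPDe}, and then run the LP of (\ref{eq:eqnLP}).

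The second step, and the technical heart of the proof, is to show that a \emph{random} choice of $Y$ (equivalently, of the $A_i$'s as i.i.d.\ uniform $\pm 1$ matrices) produces with positive probability an $A$ whose smallest singular value is at least $\sigma = \Omega(\sqrt{d'^{\ell-1}})$ and whose range is a $(\delta, n, k)$ Euclidean section for some constant $\delta > 0$. This is precisely what Rudelson's work \cite{Rud11} on Hadamard products of random $\pm 1$ matrices delivers: it provides a lower bound on $\sigma_{\min}$ of the form $\sqrt{d'^{\ell-1}}/\text{(slowly growing factor)}$, which accounts for the $\log_{(q)} n$ slack in the hypothesis $n \le d'^{\ell-1} \log_{(q)} n$, and, in the same regime, shows that the column span behaves like a Euclidean section with a $\delta$ that depends only on $\ell$ and $q$. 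By a probabilistic argument (Markov plus union bound on the two events), a fixed $Y$ realizing both bounds exists. I would pick $\gamma = \gamma(\delta) < \delta^2/8$ to match the condition of Theorem \ref{thm:thmLPDe}.

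Finally, assuming the mechanism outputs $\tilde y$ with $|\tilde y_i - (Ax)_i| \le \eta = o(\sqrt{n})$ on a $1-\gamma$ fraction of coordinates, Theorem \ref{thm:thmLPDe} applied with $\alpha = \eta$ yields a solution $\tilde x$ to (\ref{eq:eqnLP}) with
\[
\|\tilde x - x\|_1 \;=\; O\!\left(\frac{\eta \sqrt{k n}}{\sigma}\right) \;=\; O\!\left(\frac{\eta \sqrt{n \cdot d'^{\ell-1}}}{\sqrt{d'^{\ell-1}}/\text{polyloglog}}\right) \;=\; o(n),
\]
where the polyloglog factor is absorbed using the hypothesis $n \le d'^{\ell-1}\log_{(q)} n$. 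Rounding each coordinate of $\tilde x$ to the nearest $\pm 1$ at most doubles the $\ell_1$ error and gives a bit-vector $x' \in \{-1,1\}^n$ with $\|x - x'\|_1 = o(\|x\|_1)$, which is exactly the attribute-non-privacy guarantee. The efficiency follows because (\ref{eq:eqnLP}) is a linear program of polynomial size. The main obstacle is step two: invoking Rudelson's bound carefully enough to extract both a near-tight $\sigma_{\min}$ estimate and the Euclidean-section property from the same random Hadamard-product matrix; everything else is bookkeeping.
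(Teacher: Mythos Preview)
You are not proving the stated lemma. Lemma~\ref{lem:lemKRSU} is a result of \cite{KRSU10} that the paper cites without proof; its hypotheses are only that the Hadamard product $A$ has all singular values at least $\sigma$ and that the noise is bounded on \emph{every} coordinate. It says nothing about Euclidean sections, nothing about a fraction $\gamma$ of wild entries, and nothing about Rudelson's result. What you have written is instead a sketch of the \emph{main theorem} of the section (the one about $\ell$-way marginals with $o(\sqrt{n})$ noise on a $1-\gamma$ fraction of entries), which the paper obtains by combining Lemma~\ref{lem:aux} with Theorem~\ref{thm:Rud11}.

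Concretely, your ``second step'' invokes Rudelson to produce a random $A$ with both a singular-value bound and the Euclidean-section property, but Lemma~\ref{lem:lemKRSU} is stated for \emph{arbitrary} $A_i$'s satisfying only the singular-value hypothesis; there is no Euclidean-section assumption to appeal to, so Theorem~\ref{thm:thmLPDe} does not apply. The actual content of Lemma~\ref{lem:lemKRSU} is the reduction you wave away as ``standard manipulation of the marginal operator'': one must build a table $Y$ from the $\{0,1\}$ entries of the $A_i$'s so that a specific subset of the $\ell$-way marginals of $Y\circ x$ equals $A\cdot x$ (up to known affine shifts), and then observe that $\ell_2$-minimization recovers $x$ to $\ell_2$-error $O(\eta\sqrt{d'^{\ell-1}}/\sigma)$, which is $o(\sqrt{n})$ under the stated noise bound. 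Your proposal neither carries out this reduction nor uses the correct recovery argument for the hypotheses actually given.
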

Subsequently, to prove Theorem~\ref{thm:thmKRSU}, they proved the following lemma. 
\begin{lemma}
Let $D \sim \mathbb{R}^{d' \times n}$ be a distribution over matrices such that every entry of the matrix is  an independent and unbiased $\B$ random variable.  Let $A_1, \ldots, A_{\ell-1}$ be i.i.d. copies of random matrices drawn from the distribution $D$ and $A$ be the Hadamard product of $A_1, \ldots, A_{\ell-1}$. Then, provided that  $d'^{\ell-1} \gg n \cdot \log^{2\ell-4} n$, with probability $1-o(1)$, the smallest singular value of $A$ denoted by $\sigma_{n} (A)$ satisfies
$$
\sigma_n(A) \geq  \frac{\sqrt{d'^{\ell-1}}}{\log^{\ell^2- \ell +1} n}
$$
\end{lemma}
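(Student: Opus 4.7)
The plan is to bound $\sigma_n(A)^2 = \min_{\|x\|_2=1} \|Ax\|_2^2$ via a standard $\epsilon$-net argument on the unit sphere $S^{n-1} \subset \mathbb{R}^n$, combined with a concentration inequality for low-degree polynomials in the independent Rademacher entries of $A_1, \ldots, A_{\ell-1}$. First, I would compute the expectation: for any fixed $x \in S^{n-1}$, observe that the $i = (i_1,\ldots,i_{\ell-1})$-th coordinate of $Ax$ is $(Ax)_i = \sum_k x_k \prod_{j=1}^{\ell-1} A_j[i_j,k]$, and since the Rademacher variables $A_j[i_j,k]$ are independent with $A_j[i_j,k]^2 = 1$, we get $\mathbb{E}[(Ax)_i^2] = \sum_k x_k^2 = 1$ and therefore $\mathbb{E}[\|Ax\|_2^2] = d'^{\ell-1}$. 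So the target is to show concentration around this mean with a polylogarithmic slack.

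Second, for a fixed $x$, I would view $\|Ax\|_2^2$ as a polynomial of degree $2(\ell-1)$ in the independent $\pm 1$ variables $\{A_j[i,k]\}_{j,i,k}$ and invoke a tail bound for such polynomials (hypercontractivity / Latala-type inequalities, or equivalently a direct decoupling argument separating the $\ell-1$ tensor factors one at a time). Such inequalities yield, for any $t$, a bound of the shape
\[
\Pr\bigl[\bigl|\|Ax\|_2^2 - d'^{\ell-1}\bigr| > t \cdot d'^{\ell-1} \bigr] \leq \exp\bigl(-c \, (t / \mathrm{polylog})^{1/(\ell-1)}\bigr),
\]
which, when evaluated at the correct $t$, gives failure probability $\exp(-\Omega(n \log n))$ against a deviation factor of $\mathrm{polylog}(n)$ depending on $\ell$. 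The polylog losses accumulate because each level of decoupling (or each application of hypercontractivity to pass from $L^2$ to $L^q$ moments with $q = \Theta(\log n)$) costs one logarithmic factor per degree.

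Third, I would take a standard $\epsilon$-net $\mathcal{N} \subset S^{n-1}$ of size $|\mathcal{N}| \leq (3/\epsilon)^n = \exp(O(n))$, choose $\epsilon$ comparable to $1/\mathrm{poly}(n)$, and apply a union bound using the concentration from the previous step. Since the per-vector failure probability is $\exp(-\Omega(n \log n))$, this dominates the net's entropy and yields $\|Ax\|_2^2 \geq d'^{\ell-1}/\log^{2(\ell^2-\ell+1)} n$ simultaneously for every $x \in \mathcal{N}$ with probability $1-o(1)$. To pass from the net to all of $S^{n-1}$, I would bound the operator norm $\|A\|_{op}$ (itself a standard random-matrix estimate giving $\|A\|_{op} \lesssim \sqrt{d'^{\ell-1}} \cdot \mathrm{polylog}$ w.h.p.) and use the Lipschitz estimate $|\|Ax\|_2 - \|Ax'\|_2| \leq \|A\|_{op} \|x - x'\|_2$ to absorb the discretization error, giving the desired lower bound on $\sigma_n(A)$ after taking square roots.

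The main obstacle is step two: the concentration of $\|Ax\|_2^2$. The Hadamard product creates intricate dependencies between rows (two rows sharing some indices $i_j$ share whole factors $A_j[i_j,\cdot]$), so one cannot treat $(Ax)_i$ as independent across $i$. The cleanest route is iterative decoupling: condition on $A_2, \ldots, A_{\ell-1}$ and view $\|Ax\|_2^2$ as a quadratic form in the Rademacher entries of $A_1$, apply Hanson-Wright to control deviations in terms of the Frobenius/operator norms of an effective matrix, then recurse into $A_2$ after bounding those norms, and so on. Tracking the polylog factors through $\ell-1$ iterations gives the stated $\log^{\ell^2-\ell+1} n$ exponent. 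The assumption $d'^{\ell-1} \gg n \log^{2\ell-4} n$ is precisely what ensures the net-probability tradeoff closes.
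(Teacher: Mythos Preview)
The paper does not actually prove this lemma: it is quoted verbatim as a result of Kasiviswanathan, Rudelson, Smith, and Ullman~\cite{KRSU10}, introduced with ``Subsequently, to prove Theorem~\ref{thm:thmKRSU}, they proved the following lemma,'' and no argument is supplied. So there is no proof in the present paper to compare your proposal against.

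That said, your outline is exactly the strategy used in~\cite{KRSU10}: compute $\mathbb{E}\|Ax\|_2^2 = d'^{\ell-1}$, obtain concentration for a fixed unit vector $x$ by viewing $\|Ax\|_2^2$ as a low-degree polynomial in the Rademacher entries and applying decoupling/hypercontractivity (indeed iterating Hanson--Wright over the factors $A_1,\ldots,A_{\ell-1}$), take a union bound over an $\epsilon$-net on $S^{n-1}$, and pass to the whole sphere via an operator-norm bound on $A$. The polylogarithmic losses in the statement come precisely from the $\ell-1$ rounds of decoupling and the moment amplification needed to beat the $\exp(Cn)$ net entropy, and the hypothesis $d'^{\ell-1} \gg n\log^{2\ell-4} n$ is what makes that tradeoff close. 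One caution: in your step two you should be explicit that the relevant scale in the tail bound is the \emph{variance} (or more generally the Frobenius norms of the coefficient tensors at each degree), not the mean $d'^{\ell-1}$ itself; writing the deviation as $t\cdot d'^{\ell-1}$ and then asking for $\exp(-c(t/\mathrm{polylog})^{1/(\ell-1)}) \le \exp(-\Omega(n\log n))$ does not literally close unless you track that the variance is much smaller than $(d'^{\ell-1})^2$. Getting the exact exponent $\ell^2-\ell+1$ requires bookkeeping those norms through the recursion, which is where~\cite{KRSU10} spends its effort.
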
 
This lemma clearly sufficed to prove Theorem~\ref{thm:thmKRSU}. One of the main avenues of improvement in the result in \cite{KRSU10} is that because the attack is matrix inversion (or in other words, $\ell_2$ distance minimization), it is inherently a noise sensitive method. In other words, it is not evident from the result in \cite{KRSU10} that even if $1$ of the $\ell$-way marginals is answered with arbitrary noise, if there is still an attack which can violate attribute privacy. 
In this respect, LP decoding seems like a much more powerful method and is usually robust even if a fraction of the queries are answered with wild noise. In particular, we can combine the technique in proof of Theorem~\ref{thm:thmLPDe} and Lemma~\ref{lem:lemKRSU} to get the following lemma (We do not prove it here because it is a straightforward combination of the two techniques). 
\begin{lemma}\label{lem:aux}
Let $A_1  , \ldots, A_{\ell-1} \in \mathbb{R}^{d' \times n}$ such that  $A =A_1 \circ A_2 \ldots \circ A_{\ell-1}$ (with $d'^{\ell-1} >n$)  and all entries of $A_1, \ldots, A_{\ell-1}$ are in the set $\B$. Also, all the singular values of $A$ are at least $\sigma$ and the range of $A$ \ie $\mathcal{L}(A)$ is a $(\delta, n , d'^{\ell-1})$ Euclidean section. Then, there exists a constant $\gamma= \gamma(\delta) >0$ such that any mechanism which answers at least $1-\gamma$ fraction of the $\ell$-way marginals with noise bounded by $\alpha$ is attribute non-private provided $\frac{ \alpha \sqrt{d^{\ell'-1}  \cdot n}}{ \sigma} =o(n)$ or in other words, $\alpha = o(\sqrt{n}\sigma/\sqrt{d^{\ell'-1}})$
\end{lemma}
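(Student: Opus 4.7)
The plan is to derive Lemma \ref{lem:aux} by marrying the matrix-encoding reduction used to prove \cite{KRSU10}'s Lemma \ref{lem:lemKRSU} with the LP-decoding guarantee of Theorem \ref{thm:thmLPDe}. First I would set up the attribute-privacy attack. Because each $A_j \in \mathbb{R}^{d' \times n}$ has $\B$-valued entries, I can allocate one ``block'' of $d'$ non-sensitive attributes per $A_j$: for $j \in [\ell-1]$ and $r \in [d']$, set user $i$'s $(j,r)$-th non-sensitive attribute to $A_j[r,i]$. This completely specifies the non-sensitive part $Y \in (\B^{(\ell-1)d'})^n$ of the database, while the sensitive vector $x \in \B^n$ is what the adversary must recover from the mechanism's noisy answers on $Y \circ x$. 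Among the $\ell$-way marginals released are the $d'^{\ell-1}$ queries of the form ``fix the $(j,r_j)$-th non-sensitive attribute to $1$ for every $j \in [\ell-1]$ and fix the sensitive attribute to $1$''; the true answer to such a marginal is $\sum_i A_1[r_1,i] \cdots A_{\ell-1}[r_{\ell-1},i]\,x_i = (Ax)_{(r_1,\ldots,r_{\ell-1})}$. Collecting the mechanism's responses to this subset yields $\tilde{y} \in \mathbb{R}^{d'^{\ell-1}}$ with $\tilde{y} \approx Ax$.

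Next I would control the noise profile of $\tilde{y}$. By hypothesis at least a $1-\gamma$ fraction of the $\ell$-way marginals have error at most $\alpha$, so the restriction to this distinguished subset has at most a $\gamma'$ fraction of coordinates where $|\tilde{y}_i - (Ax)_i| > \alpha$. Picking $\gamma$ small enough (as a function of $\delta$ and the tolerance from Theorem \ref{thm:thmLPDe}) ensures $\gamma' \le \gamma(\delta)$; this is the only place the constant in Lemma \ref{lem:aux} is pinned down. Now I invoke Theorem \ref{thm:thmLPDe} directly on $A$: by assumption $A$ is full rank, has smallest singular value $\ge \sigma$, and its range is a $(\delta,n,d'^{\ell-1})$ Euclidean section, so the LP (\ref{eq:eqnLP}) run on $(\tilde{y}, A)$ returns $\tilde{x}$ with
\[
\Vert \tilde{x}-x\Vert_1 \;=\; O\!\left(\frac{\alpha\sqrt{d'^{\ell-1}\cdot n}}{\sigma}\right),
\]
and the hypothesis $\alpha = o(\sqrt{n}\sigma/\sqrt{d'^{\ell-1}})$ drives the right-hand side to $o(n)$. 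Finally, since $x \in \B^n$, rounding $\tilde{x}$ coordinatewise to $\B$ at most doubles the error (a coordinate with $|\tilde{x}_i - x_i| < 1/2$ recovers $x_i$ exactly, others contribute at most $1$), producing $x' \in \B^n$ with $\Vert x'-x\Vert_1 \le 2\Vert \tilde{x}-x\Vert_1 = o(n)$, which for a typical $x$ with $\Vert x\Vert_1 = \Theta(n)$ certifies attribute non-privacy.

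The main obstacle I anticipate is the accounting between ``fraction of $\ell$-way marginals'' and ``fraction of rows of $A$''. The $d'^{\ell-1}$ marginals realising rows of $A$ are only a subset of all $\binom{(\ell-1)d'+1}{\ell}\cdot 2^\ell$ released $\ell$-way marginals, so an adversarial mechanism could concentrate all of its ``wild'' responses inside this subset, inflating the bad fraction by the polynomial-in-$d'$ ratio of total to useful marginals. The cleanest way around this is to read the lemma's ``$1-\gamma$ fraction of the $\ell$-way marginals'' as referring to the useful subset used in the reduction, which matches how $\gamma$ is treated as an absolute constant later. If instead one insists on the fraction being over all marginals, then either $\gamma$ has to shrink polynomially in $d'$, or one must enrich the reduction so that every marginal involving the sensitive attribute is captured by some Hadamard-product row (so that wild responses cannot be hidden away from the useful ones). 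I would pursue whichever of these is demanded by the downstream theorem.
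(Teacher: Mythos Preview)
Your proposal is essentially what the paper has in mind: the paper does not actually write out a proof of Lemma~\ref{lem:aux}, stating only that it ``is a straightforward combination'' of the encoding in Lemma~\ref{lem:lemKRSU} and the LP-decoding guarantee of Theorem~\ref{thm:thmLPDe}. Your write-up is precisely that combination---encode the $A_j$'s as blocks of non-sensitive attributes so that a distinguished family of $\ell$-way marginals realizes $Ax$, then invoke Theorem~\ref{thm:thmLPDe} on the resulting noisy $\tilde y$---so you are aligned with the intended argument.

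Your flagged obstacle (the ratio between ``all $\ell$-way marginals'' and the $d'^{\ell-1}$ marginals that hit rows of $A$) is a real subtlety that the paper simply does not address; since the useful marginals are only a $\Theta(1/d')$ fraction of the total, a constant $\gamma$ over \emph{all} marginals does not by itself bound the bad fraction on the useful subset. Reading the $1-\gamma$ guarantee as applying to the useful subfamily (as you suggest) is the most natural fix and is consistent with how the downstream theorem treats $\gamma$ as an absolute constant; this is the interpretation you should adopt.
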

To describe the main technical result of Rudelson \cite{Rud11}, we need to define iterated logarithms. 
\begin{definition}
For $r \in \mathbb{N}$, we define $\log_{(r)}n $ as follows :
\begin{itemize}
\item $\log_{(1)}n = \max \{ \log_2 n,1 \} $ 
\item $\log_{(r)}n = \log_{(1)} \ (\log_{(r-1)} n)$
\end{itemize}
\end{definition}
\begin{theorem}\label{thm:Rud11}\cite{Rud11}
Let $q,\ell \in \mathbb{N}$ be constants. Also, let  $D \sim \mathbb{R}^{d' \times n}$ be a distribution over matrices such that every entry of the matrix is  an independent and unbiased $\B$ random variable. 
 Let $A_1, \ldots, A_{\ell-1}$ be i.i.d. copies of random matrices drawn from the distribution $D$ and $A$ be the Hadamard product of $A_1, \ldots, A_{\ell-1}$. Then, provided that  $d'^{\ell-1} \gg n \log_{(q)} n$, with probability $1-o(1)$, the smallest singular value of $A$ denoted by $\sigma_{n} (A)$ satisfies
$$
\sigma_n(A) = \Omega(\sqrt{d'^{\ell-1}})
$$
Also, the range of $A$ is a $(n,d'^{\ell-1},\gamma(q,\ell))$ Euclidean section for some $\gamma(q,\ell)>0$. 
 \end{theorem}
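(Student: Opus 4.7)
The plan is to establish both the smallest-singular-value bound and the Euclidean-section property for $A = A_1 \circ \cdots \circ A_{\ell-1}$ via an $\epsilon$-net argument on $S^{n-1}$, combined with small-ball probability bounds for polynomial chaos of order $\ell-1$. First, for the singular value bound, I fix $v \in S^{n-1}$. The coordinate of $Av$ indexed by $i = (i_1, \ldots, i_{\ell-1}) \in [d']^{\ell-1}$ is $\sum_{k=1}^n v_k \prod_{j=1}^{\ell-1} A_j[i_j, k]$, a Rademacher chaos of degree $\ell-1$ with $\mathbb{E}[(Av)_i^2] = \|v\|_2^2 = 1$ by orthogonality of distinct product terms. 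The natural bound on $\|Av\|_2^2$ via a Paley--Zygmund argument is complicated by the dependence across rows (rows $i$ and $i'$ share the factor $A_j[i_j, \cdot]$ whenever $i_j = i'_j$); de la Pe\~na--Gin\'e style decoupling lets me replace the dependent chaos by a fully independent version at the cost of constants. A hypercontractive fourth-moment estimate then yields $\Pr[\|Av\|_2 \le c\sqrt{d'^{\ell-1}}] \le \exp(-c' n \log_{(q)} n)$ for each fixed $v$ under the stated hypothesis $d'^{\ell-1} \gg n \log_{(q)} n$.

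Next I pass from a single vector to all of $S^{n-1}$ by taking an $\epsilon$-net $\mathcal{N}$ of cardinality $(3/\epsilon)^n$ and applying a union bound, which is precisely what the hypothesis $n \le d'^{\ell-1} \log_{(q)} n$ is tuned to support. To extend from $\mathcal{N}$ to arbitrary $v$, I need the operator norm bound $\|A\| = O(\sqrt{d'^{\ell-1}})$, which follows from a matrix-Bernstein argument applied level by level to the Hadamard factors and absorbs the discretization error. Combining gives $\sigma_n(A) = \Omega(\sqrt{d'^{\ell-1}})$ with probability $1-o(1)$.

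For the Euclidean-section claim, by homogeneity it suffices to show that every $y = Av$ in the range satisfies $\|y\|_1 \ge \gamma(q,\ell) \sqrt{d'^{\ell-1}} \|y\|_2$. I would prove that for each fixed $v$, a constant fraction of the $d'^{\ell-1}$ coordinates of $Av$ have magnitude at least $c \sqrt{d'^{\ell-1}} / \sqrt{d'^{\ell-1}} = c$ (in the normalized regime) by small-ball/anti-concentration for order-$(\ell-1)$ chaos applied coordinate by coordinate, then union-bound over an $\epsilon$-net as before; combined with $\|y\|_2 = \Theta(\sqrt{d'^{\ell-1}})$ from the first part, Cauchy--Schwarz delivers the section bound uniformly.

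The principal obstacle is obtaining small-ball probabilities for polynomial chaos of degree $\ell-1$ that are sharp enough to survive the union bound. The classical Erd\H{o}s--Littlewood--Offord inequality handles the linear case ($\ell=2$) optimally, but at each additional product level the best available concentration estimates lose a logarithmic factor, and controlling these losses requires Rudelson--Vershynin style arguments on the geometry of the kernel of random matrices together with careful decoupling. The iterated logarithm $\log_{(q)} n$ in the hypothesis is exactly the slack needed so that $q$ successive applications of this logarithmic loss remain absorbed by the ambient dimension $d'^{\ell-1}$, which is why the improvement from the $\log^{2\ell-4} n$ factor in Kasiviswanathan~\emph{et al.} to the iterated-log factor of Rudelson is the technically delicate step.
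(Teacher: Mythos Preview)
The paper does not prove this theorem. It is stated with the citation \cite{Rud11} and used as a black box: the paper's contribution in this section is to observe that LP decoding plus Euclidean sections suffices (Lemma~\ref{lem:aux}) and then to invoke Rudelson's result to obtain the application to $\ell$-way marginals. So there is no ``paper's own proof'' to compare your attempt against.

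That said, a brief comment on your sketch as a proof of Rudelson's theorem itself. The high-level architecture you propose---$\epsilon$-net on $S^{n-1}$, small-ball estimates for the chaos $\sum_k v_k \prod_j A_j[i_j,k]$, decoupling to handle row dependence, and an operator-norm bound to extend from the net to the whole sphere---is the standard shape for results of this type and is broadly in line with how Rudelson proceeds. However, the specific tail you write down, $\Pr[\|Av\|_2 \le c\sqrt{d'^{\ell-1}}] \le \exp(-c' n \log_{(q)} n)$ for a \emph{fixed} $v$, is not something a Paley--Zygmund plus hypercontractivity argument will give you: those tools yield only constant (or mildly decaying) small-ball probability for a single chaos, not exponentially small probability. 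The exponential gain needed to beat the $(3/\epsilon)^n$ net comes instead from the fact that $\|Av\|_2^2$ is a sum of $d'^{\ell-1}$ such chaoses, and one must control their joint behavior despite the dependence across rows. This is exactly where the work lies, and where the iterated-logarithm loss enters; your last paragraph identifies this correctly in spirit, but the earlier quantitative claim overstates what the individual-vector step delivers.
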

 Theorem~\ref{thm:Rud11} and Lemma~\ref{lem:aux} immediate imply our main theorem which we restate here for convenience. 
 \begin{theorem}
 Let $q,\ell \in \mathbb{N}$ be constant integers. Then, there exists a constant $\gamma = \gamma(q,\ell)>0$  such that any mechanism which releases the $\ell$-way marginals of a table of size $n$ over $d'$ attributes and $ n \le d'^{\ell-1} \log_{(q)} n$ by adding at most $\eta$ noise to $1-\gamma$ fraction of the queries where 
 $$
 \eta = o(\sqrt{n})
 $$
is attribute non-private. Further, the algorithm which violates attribute privacy is efficient and uses LP decoding. 
 \end{theorem}

\section{Noise lower bounds for blatant non-privacy}
In this section, we prove lower bounds on the noise required to prevent blatant non-privacy while answering random counting queries. Dinur and Nissim \cite{DN03}, in their seminal paper, had shown that answering $O(n \log^2 n)$ subset sum queries  with $o(\sqrt{n})$ noise results in blatant non-privacy. In other words, they had proven the following theorem. 
\begin{theorem}\cite{DN03}
For every $n \in \mathbb{N}$, there is a counting query $F :\B^n \rightarrow \mathbb{R}^k$ for $k=O(n \log^2n)$ and an algorithm $A$ such that if $M(x,F)$ is the answer returned by the mechanism for the database $x$ and query $F$ and
$$
\forall i \  \ |M(x,F)_i - F(x)_i| = o(\sqrt{n})
$$
then
$$
A(M(x,F)) =x ' : \Vert x - x' \Vert_1 = o(n)
$$
\end{theorem}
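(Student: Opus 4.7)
The plan is to use a random linear query together with a combinatorial feasibility attack. Take $F(x) = (\langle q_1, x\rangle,\dots,\langle q_k, x\rangle)$ where $q_1,\dots,q_k$ are independently and uniformly sampled from $\{-1,+1\}^n$ and $k = C n\log^2 n$ for a suitable constant $C$. Each coordinate of $F$ is a $\{-1,+1\}$-linear form, so it is $1$-Lipschitz and $F$ is a legitimate counting query. The attack $A$, on input $\tilde y$, returns any $x' \in \B^n$ satisfying $|\langle q_j, x'\rangle - \tilde y_j| \le \eta$ for every $j$, where $\eta = o(\sqrt n)$ is the noise bound from the hypothesis. Since $x$ itself is a feasible point, $A$ always returns some $x'$; efficiency is not required, so an exhaustive search suffices.

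To prove correctness I would fix $x' \in \B^n$ with $\|x - x'\|_1 = m \ge \epsilon n$ and analyse a single query. Letting $v = x - x' \in \{-1,0,1\}^n$ with $\|v\|_1 = m$, the quantity $\langle q_j, v\rangle$ is distributed as a Rademacher sum of length $m$, so by the lower tail bound for Rademacher sums (Corollary~\ref{cor:cor1}) there are absolute constants $c, p > 0$ with
$$\Pr_{q_j}\bigl[\,|\langle q_j, v\rangle| \ge c\sqrt{m}\,\bigr] \ge p.$$
Because $\eta = o(\sqrt n) = o(\sqrt m)$, for all sufficiently large $n$ the event $|\langle q_j, v\rangle| > 2\eta$ occurs with probability at least $p$, in which case $|F_j(x') - \tilde y_j| \ge |\langle q_j, v\rangle| - |F_j(x) - \tilde y_j| > \eta$, so the candidate $x'$ fails the $j$-th consistency check.

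The queries being independent, a Chernoff bound gives $\Pr[x'\text{ passes all }k\text{ checks}] \le (1-p)^k \le e^{-pk}$. Union-bounding over the at most $2^n$ bad candidates $x' \in \B^n$ with $\|x - x'\|_1 \ge \epsilon n$, the probability that \emph{any} such $x'$ is feasible is bounded by $2^n e^{-pk}$. For $k = \Omega(n)$ with a large enough constant this is already $o(1)$; the extra $\log^2 n$ slack in the statement permits taking $\epsilon = 1/\polylog(n) = o(1)$ while still beating the $2^n$ union bound, which is exactly what is needed to strengthen the conclusion from $\|x-x'\|_1 < \epsilon n$ for a fixed $\epsilon$ to $\|x-x'\|_1 = o(n)$. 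Hence for some ``good'' realisation of $q_1,\dots,q_k$, the deterministic query $F$ together with $A$ witnesses blatant non-privacy uniformly over all $x \in \B^n$.

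The only genuinely quantitative ingredient is the anti-concentration estimate invoked from Corollary~\ref{cor:cor1}: a Rademacher sum of length $m$ must exceed $c\sqrt m$ in absolute value with constant probability, uniformly in $m$. This is classical (either a fourth-moment Paley--Zygmund computation on $\langle q_j, v\rangle^2$ or a Berry--Esseen comparison with a centred Gaussian), and once it is in place everything else is a routine Chernoff-plus-union-bound argument. The mild subtlety is arranging the two scales $\epsilon(n) \to 0$ and $\eta = o(\sqrt n)$ coherently so that the $\log^2 n$ overhead in $k$ indeed compensates for driving $\epsilon$ to $0$; that bookkeeping, rather than any deep idea, is the main thing to get right.
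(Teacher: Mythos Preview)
The paper does not actually prove this theorem; it is stated as a citation of the Dinur--Nissim result, and the paper only remarks afterward that ``the algorithm $A$ in the above result is efficient and uses linear programming.'' So there is no in-paper proof to compare against. Your argument is nonetheless essentially correct as a proof of the statement as written, and in fact it mirrors almost exactly the argument the paper \emph{does} give for its own extension (the subsequent theorem on small universe size): random $\pm 1$ linear queries, exhaustive search over candidates, Rademacher anti-concentration for a single query, then Chernoff plus a union bound to fix a good query matrix.

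Two remarks. First, your attack is brute-force, whereas the algorithm in \cite{DN03} is an efficient LP; that efficiency is precisely where the $\log^2 n$ overhead in the query count originates. Your combinatorial argument actually needs no such overhead. Once $m=\Vert x-x'\Vert_1 > 400\,\eta^2$, Corollary~\ref{cor:cor1} (applied with all weights equal to $1$ on $m$ coordinates) gives $\Pr[|\langle q_j,x-x'\rangle|>\sqrt m/10]\ge 9/10$, and $\sqrt m/10>2\eta$, so a single query separates with probability at least $9/10$ \emph{independently of $\epsilon$}. Hence $(1/10)^k\cdot 4^n<1$ already at $k=O(n)$, and the output satisfies $\Vert x-x'\Vert_1\le 400\,\eta^2=o(n)$ directly. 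Your suggestion that the $\log^2 n$ slack is what allows driving $\epsilon\to 0$ is therefore off the mark: no extra queries are needed for that step in the inefficient attack.

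Second, a minor bookkeeping point: to obtain a single deterministic $F$ working uniformly over all databases $x$, the union bound must range over all pairs $(x,x')$, i.e.\ $4^n$ rather than $2^n$. You note the uniformity in your last sentence, so this is only a slip in the displayed count and changes nothing quantitatively.
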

The algorithm $A$ in the above result is efficient and uses linear programming. Since then, several improvements were made to this result including the results in \cite{DMT07} where the same conclusion was achieved under the weaker hypothesis that 
$$
\Pr_{ i \in [k]} [ |M(x,F)_i - F(x)_i| = o(\sqrt{n}) ] \ge 1/2 +\eta
$$
for any $\eta>0$. While the attack was inefficient, they also showed how to use LP decoding to get an efficient attack and they could achieve this under the  weaker hypothesis 
$$
\Pr_{ i \in [k]} [ |M(x,F)_i - F(x)_i| = o(\sqrt{n}) ] \ge 0.761$$
Further, in the results in \cite{DMT07}, the domain of the database was $\mathbb{R}^n$ as opposed to $\{0,1\}^n$. 
However, all these results achieved blatant non-privacy when the size of the universe was same as the size of the database namely $n$. In this section, we consider the modified setting where the size of the universe is smaller than size of the database. As the next theorem shows, when the universe size is smaller, the noise required to prevent blatant non-privacy is much larger.

\begin{theorem}
For every $n,d \in \mathbb{N}$ and $\eta>0$, there is an algorithm $A$ and   a counting query $F : (\mathbb{Z}^+)^d \rightarrow \mathbb{R}^k$ with $k=8(d \cdot \log n) /\eta^2$ such that if $M(x,F)$ is the answer returned by the mechanism and
$$
\Pr_i [|M(x,F)_i - F(x)_i| = o\left( \frac{\dd}{\sqrt{d}} \right)] \ge 1/2+ \eta
$$
then, 
$$
A(M(x,F)) =x ' : \Vert x - x' \Vert_1 = o(\dd)
$$
Similarly, for every $n,d \in \mathbb{N}$ and $\theta =\theta(n,d) \in \mathbb{R}^{+} \le n$, there is an algorithm $B$ and   a counting query $F : (\mathbb{Z}^+)^d \rightarrow \mathbb{R}^k$ with $k= 2d \log n \cdot  \exp \left( \frac{2d \theta^2}{\dd^2}\right) $ such that if $M(x,F)$ is the answer returned by the mechanism and
$$
\forall i  \quad  [|M(x,F)_i - F(x)_i| = o(\theta)] 
$$
then, 
$$
B(M(x,F)) =x ' : \Vert x - x' \Vert_1 = o(\dd)
$$
\end{theorem}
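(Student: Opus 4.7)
My plan is to extend the Dinur--Nissim attack to the small-universe regime using random $\{\pm 1\}$-linear queries combined with (a) anti-concentration (small-ball) and (b) large-deviation (reverse-Hoeffding) estimates for Rademacher sums. Define the counting query by $F(x) = (\langle r_1, x\rangle, \ldots, \langle r_k, x\rangle)$ where $r_1, \ldots, r_k$ are drawn independently and uniformly from $\{-1, +1\}^d$. Both algorithms $A$ and $B$ output any $\hat x \in (\mathbb{Z}^+)^d$ with $\|\hat x\|_1 \le n$ that is consistent with the released answers in the appropriate sense (on a $(1/2 + \eta)$-fraction of coordinates in part~1, and on every coordinate in part~2). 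Since the true $x$ is itself consistent, the whole argument reduces to showing that for every far candidate $\hat x$ with $\|x - \hat x\|_1 \ge \gamma n$, a single random query distinguishes $F(x)$ from $F(\hat x)$ with adequate probability, so that Chernoff plus a union bound over the $\binom{n+d}{d} \le (2n)^d = 2^{O(d \log n)}$ candidate histograms of weight at most $n$ rules out any such $\hat x$. The critical object in both parts is $z = x - \hat x \in \mathbb{Z}^d$ with $\|z\|_1 \ge \gamma n$; since $z$ has at most $d$ nonzero entries, Cauchy--Schwarz gives $\|z\|_2 \ge \|z\|_1/\sqrt d \ge \gamma n / \sqrt d$.

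For part~1, fix the noise scale $\nu = o(n/\sqrt d)$. Intersecting the consistency set of $x$ (size $\ge (1/2 + \eta)k$) with that of any returned $\hat x$ gives $\ge 2\eta k$ indices $i$ on which $|\langle r_i, z\rangle| \le 2\nu$. The standard Rademacher small-ball estimate (essentially an Erd\H{o}s--Littlewood--Offord / Berry--Esseen type bound) gives $\Pr_r[|\langle r, z\rangle| \le t] = O(t/\|z\|_2)$, so at $t = 2\nu = o(\|z\|_2)$ this probability is $o(1)$ and in particular bounded by $\eta$ for large $n$. A Chernoff bound then makes $\ge 2\eta k$ such bad coordinates unlikely at rate $\exp(-\Omega(\eta^2 k))$ per candidate, and the union bound over $2^{O(d \log n)}$ candidates forces $k = \Theta(d \log n / \eta^2)$, matching the stated $k = 8 d \log n/\eta^2$.

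For part~2 the hypothesis is stronger (noise $o(\theta)$ on every coordinate), so the attack only needs some single query to distinguish. I would split on the structure of $z$: if $\|z\|_\infty \ge 2\theta$ then a symmetrisation argument conditional on the other coordinates of $r$ yields $\Pr_r[|\langle r, z\rangle| \ge 2\theta] \ge 1/2$; otherwise $\|z\|_\infty < 2\theta$ and a Berry--Esseen-type Gaussian approximation for the normalised Rademacher sum, combined with the standard Gaussian lower tail, yields $\Pr_r[|\langle r, z\rangle| \ge 2\theta] \ge c \exp(-C \theta^2/\|z\|_2^2) \ge c \exp(-C' d \theta^2 / n^2)$. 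The probability that every one of the $k$ independent queries fails to distinguish is then at most $(1 - c \exp(-C' d \theta^2 / n^2))^k$, and the choice $k = 2d \log n \cdot \exp(2 d \theta^2/n^2)$ makes this smaller than $2^{-3 d \log n}$, which absorbs the union bound over the $2^{O(d \log n)}$ candidate $\hat x$. The main technical obstacle is establishing this large-deviation lower bound uniformly over all $z \in \mathbb{Z}^d$ without any a priori hypothesis on $\|z\|_\infty$: when a single coordinate dominates the $\ell_2$ mass the Rademacher sum is no longer Gaussian-like and a naive Paley--Zygmund argument on $\exp(t \langle r, z\rangle)$ breaks down, which is exactly why the case split above is needed. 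Once that large-deviation estimate is in hand, the remainder is routine Chernoff-plus-union-bound bookkeeping.
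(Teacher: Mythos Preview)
Your overall plan---random $\pm 1$ linear queries $F(x) = (\langle r_1, x\rangle, \ldots, \langle r_k, x\rangle)$, a brute-force search over the at most $(n+1)^d = 2^{O(d\log n)}$ histograms of weight $\le n$, an anti-concentration estimate for the Rademacher sum $\langle r, z\rangle$ with $z = x - \hat x$, and Chernoff plus a union bound---is exactly the paper's approach. The difference lies only in the anti-concentration tool, and here there is both a gap and an unnecessary detour.

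The gap is in part~1. Your claimed small-ball bound $\Pr_r[|\langle r, z\rangle| \le t] = O(t/\|z\|_2)$ is false for general $z$. Take $z = (a, -a, 0, \ldots, 0)$, which arises as $x - \hat x$ for the legitimate histograms $x = (a, 0, \ldots)$ and $\hat x = (0, a, \ldots)$: then $\langle r, z\rangle = a(r_1 - r_2)$ is zero with probability exactly $1/2$, so $\Pr[|\langle r, z\rangle| \le t] \ge 1/2$ for \emph{every} $t \ge 0$ and is certainly not $o(1)$. Neither Erd\H{o}s--Littlewood--Offord nor Berry--Esseen yields your bound when the mass of $z$ sits on a couple of coordinates; this is the same phenomenon you yourself flag in part~2.

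The detour is in part~2. The case split on $\|z\|_\infty$ that you call ``the main technical obstacle'' is precisely what the paper avoids by invoking a large-deviation inequality of Montgomery-Smith (1990) for Rademacher sums. Via the $K$-functional $K(z,t) = \inf_{z'}\{\|z - z'\|_1 + t\|z'\|_2\}$ and the elementary lower bound $K(z,t) \ge \min\{\|z\|_1/2,\ t\|z\|_1/(2\sqrt d)\}$, Montgomery-Smith gives a uniform estimate $\Pr[|\langle r, z\rangle| > \theta] \gtrsim \exp(-Cd\theta^2/\|z\|_1^2)$ with no hypothesis whatsoever on $\|z\|_\infty$; the paper records this as its Theorem~B.1 and uses it for \emph{both} parts of the proof. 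Your Berry--Esseen route in the small-$\|z\|_\infty$ case would in any event break down once $\theta/\|z\|_2$ exceeds a constant, since then the Berry--Esseen error (of order $\|z\|_\infty/\|z\|_2$, hence comparable to $\theta/\|z\|_2$ in your Case~2) swamps the Gaussian tail $\exp(-\theta^2/(2\|z\|_2^2))$ you are trying to lower-bound.
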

Before going ahead with the proof, we remark that while both the algorithms $A$ and $B$, as described in the proof are inefficient, the former can be made efficient using linear programming (along the lines of \cite{DN03,DMT07}). We choose not to do it for the sake if simplicity. 
\begin{proof}
The algorithm $A$ is as follows : 
\begin{itemize}
\item Enumerate over all $x' \in (\mathbb{Z}^+)^d$ such that $\Vert x \Vert_1 \le n$. 
\item Return $x'$ if $\Pr_i [|M(x',F)_i - F(x)_i| = o\left( \frac{\dd}{\sqrt{d}} \right)] \ge 1/2+ \eta/4$
\end{itemize}
It is clear that the algorithm returns an answer as $x'=x$ satisfies the second condition. Now, consider any $c>0$ and $x'$ and $x$ which differ by $c\dd$.  Then, choose $r_1, \ldots, r_k \in \{-1,1\}^{n}$ uniformly at random and let us define the linear query $F: (\mathbb{Z}^+)^d \rightarrow \mathbb{R}^k$   as follows :
$$
F(x) = (\sum_{j=1}^d x_j \cdot r_{1j}, \ldots, \sum_{j=1}^d x_j \cdot r_{kj} ) 
$$
Now, consider $F_\ell (x) - F_{\ell}(x') = \sum_{j=1}^d (x_j-x'_j) \cdot r_{\ell j} = \sum_{j=1}^d z_j \cdot r_{\ell k}$ where $z= x -x'$ with $\Vert z \Vert_1 \geq c\dd$. Then, by Theorem~\ref{thm:thmrad}, $\exists  c'= c'(c,\eta)$ such that 
$$
\Pr_{r_{\ell} \in \{-1,1\}^d} \left[|\sum_{j=1}^d (x_j-x'_j) \cdot r_{\ell j}| \leq \frac{c'\dd}{\sqrt{d}}  \right] \leq \eta/100
$$
Let us define the set $S = \left\{ \ell :  |\sum_{j=1}^d (x_j-x'_j) \cdot r_{\ell j}| \geq \frac{c'\dd}{\sqrt{d}}  \right\}$. Then a Chernoff bound implies that 
\begin{equation}\label{eqn:union}
\Pr_{r_1, \ldots, r_k} [|S| 
> (\eta \cdot k)/4 ] \leq 1-\exp(-\eta^2 k/4) 
\end{equation}
Putting $ k = \frac{8}{\eta^2} \cdot d \cdot \log n$, we get that the probability is strictly bounded by $n^{-2d}$. Note that the total number of pairs $x,x' \in (\mathbb{Z}^+)^d$ such that $\Vert x \Vert_1, \Vert x' \Vert_1 \leq n$ is bounded by $n^{2d}$. Thus applying the union bound with (\ref{eqn:union}), we get that we can fix $r_1,\ldots, r_k$ and get $F: (\mathbb{Z}^+)^d \rightarrow \mathbb{R}^k$ such that  if $x,x' \in (\mathbb{Z}^+)^d$, $\Vert x \Vert_1\leq n$, $\Vert x' \Vert_1 \leq n$ and $\Vert x-x' \Vert_1 \geq c\dd$, then
$$
\Pr_{i \in [k]} \left[|F(x)_i - F(x')_i| \geq \frac{(c'\dd)}{\sqrt{d}}\right] > 1-\eta/4
$$
Using the hypothesis \ie 
$$
\Pr_i \left[|M(x,F)_i - F(x)_i| = o\left( \frac{\dd}{\sqrt{d}} \right)\right] \ge 1/2+ \eta
$$
we get that for all $x'$ such that $\Vert x -x' \Vert_1 \geq cn$, 
$$
\Pr_i \left[|M(x,F)_i - F(x')_i| = o\left( \frac{\dd}{\sqrt{d}} \right)\right] < 1/2+ \eta/4
$$
Thus, when we have the query $F$ described above,  the algorithm $A$ will never return $x'$ if $\Vert x-x' \Vert_1 \geq c\dd$ and hence the correctness is proven. 

We now come to the second part of the theorem. The algorithm $B$ is described as follows : 
\begin{itemize}
\item Enumerate over all $x' \in (\mathbb{Z}^+)^d$ such that $\Vert x \Vert_1 \le n$. 
\item Return $x'$ if $\forall i$, $ [|M(x',F)_i - F(x)_i| = o(\theta)$
\end{itemize}
Again, first of all, we observe the algorithm $B$ always returns an answer as $x'=x$ satisfies the second condition. 
To prove the second part of the  result, we define $F :  (\mathbb{Z}^+)^d \rightarrow \mathbb{R}^k$ exactly in the same way that we did earlier. Now, consider any $\ell \in [k]$. Now, consider $F_\ell (x) - F_{\ell}(x') = \sum_{j=1}^d (x_j-x'_j) \cdot r_{\ell j} = \sum_{j=1}^d z_j \cdot r_{\ell k}$ where $z= x -x'$ with $\Vert z \Vert_1 \geq c\dd$. Then, by Theorem~\ref{thm:thmrad},
$$
\Pr_{r_{\ell} \in \{-1,1\}^d} \left[|\sum_{j=1}^d (x_j-x'_j) \cdot r_{\ell j}| \geq c \cdot \theta  \right] \ge \exp \left( \frac{2d \theta^2}{\dd^2}\right) 
$$
This means that putting $k =  \exp \left( \frac{2d \theta^2}{\dd^2}\right)  \cdot 2d \log n$
$$
\Pr_{r_1,\ldots, r_{k} \in \{-1,1\}^d} \left[ \exists \ell \in[k] : |\sum_{j=1}^d (x_j-x'_j) \cdot r_{\ell j}| \geq c \cdot \theta  \right] >1- n^{-2d}$$
As before, by a union bound, we can fix $r_1, \ldots, r_k$ and get a query $F: (\mathbb{Z}^+)^d \rightarrow \mathbb{R}^k$ such that for any $x, x' \in (\mathbb{Z}^+)^d$, $\Vert x \Vert_1 \le n$, $\Vert x' \Vert_1 \le n$ and $\Vert x - x' \Vert_1 \geq c\dd$, 
$$
\exists i \in [k] : \quad |F(x)_i - F(x')_i | \geq c \cdot \theta 
$$
Using the hypothesis \ie
$$
\forall i  \quad  [|M(x,F)_i - F(x)_i| = o(\theta)] 
$$
Hence, we get that (for all $x'$ such that $\Vert x -x' \Vert_1 \geq c\dd$)
$$
\exists i \in [k] : \quad |M(x,F)_i - F(x')_i | = \Omega(\theta)
$$
This shows that algorithm $B$ will never return $x'$ if $\Vert x-x' \Vert_1 \geq c\dd$ and hence the correctness is proven. 
\end{proof}
\section*{Acknowledgements} 
First and foremost, I would like to  thank  Cynthia Dwork for introducing me to the problems discussed in this paper,  immense help with the presentation and the technical help. Even though she declined to co-author the paper, without her contributions, this paper would not have existed.  This work was almost entirely done during a very enjoyable summer in 2010 at MSR Silicon Valley while the author was a summer intern with her.  I would also like to thank Salil Vadhan for his kind permission to include the results of subsection~\ref{sec:infoloss} in this paper. 

The author would like to thank Moritz Hardt and Mark Rudelson for very helpful conversations. The question of getting a lower bound for blatant non-privacy with dependence on universe size came up in a discussion with Moritz Hardt.  I would like to thank Mark for answering countlessly many questions about random matrices and anti-concentration. I also had useful conversations about this work with Ilya Mironov, Elchanan Mossel, Omer Reingold, Adam Smith, Alexandre Stauffer, Kunal Talwar, and Salil Vadhan. 

I would also like to thank the SODA 2012 and TCC 2012 reviewers for many useful comments including pointing out an error in the earlier proof of Lemma~\ref{lem:mutual}.

\bibliography{macros,luca}
\appendix
\section{Construction of databases with large differences}
In this section, we give constructions of sets of vectors in $(\mathbb{Z}^+)^d$ (in other words, sets of databases) which were used in the results in Section~\ref{sec:diffpriv}.
\begin{claim}\label{clm:clm0}
There exists a constant $C>1$ such that for any $d \in \mathbb{N}$, $0<\epsilon<1$  and $\log d \le k \le d/C$, there is an integer $a \ge (\log(d/k)/160\epsilon)$ such that  it is possible to construct $2^{k/20}$ vectors $x_1, \ldots, x_{2^{k/20}} \in \mathbb{Z^{+}}^{d}$ with the following properties :
\begin{itemize}
\item Every entry of $x_i$ is either $0$ or $a$ 
\item $\forall i$, $\Vert x_i \Vert_1 \leq k/80\epsilon$ and $\Vert x_i - x_j \Vert_1\geq k/160\epsilon$
\end{itemize}
\end{claim}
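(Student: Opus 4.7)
The plan is to use the probabilistic method: pick random $s$-subsets $S_1, \ldots, S_N \subseteq [d]$ and define $x_i = a \cdot \mathbf{1}_{S_i}$. The two required constraints then become $a s \leq k/(80\epsilon)$ and $a \cdot |S_i \triangle S_j| \geq k/(160\epsilon)$. Since $|S_i \triangle S_j| = 2s - 2|S_i \cap S_j|$, arranging $a s \approx k/(160\epsilon)$ makes the separation condition equivalent to the combinatorial constraint $|S_i \cap S_j| \leq s/2$ for every $i \neq j$.

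Concretely, take $s = \lfloor k/\log_2(d/k) \rfloor$ and $a = \lceil k/(160\epsilon s) \rceil$. Then $a \geq k/(160\epsilon s) \geq \log_2(d/k)/(160\epsilon)$, and $a s \leq k/(160\epsilon) + s \leq k/(80\epsilon)$ provided $s \leq k/(160\epsilon)$; this last inequality holds once $C$ is large enough that $\log_2(d/k) \geq 160\epsilon$ for all $\epsilon \in (0,1)$, say $C \geq 2^{160}$. Next, sample $N = 2^{k/20}$ independent uniform $s$-subsets $S_1,\ldots,S_N$ of $[d]$ and bound the probability of a bad pair by
$$
\Pr\bigl[|S_i \cap S_j| > s/2\bigr] \;\leq\; \binom{s}{\lceil s/2 \rceil} \left(\frac{s}{d}\right)^{\lceil s/2 \rceil} \;\leq\; 2^{s} \left(\frac{s}{d}\right)^{s/2},
$$
using the standard fact that for a fixed set $T$ of size $t$ one has $\Pr[T \subseteq S_j] \leq (s/d)^t$. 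A union bound over $\binom{N}{2} \leq 2^{k/10}$ pairs reduces the task to verifying $k/10 + s < (s/2)\log_2(d/s)$. Substituting $s \approx k/\log_2(d/k)$ gives $\log_2(d/s) \geq \log_2(d/k) + \log_2\log_2(d/k)$, so the right-hand side is at least $k/2 + o(k)$ while the left is $k/10 + o(k)$, and the inequality holds with room to spare for $C$ large enough. A realization of the random experiment with no bad pair then gives the desired supports, and setting $x_i = a \cdot \mathbf{1}_{S_i}$ completes the construction.

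The main obstacle I anticipate is carrying the constants through cleanly, especially near the regime $\log_2(d/k) \approx \log_2 C$ where the rounding losses in the definitions of $a$ and $s$, the additive $+s$ in the $as$ bound, and the additive $+s$ in the union bound estimate must all be absorbed simultaneously into one choice of the absolute constant $C$. Two small sanity checks I would carry out explicitly: (a) $s \geq 1$, which follows from the hypothesis $k \geq \log d$ (so that $k \geq \log_2(d/k)$); and (b) $s \leq d$ so that $s$-subsets even exist, which is immediate from $s \leq k \leq d/C$.
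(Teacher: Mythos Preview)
Your proposal is correct and follows essentially the same route as the paper: reduce the claim to finding a family of $2^{k/20}$ subsets of $[d]$, each of size $s \approx k/\log(d/k)$, with pairwise intersections at most a constant fraction of $s$, and then take $x_i = a\cdot\mathbf{1}_{S_i}$ for a suitable integer $a$. The only difference is in how the set system is obtained: the paper invokes the combinatorial-design theorem of \cite{EFF85} (with parameters $\ell = \lfloor k/\log(d/k)\rfloor$, $\rho=\ell/4$, $m=2^{k/20}$) as a black box, whereas you reprove the needed special case directly by drawing random $s$-subsets and applying a union bound. Since the cited design construction is itself proved by the same probabilistic/greedy argument, the two proofs are equivalent in substance; yours is self-contained, the paper's is shorter by citation. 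Your anticipated constant-chasing near $k\approx\log d$ does work out: the key inequality $(s/2)\log_2(d/s)>k/10+s$ reduces to $s>k/(5L)$ for $L=\log_2(d/k)\ge\log_2 C$ large, and this holds both when $k\ge 2L$ (where $s\ge k/L-1\ge k/(2L)$) and when $k<2L$ (where $s\ge 1>k/(5L)$).
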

\begin{proof}
We use construction of combinatorial designs from \cite{EFF85,RRV99}. Namely, the main theorem in \cite{EFF85} states that it is possible to construct sets $S_1, \ldots, S_m \subseteq [d]$ with the following properties : 
\begin{itemize}
\item $\forall i$, $|S_i| = \ell$
\item  $\forall i \not =j$, $|S_i \cap S_j | \leq \rho$ 
\end{itemize}
provided $d \geq  C' \cdot \frac{\ell^2 \cdot m^{1/\rho}}{\rho}$ for some large constant $C'$. We now see that if we put $\ell=\lfloor k/\log(d/k) \rfloor $, $\rho = \ell/4$ and $m=2^{k/20}$, then indeed $d \geq  C' \cdot \frac{\ell^2 \cdot m^{1/\rho}}{\rho}$ provided $C$ is sufficiently large compared to $C'$. 

Now, consider the set $A = \{y_1, \ldots, y_{2^{k/20}} \}$ be the characteristic vectors of the sets $S_1, \ldots, S_{2^{k/20}}$.   Now, we observe that setting $x_i = \lfloor (\log(d/k)/80\epsilon) \rfloor\cdot y_i$ achieves all the stated conditions. 
\end{proof}
\begin{claim}\label{clm:clm1}
For any $d,d' \in \mathbb{N}$, $d' \le d$ and $1/40 \geq \epsilon>0$, $n=d'/(4\epsilon)$, there exists $S \subseteq \mathbb{Z^{+}}^{d}$ such that the following conditions hold:
\begin{itemize}
\item $\forall x \in S $, $\Vert x \Vert_1 \le n$
\item $\forall x, y \in S$ and $x \not =y$, $\Vert x-y \Vert_1 \ge n/10$
\item $|S| \ge 2^{4d'/5}$
\end{itemize}
\end{claim}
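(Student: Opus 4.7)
The plan is a Gilbert--Varshamov greedy covering argument inside a large-alphabet hypercube. Set $a = \lfloor 1/(4\epsilon)\rfloor$, so $a \ge 10$ because $\epsilon \le 1/40$, and consider the set $Q := \{0,1,\ldots,a\}^{d'}$ embedded in $(\mathbb{Z}^+)^d$ by zero-padding the last $d - d'$ coordinates. Each $x \in Q$ has $\Vert x\Vert_1 \le a d' \le d'/(4\epsilon) = n$, so condition (1) is automatic for any subset. Build $S \subseteq Q$ greedily: initialise $S = \emptyset$ and repeatedly add any $x \in Q$ whose $\ell_1$ distance from every existing element of $S$ is at least $n/10$. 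Condition (2) then holds by construction.

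For the size bound (3), at termination the $\ell_1$-balls of radius $n/10$ around elements of $S$ cover $Q$, hence
\[
|S|\;\ge\;\frac{(a+1)^{d'}}{V},\qquad V := |\{z \in \mathbb{Z}^{d'} : \Vert z\Vert_1 < n/10\}|,
\]
since each $x \in Q$ covered by $y \in S$ corresponds to a lattice shift $z = x-y$ with $\Vert z\Vert_1 < n/10$ (and restricting $x$ to $Q$ only shrinks the count). The task reduces to showing $V^{1/d'} < (a+1) \cdot 2^{-4/5}$ uniformly for $a \ge 10$. I would use the generating-function identity $\sum_{r \ge 0} |B_r|\, x^r = (1+x)^{d'}/(1-x)^{d'+1}$, where each coordinate contributes $1 + 2x/(1-x)$ (zero plus positive plus negative values), together with a saddle-point estimate giving $|B_{\alpha d'}|^{1/d'} \to \psi(\alpha) := (1 + x_\alpha)/((1 - x_\alpha)\, x_\alpha^{\alpha})$ with $x_\alpha = (-1+\sqrt{1+\alpha^2})/\alpha$.

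The tight case is $\alpha = a/10 = 1$, i.e.\ $a = 10$, where $|B_{d'}| = \sum_k \binom{d'}{k}^2 2^k$ is the central Delannoy number and $\psi(1) = 3 + 2\sqrt{2} \approx 5.83$, giving ratio $11/5.83 \approx 1.89 > 2^{4/5} \approx 1.74$; for larger $a$ the ratio only improves, approaching $5/e \approx 1.84$ as $\alpha \to \infty$. The main technical obstacle is converting the asymptotic estimate into a clean finite-$d'$ bound $|B_r| \le C\cdot \psi(a/10)^{d'}$, which is a routine Laplace-method calculation on the saddle-point integral; the slack between $\psi(a/10)$ and $(a+1)/2^{4/5}$ absorbs the polynomial correction for all $d'$ above a small absolute constant, and the remaining small-$d'$ cases can be handled directly.
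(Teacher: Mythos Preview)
Your route is workable in spirit but considerably more involved than the paper's, and it has a gap in the tight-case analysis. The paper does not run Gilbert--Varshamov over the large alphabet $\{0,\ldots,a\}$ at all: it simply takes a binary code $C'\subseteq\{0,1\}^{d'}$ of suitable rate and relative distance, zero-pads to length $d$, and then \emph{scales} every codeword by $a=\lfloor 1/(4\epsilon)\rfloor$. Scaling multiplies both $\|\cdot\|_1$ and all pairwise $\ell_1$ distances by $a$, so the whole large-alphabet volume computation, the Delannoy numbers, and the saddle-point estimate are bypassed in favor of the standard binary GV bound. Your observation that every $x\in Q$ has $\|x\|_1\le n$ already hints at this: restricting to the two-letter alphabet $\{0,a\}^{d'}$ loses nothing and collapses the problem to a binary one.

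The concrete gap in your argument is the identification $\alpha=a/10$. The ball radius is $n/10=d'/(40\epsilon)$, so the correct normalized radius is $\alpha=1/(40\epsilon)$, which for a fixed value of $a=\lfloor 1/(4\epsilon)\rfloor$ ranges over the whole interval $[a/10,(a+1)/10)$; the hard case is the \emph{upper} endpoint, not the lower. For $\epsilon$ just above $1/44$ one has $a=10$ but $\alpha$ arbitrarily close to $1.1$, and your own formula gives $\psi(1.1)\approx 6.34$, which exceeds the required threshold $(a+1)\cdot 2^{-4/5}=11/2^{4/5}\approx 6.32$. Hence for this sliver of $\epsilon$ the greedy covering yields fewer than $2^{4d'/5}$ points and the argument fails as written. (For $a\ge 11$ the inequality does go through at the upper endpoint, so the failure is confined to $a=10$.) Passing to $\{0,a\}^{d'}$ removes the difficulty entirely, since distances are then multiples of $a$ and the volume estimate reduces to a binary-entropy bound. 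As an aside, the paper's own choice of binary parameters---rate $4/5$ with relative distance $1/9$---is itself too aggressive (it violates the MRRW bound), so both arguments need the constants retuned; but the scaling idea is the right structural simplification and avoids your saddle-point machinery altogether.
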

\begin{proof}
Consider a set $C \subseteq \{0,1\}^d$ with the following two properties. 
\begin{itemize}
\item $|C| \geq 2^{4d'/5}$
\item $\forall x, y \in C$, $x \not = y$, $\Vert x-y\Vert_1 \ge d'/9$
\item $\forall x \in C$, $\Vert x \Vert_1 \leq d'$. 
\end{itemize}
Such a set $C$ exists. To see this, consider an error correction code $C' \subseteq \B^{d'}$ with distance $d'/9$ and rate $4/5$. Such a code exists via probabilistic method. Now, the set $C$ is constructed as
$$
C = \{ x \equiv y \circ 0^{d-d'} : y \in C' \}
$$
  Now, consider the set $S \in \mathbb{Z^{+}}^{d}$ such that 
$$
S = \{x : x = \lfloor(1/4\epsilon)\rfloor\cdot z \textrm { and } z \in C \}
$$
We claim that the set $S$ satisfies the required conditions. The first and the third parts of claim are obvious. Note that for any $x , y \in S$, $\Vert x - y \Vert_1 \geq (d'/9) \cdot  \lfloor(1/4\epsilon)\rfloor \geq d'/(40 \epsilon) \geq n/10 $. The penultimate inequality uses that $(1/4\epsilon) \ge 10$. 
\end{proof}
The above claim worked for $\epsilon<1$. We now prove a claim which works in the regime of $\epsilon>1$. 
\begin{claim}\label{clm:clm6}
There exists $C>0$ such that for any $d,d' \in \mathbb{N}$, $\epsilon>1$ and $d' \leq d/(C \cdot 2^{5\epsilon})$ , $n=d'/(4 \epsilon)$, there exists $S \subseteq \mathbb{Z^{+}}^{d}$ such that the following conditions hold:
\begin{itemize}
\item $\forall x \in S $, $\Vert x \Vert_1 \le  n  $
\item $\forall x, y \in S$ and $x \not =y$, $\Vert x-y \Vert_1 \ge n/10$
\item $|S| \ge 2^{4d'/5}$
\end{itemize}
Further, $S$ is in fact a subset of $\{0,1\}^d$. 
\end{claim}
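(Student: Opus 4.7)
The plan is to mimic the combinatorial design construction of Claim~\ref{clm:clm0}, which already invokes the Erd\H{o}s--Frankl--F\"uredi result from \cite{EFF85}, rather than the probabilistic code construction of Claim~\ref{clm:clm1}. The reason Claim~\ref{clm:clm1}'s proof cannot be ported directly is that it builds $S$ by scaling a binary code by $\lfloor 1/(4\epsilon) \rfloor$, which equals $0$ as soon as $\epsilon > 1/4$. Hence for $\epsilon > 1$ we must pay for the required $\ell_1$-distance $n/10$ entirely via Hamming distance between characteristic vectors of small weight sets, and this is exactly what designs buy us — at the cost of living in a much larger ambient dimension $d$, which is precisely the $d \geq C \cdot 2^{5\epsilon} \cdot d'$ slack in the hypothesis.

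Concretely, I would set $\ell = \lfloor n \rfloor = \lfloor d'/(4\epsilon) \rfloor$, $\rho = \lfloor 9\ell/10 \rfloor$, and $m = 2^{4d'/5}$, then invoke \cite{EFF85} to obtain sets $T_1, \ldots, T_m \subseteq [d]$ with $|T_i| = \ell$ and $|T_i \cap T_j| \leq \rho$ for $i \neq j$. The applicability condition $d \geq C' \ell^2 m^{1/\rho}/\rho$ simplifies, after substituting $\rho \approx 9\ell/10$, to a requirement of the form $d \geq C'' \cdot \ell \cdot 2^{4d'/(5\rho)}$, and the exponent evaluates to $4d'/(5 \cdot 9\ell/10) = 32\epsilon/9 < 4\epsilon$. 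Since $\ell \leq d'$, this is comfortably implied by the hypothesis $d' \leq d/(C \cdot 2^{5\epsilon})$ for a sufficiently large absolute $C$.

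I would then define $S \subseteq \{0,1\}^d$ to consist of the characteristic vectors $x_1, \ldots, x_m$ of $T_1, \ldots, T_m$. Verification of the three conditions is routine: $\|x_i\|_1 = \ell \leq n$ for the size bound; $\|x_i - x_j\|_1 = 2(\ell - |T_i \cap T_j|) \geq 2(\ell - \rho) \geq \ell/5 \geq n/10$ (after a trivial check with the floor functions, noting that the case of tiny $n$ can be handled separately by padding or is vacuous) for the distance bound; and $|S| = m = 2^{4d'/5}$ by construction. The vectors are manifestly in $\{0,1\}^d$, giving the final ``further'' part of the statement.

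The only even mildly delicate point is the parameter balancing in the EFF85 condition — one must verify that the dimension overhead $2^{4d'/(5\rho)}$ sits below the allowed overhead $2^{5\epsilon}$, which is what forces the specific choice $\rho \approx 9\ell/10$ (larger $\rho$ improves the design's dimension requirement, smaller $\rho$ improves the distance). No genuine technical obstacle arises beyond this constant-chasing, and the design-theoretic ingredient itself is already imported as a black box in the paper.
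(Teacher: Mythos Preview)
Your proposal is correct and takes essentially the same approach as the paper: both invoke the combinatorial design construction of \cite{EFF85} with $\ell \approx n$, $\rho$ a fixed constant fraction of $\ell$ (you use $9\ell/10$, the paper uses $4n/5$), and $m = 2^{4d'/5}$, then take characteristic vectors. The parameter-balancing to verify the EFF85 dimension condition $d \geq C' \ell^2 m^{1/\rho}/\rho$ against the hypothesis $d \geq C \cdot 2^{5\epsilon} d'$ is carried out the same way in both, and your analysis of the exponent $32\epsilon/9 < 5\epsilon$ is exactly the check the paper (implicitly) performs.
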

\begin{proof}
We observe that the construction of set $S$ is related to the construction of combinatorial designs \cite{EFF85,RRV99}  with specific parameters. In particular, the result in \cite{EFF85} allows us to construct sets $S_1, \ldots, S_m \subseteq [d]$ with the following properties : 
\begin{itemize}
\item $\forall i$, $|S_i| \le \lfloor n \rfloor$
\item $\forall i \not = j$, $|S_i \cap S_j| \leq \rho \leq \lfloor 4n/5 \rfloor$ 
\item $m \geq 2^{4d'/5}$ 
\end{itemize}
provided that  $d \ge \frac{C'\cdot n^2 \cdot m^{1/ \rho}}{   \rho }$ (for some large constant $C'$).  Using the conditions on $d,d'$ and $n$, we see that the condition is satisfied provided $C$ is sufficiently large compared to $C'$.  Clearly, if $x_1, \ldots, x_m$ are characteristic vectors of the sets $S_1,\ldots, S_m$ respectively, then $$
S =\{x_1,\ldots, x_m \}$$ 
satisfies the conditions of our claim. 
\end{proof}
\section{Large deviation of Rademacher sums from their mean}
In this section, we prove the following inequality which says that Rademacher sums have large deviations from their mean with significant probability. 
\begin{theorem}\label{thm:thmrad}
Let $x_1,\ldots,x_d$ be i.i.d. $\pm 1$ random variables such that $\mathbb{E} [x_i]=0$. Also, let $a_1, \ldots, a_d \in \mathbb{R}^{+}$ such that $\sum_{i=1}^d a_i = n$.  Then, for any $0\le \theta \leq n/2$, 
$$
\Pr_{x_1,\ldots, x_d} \left[ \left|\sum_{i=1}^d  a_i x_i \right| > \theta\right]  \geq \exp\left(\frac{-2d\theta^2}{n^2}\right)
$$
\end{theorem}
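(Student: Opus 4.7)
The goal is a two-sided tail lower bound for the weighted Rademacher sum $S = \sum_{i=1}^d a_i x_i$; by the symmetry $x_i \stackrel{d}{=} -x_i$ it suffices to lower bound $\Pr[S > \theta]$ and pick up a factor of two. I would handle two regimes of $\theta$ separately, pivoting on whether $\theta$ is small or large relative to $\sqrt{\sum a_i^2}$.

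When $\theta$ is small, a standard second-moment argument suffices. Since $\mathbb{E}[S^2] = \sum a_i^2$ and, by Khintchine, $\mathbb{E}[S^4] \leq 3(\sum a_i^2)^2$, Paley--Zygmund applied to $Z = S^2$ yields $\Pr[S^2 > \theta^2] \geq c$ for an absolute $c > 0$ whenever $\theta^2 \leq \tfrac12 \sum a_i^2$. This universal constant is comfortably larger than the target $\exp(-2d\theta^2/n^2)$ in this range, and recalling the Cauchy--Schwarz lower bound $\sum a_i^2 \geq n^2/d$ means the small regime already covers $\theta = O(n/\sqrt{d})$.

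For the complementary large-$\theta$ regime, my plan is to reduce to the uniform-weight case $a_i = n/d$ and then invoke a direct binomial-tail estimate. The reduction step asserts that among all nonnegative weights summing to $n$, the tail probability $\Pr[|\sum a_i x_i| > \theta]$ is \emph{minimised} by the uniform choice; this follows from a Schur-concavity / two-coordinate swap argument analogous to classical Hoeffding-type comparisons for sums of bounded independent variables. Once reduced to the uniform case, the statement becomes $\Pr[|T| > d\theta/n] \geq \exp(-2(d\theta/n)^2/d)$ for the symmetric $\pm 1$ walk $T = \sum x_i$ of length $d$, and the hypothesis $\theta \leq n/2$ guarantees $d\theta/n \leq d/2$. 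This binomial anti-concentration bound falls out of Stirling's approximation applied to a single central coefficient $\binom{d}{d/2 + k}$ with $k = d\theta/(2n) \leq d/4$.

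The step I expect to be most delicate is the majorisation/Schur-concavity reduction to uniform weights; the intuition (uniform weights spread mass most evenly and so minimise large deviations) is clear, but a clean proof requires a careful two-coordinate replacement lemma. If that route is awkward, a direct fallback is to apply Paley--Zygmund to the tilted random variable $e^{\lambda S}$, using the factorisation $\mathbb{E}[e^{\lambda S}]^2 / \mathbb{E}[e^{2\lambda S}] = \prod_i \cosh^2(\lambda a_i)/\cosh(2\lambda a_i)$ together with the elementary estimate $\cosh^2(t)/\cosh(2t) \geq \tfrac12(1+e^{-2|t|}) \geq e^{-|t|}$ to get $\mathbb{E}[e^{\lambda S}]^2/\mathbb{E}[e^{2\lambda S}] \geq e^{-\lambda n}$; then choosing $\lambda$ of order $d\theta/n^2$ (the Chernoff-optimal scaling) so that $\mathbb{E}[e^{\lambda S}] \gg e^{\lambda \theta}$---which is where the assumption $\theta \leq n/2$ is used---recovers the desired exponential with the stated constant $2$.
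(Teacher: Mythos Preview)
Your key reduction step is wrong: the tail probability $\Pr\bigl[\,|\sum a_i x_i|>\theta\,\bigr]$ is \emph{not} in general minimised at uniform weights. For $d=4$, $n=4$, $\theta=3/2$, uniform weights $(1,1,1,1)$ give $\Pr[|S|>3/2]=10/16$, while $b=(3/2,1,1,1/2)$ (which majorises the uniform vector) gives $\Pr[|S|>3/2]=8/16$. So no Schur-convexity or two-coordinate replacement argument can succeed here: tail functions of weighted Rademacher sums are not Schur-convex in the weight vector, because the distribution functions associated to different weight vectors can cross. Your fallback via Paley--Zygmund on $e^{\lambda S}$ also does not close as sketched. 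The ratio estimate $(\mathbb{E}e^{\lambda S})^2/\mathbb{E}e^{2\lambda S}\ge e^{-\lambda n}$ forces the lower bound to be of order $e^{-\lambda n}$; matching this to $e^{-2d\theta^2/n^2}$ requires $\lambda\le 2d\theta^2/n^3$, but then the side condition $\mathbb{E}e^{\lambda S}\gg e^{\lambda\theta}$ fails in the worst case (for uniform weights $\log\mathbb{E}e^{\lambda S}\approx \lambda^2 n^2/(2d)$, which is $\ll \lambda\theta$ throughout $0<\theta\le n/2$). Separately, the small-$\theta$ Paley--Zygmund step only delivers an absolute constant on the order of $1/12$, which is below $\exp(-2d\theta^2/n^2)$ whenever the exponent is near zero.

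The paper sidesteps all of this by quoting a theorem of Montgomery-Smith: with the $K$-functional $K(a,t)=\inf_{a'}\{\|a-a'\|_1+t\|a'\|_2\}$ one has $\Pr\bigl[\,|\sum a_i x_i|>K(a,t)\,\bigr]\ge e^{-t^2/2}$. The only work is a two-line lower bound on $K(a,t)$ using just $\|a\|_1=n$: for any $a'$, either $\|a'\|_2\ge n/(2\sqrt d)$, or else by Cauchy--Schwarz $\|a'\|_1\le n/2$ and hence $\|a-a'\|_1\ge n/2$; thus $K(a,t)\ge\min\{n/2,\,nt/(2\sqrt d)\}$. Setting $t=2\sqrt d\,\theta/n$ gives the theorem immediately.
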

An immediate application of the above theorem is the following corollary. 
\begin{corollary}\label{cor:cor1}
Let $x_1,\ldots,x_d$ be i.i.d. $\pm 1$ random variables such that $\mathbb{E} [x_i]=0$. Also, let $a_1, \ldots, a_d \in \mathbb{R}^{+}$ such that $\sum_{i=1}^n a_i = n$. Then,
$$
\Pr_{x_1,\ldots, x_d} \left[ \left|\sum_{i=1}^d  a_i x_i \right| > \frac{n}{10 \sqrt{d}}\right] \geq \frac{9}{10}
$$
\end{corollary}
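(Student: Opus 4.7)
I would begin by setting $S = \sum_{i=1}^d a_i x_i$ and $\sigma^2 = \mathbb{E}[S^2] = \sum_i a_i^2$. Cauchy--Schwarz applied to $(a_1,\ldots,a_d)$ and $(1,\ldots,1)$ gives $n^2 = (\sum a_i)^2 \le d\sigma^2$, hence $\sigma^2 \ge n^2/d$ and $\exp(-2d\theta^2/n^2) \le \exp(-2\theta^2/\sigma^2)$. The plan is therefore to prove the scale-invariant inequality $\Pr[|S|>\theta]\ge\exp(-2\theta^2/\sigma^2)$, which depends only on the natural standardized parameter $\theta/\sigma$ and implies the theorem.

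For the reverse-concentration bound itself, I would use an exponential-tilting (Cramer-type) argument. Introduce the tilted measure $dQ_\lambda/dP = e^{\lambda S}/\mathbb{E}[e^{\lambda S}]$, under which the $x_i$ remain independent (with $\Pr_{Q_\lambda}[x_i=1]=(1+\tanh(\lambda a_i))/2$) and $\mathbb{E}_{Q_\lambda}[S]=\sum a_i\tanh(\lambda a_i)$. This mean is continuous and strictly increasing in $\lambda$, ranging over $(0,n)$, so pick $\lambda^*$ with $\mathbb{E}_{Q_{\lambda^*}}[S]=\theta$. Restricting the change-of-measure identity to a small window $(\theta,\theta+\Delta)$ yields
\[
\Pr[S\in(\theta,\theta+\Delta)] \;\ge\; e^{-\lambda^*(\theta+\Delta)}\prod_i\cosh(\lambda^* a_i)\cdot Q_{\lambda^*}[S\in(\theta,\theta+\Delta)].
\]
Under $Q_{\lambda^*}$, $S$ has mean $\theta$ and tilted variance $\sigma_*^2 = \sum a_i^2(1-\tanh^2(\lambda^* a_i)) \le \sigma^2$, so a one-sided Chebyshev estimate makes the final factor $\Omega(1)$ once $\Delta$ is chosen of order $\sigma_*$.

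It then remains to control the exponential factor $e^{-\lambda^*\theta}\prod\cosh(\lambda^* a_i)$. Using the elementary inequalities $\log\cosh(y)\le y^2/2$ and $\log\cosh(y)=\int_0^y\tanh(t)\,dt\ge \tfrac12 y\tanh(y)$ (valid because $\tanh(t)/t$ is decreasing on $(0,\infty)$), one can pinch $\sum\log\cosh(\lambda^* a_i)$ between $\lambda^*\theta/2$ and $(\lambda^*)^2\sigma^2/2$, and use $\tanh(y)\le y$ together with the defining equation $\theta=\sum a_i\tanh(\lambda^* a_i)$ to show $\theta/\sigma^2\le \lambda^*\le 2\theta/\sigma^2$ throughout the range $\theta\le n/2$. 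Substituting gives $\Pr[S>\theta]\ge\Omega(1)\cdot\exp(-\theta^2/\sigma^2\cdot(1+o(1)))$, and the claimed bound follows after the symmetry $\Pr[|S|>\theta]=2\Pr[S>\theta]$ and the conversion $1/\sigma^2\le d/n^2$ from the first paragraph.

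The main obstacle I anticipate is arranging the constants so that the exponent comes out to be exactly $2$, rather than some larger absolute constant produced by the crude tilting computation above. To handle this cleanly I would split into two regimes: for $\theta\le c\sigma$ with a small absolute $c$ the fourth-moment Paley--Zygmund bound $\Pr[|S|>\theta]\ge (1-\theta^2/\sigma^2)^2/3$ (using $\mathbb{E}[S^4]\le 3\sigma^4$) already dominates $e^{-2\theta^2/\sigma^2}$, while for $\theta\in[c\sigma,n/2]$ the tilting computation is tight up to constants and optimizing $\lambda$ and $\Delta$ yields the sharper constant. None of the individual estimates is deep; the work lies in the bookkeeping needed to glue the two regimes together at the boundary.
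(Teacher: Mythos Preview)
Your plan departs substantially from the paper. The paper deduces the corollary in one line from Theorem~\ref{thm:thmrad} by setting $\theta=n/(10\sqrt{d})$, which makes the right-hand side $\exp(-1/50)>9/10$; and Theorem~\ref{thm:thmrad} itself is not proved by a tilting computation but by quoting Montgomery--Smith's lower-tail inequality for Rademacher sums (Theorem~\ref{lem:MS}) and combining it with an elementary lower bound on the $K$-functional $K(y,t)$. A self-contained argument along your lines would be a pleasant alternative, but the proposal has a gap that cannot be fixed by bookkeeping.

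The obstruction is not the constant $2$ in the exponent but the implicit prefactor $1$. You assert that for small $\theta$ the Paley--Zygmund estimate $(1-\theta^2/\sigma^2)^2/3$ already dominates $e^{-2\theta^2/\sigma^2}$; this is the wrong way around, since as $\theta\to 0$ the former tends to $1/3$ while the latter tends to $1$. The tilting branch has the same defect: the window probability $Q_{\lambda^*}[S\in(\theta,\theta+\Delta)]$ contributes a fixed multiplicative constant strictly below $1$, so the best your method can deliver is $\Pr[|S|>\theta]\ge c\cdot\exp(-c'\theta^2/\sigma^2)$ with some absolute $c<1$. For the corollary this is fatal, because there $\theta/\sigma\le 1/10$, the exponential factor is essentially $1$, and the whole content of the statement is that the probability exceeds $9/10$; a bound capped near $1/3$ says nothing. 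In fact the scale-invariant target $\Pr[|S|>\theta]\ge e^{-2\theta^2/\sigma^2}$ you set yourself is simply false: for $d=2$ with $a_1=a_2=n/2$ one has $\Pr[|S|>\theta]=1/2$ for every $0<\theta<n$, while $e^{-2\theta^2/\sigma^2}\to 1$ as $\theta\to 0$.
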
 
To prove Theorem~\ref{thm:thmrad}, we use 
 the following result due to Montgomery-Smith \cite{Mon90}. For $y \in \mathbb{R}^n$ and $t>0$, we define
$$
K(y,t)  = \inf _{y' \in \mathbb{R}^n} \{   \Vert y-y' \Vert_1 + t \Vert y'\Vert_2           \}
$$
The following theorem about large deviation of Rademacher sums from the mean was proven by Montgomery-Smith \cite{Mon90}. 
\begin{theorem}\label{lem:MS}
Let $x_1,\ldots,x_d$ be i.i.d. $\pm 1$ random variables such that $\mathbb{E} [x_i]=0$.  Then, 
$$
\Pr_{x_1,\ldots, x_d} \left[ \left|\sum_{i=1}^n  y_i x_i \right| > K(y,t)\right] \geq \exp(-t^2/2)
$$
\end{theorem}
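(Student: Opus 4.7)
The plan is to prove this sharp lower tail bound by reducing to an arbitrary splitting of $y$ into a ``concentrated'' part $u$ and a ``spread'' part $v$, handling $u$ deterministically and $v$ via sharp moment estimates that mimic a Gaussian lower tail. Since $K(y,t)$ is defined as an infimum, it suffices to prove: for every decomposition $y = u + v$,
$$\Pr\left[\left|\sum_i y_i x_i\right| > \|u\|_1 + t\|v\|_2\right] \geq e^{-t^2/2},$$
because taking the infimum over $(u,v)$ on the left keeps the event only larger, and on the right produces $K(y,t)$. Writing $S_u = \sum u_i x_i$ and $S_v = \sum v_i x_i$, I would use that $|S_u| \leq \|u\|_1$ always, so whenever $|S_v|$ exceeds $\|v\|_1 + t\|v\|_2$ and the signs line up favorably, the full sum exceeds $\|u\|_1 + t\|v\|_2$. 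The standard way to handle the ``sign alignment'' issue cleanly is to restrict attention to the extremal decompositions of the K-functional for the couple $(\ell^1,\ell^2)$: it is classical that these can be taken to be soft-threshold/disjoint-support decompositions, in which case $S_u$ and $S_v$ depend on disjoint coordinates and are independent, and one can condition on $S_u$ without losing constants.

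The core analytic step is then to prove the Gaussian-type lower tail
$$\Pr\left[|S_v| > t\|v\|_2\right] \geq e^{-t^2/2}$$
for Rademacher sums. I would establish this via the moment method with optimal constants. By direct combinatorial expansion,
$$\mathbb{E}[S_v^{2p}] = \sum_{\alpha : |\alpha|=2p,\ \alpha \text{ even}} \binom{2p}{\alpha}\prod_i v_i^{\alpha_i} \;\geq\; (2p-1)!!\,\|v\|_2^{2p},$$
which is sharp Khintchine (Haagerup). Applying a refined Paley--Zygmund inequality to $Z = S_v^{2p}$,
$$\Pr\bigl[Z > \theta \,\mathbb{E}[Z]\bigr] \geq (1-\theta)^{2}\,\frac{\mathbb{E}[Z]^{2}}{\mathbb{E}[Z^{2}]},$$
and optimizing $p \approx t^2/2$ yields the desired $e^{-t^2/2}$ bound, using Stirling to turn $(2p-1)!!$ into $(2p/e)^p$.

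The main obstacle is getting the exponent constant exactly $1/2$ rather than some smaller absolute constant. Two technical points drive this: (i) one must use Khintchine with its sharp Haagerup constants (not just generic constants), and (ii) in combining the $\ell^1$ bound with the $\ell^2$ lower tail, one must not incur a factor-of-$2$ loss of the form ``$t\|v\|_2 + 2\|u\|_1$''; the disjoint-support extremal decomposition plus a conditioning argument on the sign of $S_u$ (using the symmetry $S_u \stackrel{d}{=} -S_u$) is what avoids this. Montgomery-Smith's original paper instead proceeds by a truncation plus Bernstein-style argument directly on the characteristic function; as an alternative I would keep the moment/Paley--Zygmund route above, since it gives both halves of the argument in the same language and makes the role of the K-functional transparent.
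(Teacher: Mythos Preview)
The paper does not prove this statement at all: it is quoted from Montgomery-Smith~\cite{Mon90} and used as a black box. So there is no ``paper's proof'' to compare your sketch against; what matters is whether your sketch could stand alone.

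It cannot, for a concrete reason. Your core analytic input is the moment lower bound
\[
\mathbb{E}\bigl[S_v^{2p}\bigr] \;=\; \sum_{\substack{|\alpha|=2p\\ \alpha\text{ even}}} \binom{2p}{\alpha}\prod_i v_i^{\alpha_i} \;\ge\; (2p-1)!!\,\|v\|_2^{2p},
\]
and this inequality is false in general: it points the wrong way. Writing $\alpha=2\beta$ with $|\beta|=p$, the comparison reduces to $\binom{2p}{2\beta}$ versus $(2p-1)!!\binom{p}{\beta}$, and one checks $\binom{2p}{2\beta}\le (2p-1)!!\binom{p}{\beta}$ termwise (equivalently, Rademacher even moments are \emph{dominated} by Gaussian ones, which is exactly the content of the sharp upper Khintchine inequality). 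The only easy lower bound available is $\mathbb{E}[S_v^{2p}]\ge \|v\|_2^{2p}$ via Jensen, and that is far too weak for your purposes.

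Even setting the sign error aside, the Paley--Zygmund route will not deliver the exact exponent $1/2$. Paley--Zygmund bounds the tail by a ratio $\mathbb{E}[Z]^2/\mathbb{E}[Z^2]$, and with $Z=S_v^{2p}$ this ratio is at best of order $((2p-1)!!)^2/(4p-1)!!$; optimizing in $p$ yields $e^{-c t^2}$ for some absolute $c>0$, not $e^{-t^2/2}$. Montgomery-Smith's argument is not a moment/Paley--Zygmund computation; it proceeds by explicitly identifying the extremal decomposition for $K(y,t)$ (ordering $|y_i|$ and splitting at the index where the partial $\ell^1$ mass matches $t$ times the residual $\ell^2$ mass) and then bounding the tail directly for that decomposition. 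If you want the sharp constant you will need to follow that route or something equally specific; your current sketch only gives the result up to an unspecified absolute constant in the exponent, which is weaker than what is stated.
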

To use Theorem~\ref{lem:MS}  in order to prove the Theorem~\ref{thm:thmrad}, we make the following claim. 
\begin{claim}
Let $ y \in \mathbb{R}^d$ such that $\Vert y\Vert_1=n$. Then, $ K(y,t)  > \min \{n/2, nt /(2\sqrt{d}) \}$.   
\end{claim}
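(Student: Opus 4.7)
The plan is to bound $K(y,t)$ from below by a case split on the $\ell_1$ mass of the candidate vector $y'$. Since $K(y,t)$ is an infimum over all $y' \in \mathbb{R}^d$, it suffices to show that for every $y'$, the quantity $\Vert y - y'\Vert_1 + t \Vert y'\Vert_2$ exceeds $\min\{n/2,\ nt/(2\sqrt{d})\}$, and the natural threshold is $\Vert y'\Vert_1 = n/2$.

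First I would handle the case where $y'$ is ``small,'' namely $\Vert y'\Vert_1 < n/2$. Here the reverse triangle inequality gives $\Vert y - y'\Vert_1 \geq \Vert y\Vert_1 - \Vert y'\Vert_1 > n - n/2 = n/2$, so already the first summand exceeds $n/2$ and hence exceeds $\min\{n/2,\ nt/(2\sqrt{d})\}$.

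Next I would handle the case where $y'$ is ``large,'' namely $\Vert y'\Vert_1 \geq n/2$. The key observation is the standard norm inequality coming from Cauchy--Schwarz (or Jensen): for any $y' \in \mathbb{R}^d$, $\Vert y'\Vert_1 \leq \sqrt{d}\,\Vert y'\Vert_2$, so $\Vert y'\Vert_2 \geq \Vert y'\Vert_1/\sqrt{d} \geq n/(2\sqrt{d})$. Therefore $t\Vert y'\Vert_2 \geq nt/(2\sqrt{d})$, and again the sum exceeds $\min\{n/2,\ nt/(2\sqrt{d})\}$.

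Since every $y'$ falls into one of these two cases, taking the infimum gives $K(y,t) \geq \min\{n/2,\ nt/(2\sqrt{d})\}$. The strict inequality in the statement follows because in the small-$y'$ case the triangle inequality bound is already strict, while in the large-$y'$ case the bound $\Vert y'\Vert_2 \geq \Vert y'\Vert_1/\sqrt{d}$ is tight only when all coordinates of $y'$ have equal magnitude, and combining that equality condition with $\Vert y-y'\Vert_1 = 0$ would force $\Vert y\Vert_2 = n/(2\sqrt{d})$, contradicting $\Vert y\Vert_1 = n$. There is no genuine obstacle here; the whole argument is a two-line case analysis driven by the triangle inequality and the $\ell_1$-$\ell_2$ comparison in $\mathbb{R}^d$.
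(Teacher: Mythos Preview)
Your proof is correct and follows essentially the same route as the paper: the paper case-splits on whether $\Vert y'\Vert_2 \ge n/(2\sqrt{d})$, you case-split on whether $\Vert y'\Vert_1 \ge n/2$, and these thresholds are equivalent via the same Cauchy--Schwarz comparison both of you invoke. One small remark: passing from pointwise strict inequality to a strict infimum tacitly uses that the infimum is attained (the objective is continuous and coercive), a point the paper also glosses over by simply writing ``consider $y'$ which achieves the infimum.''
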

\begin{proof}
Note that $ K(y,t)  = \inf _{y' \in \mathbb{R}^n} \{   \Vert y-y'\Vert_1 + t \Vert y'\Vert_2           \}$. Consider $y'$ which achieves the infimum. If $\Vert y' \Vert_2 \ge n /(2\sqrt{d})$, then we are done. Else, we get that $\Vert y'\Vert_1 \le n/2$ (by Cauchy - Schwarz). Then by triangle inequality, we get that $\Vert y-y' \Vert _1 \geq n/2$. 
\end{proof}
The theorem immediately follows by plugging the lower bound on $K(y,t)$ from the above claim in Theorem~\ref{lem:MS}.
\section{Concentration of measure for the sum of squares of Gaussians}\label{sec:chi}
In this section, we prove a result about the concentration of measure for the sum of squares of i.i.d. $\mcal{N}(0,\sigma)$ random variables. While this seems to be a well studied distribution in literature, 
we could not find a usable result on its concentration and hence we prove the following theorem here. 
\begin{theorem}\label{thm:chi}
Let $X_1, \ldots, X_k$ be $k$ i.i.d. $\mcal{N}(0,\sigma)$ random variables. Then, 
$$
\Pr_{X_1, \ldots, X_k} [ X_1^2 + X_2^2 + \ldots + X_k^2 >2 (1+\eta)k \sigma^2] \le 2^{\frac{-\eta k}{2}}
$$
\begin{proof}
Note that by definition, for any $i \in [k]$,  $$ \Pr[X_i =t] = \sqrt{\frac{1}{2 \pi \sigma^2}} \ e^{-\frac{t^2}{2 \sigma^2}} \ dt$$
Then, consider the random variable $Z_i = \exp\left(\frac{X_i^2}{4 \sigma^2}\right)$. We note that 
$$
\mathbb{E}[Z_i]  = \int_{-\infty}^{\infty} \sqrt{\frac{1}{2 \pi \sigma^2}} \ e^{-\frac{t^2}{2 \sigma^2}} \cdot e^{\frac{t^2}{4 \sigma^2}} \ dt =\int_{-\infty}^{\infty} \sqrt{\frac{1}{2 \pi \sigma^2}} \ e^{-\frac{t^2}{4 \sigma^2}} \ dt =\sqrt{2}
$$
Now, observe that 
\begin{eqnarray*}\Pr_{X_1, \ldots, X_k} [ X_1^2 + X_2^2 + \ldots + X_k^2 > \lambda ] &=& \Pr_{X_1, \ldots, X_k} \left[ \frac{X_1^2 + X_2^2 + \ldots + X_k^2}{4\sigma^2} > \frac{\lambda}{4\sigma^2} \right] \\ 
&=& \Pr_{X_1, \ldots, X_k} \left[ \exp\left(\frac{X_1^2 + X_2^2 + \ldots + X_k^2}{4\sigma^2}\right) > \exp\left(\frac{\lambda}{4\sigma^2} \right)\right]  \\ 
& \le & \left(\mathop{\mathbb{E}}_{X_1, \ldots, X_k}    \exp\left(\frac{X_1^2 + X_2^2 + \ldots + X_k^2}{4\sigma^2}\right)\right)   /      \exp\left(\frac{\lambda}{4\sigma^2} \right)     \end{eqnarray*} 
Using independence of the $X_i$'s, we get that the above expression is 
$$
\frac{\prod_{i=1}^k  \left(\mathop{\mathbb{E}}\exp\left(\frac{X_i^2 }{4\sigma^2}\right)\right) }{  \exp\left(\frac{\lambda}{4\sigma^2} \right) }   = \frac{\prod_{i=1}^k \mathbb{E}  [Z_i] }{\exp\left(\frac{\lambda}{4\sigma^2} \right) } =\frac{2^{k/2}}{\exp\left(\frac{\lambda}{4\sigma^2} \right) }  
$$
Putting $\lambda = 2 (1+\eta)k \sigma^2$, we get that the above expression is 
$$
\frac{2^{k/2}}{\exp\left(\frac{\lambda}{4\sigma^2} \right) }   = \frac{2^{k/2}}{ \exp\left(\frac{2 (1+\eta)k \sigma^2}{4\sigma^2} \right) } \le  \frac{2^{k/2}}{ 2^{\frac{ (1+\eta)k }{2} } } \leq 2^{-\frac{\eta k}{2}}
$$
\end{proof}
\end{theorem}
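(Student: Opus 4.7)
The plan is to apply the standard Chernoff moment-generating-function method to the random variable $S = \sum_{i=1}^k X_i^2$. By Markov's inequality applied to $e^{tS}$ for an appropriate $t > 0$,
$$\Pr[S > \lambda] \;\leq\; e^{-t\lambda}\, \mathbb{E}[e^{tS}] \;=\; e^{-t\lambda} \prod_{i=1}^k \mathbb{E}[e^{tX_i^2}],$$
where the product factorization uses independence of the $X_i$.

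The first task is to compute $\mathbb{E}[e^{tX^2}]$ for a single $X \sim \mathcal{N}(0,\sigma)$. This is a routine Gaussian integral: completing the square in the exponent $-x^2/(2\sigma^2) + tx^2$ shows that (for $t < 1/(2\sigma^2)$) the integrand is proportional to a Gaussian density with variance $\sigma^2/(1-2t\sigma^2)$, giving $\mathbb{E}[e^{tX^2}] = 1/\sqrt{1 - 2t\sigma^2}$. Rather than optimize $t$ carefully, I would simply pick the convenient value $t = 1/(4\sigma^2)$, so that $\mathbb{E}[e^{tX^2}] = \sqrt{2}$ and consequently $\mathbb{E}[e^{tS}] = 2^{k/2}$.

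Plugging $\lambda = 2(1+\eta)k\sigma^2$ into the Chernoff bound yields
$$\Pr[S > 2(1+\eta)k\sigma^2] \;\leq\; \frac{2^{k/2}}{\exp((1+\eta)k/2)}.$$
To match the stated form $2^{-\eta k/2}$, I would use the crude inequality $e \geq 2$, which turns the denominator into $2^{(1+\eta)k/2}$ and collapses the ratio to $2^{-\eta k/2}$. There is no real obstacle here; the whole argument is a one-line Chernoff calculation, and the only design choice is the value of $t$, which is picked precisely so that $\mathbb{E}[e^{tX^2}] = \sqrt{2}$ and the final exponent comes out clean.
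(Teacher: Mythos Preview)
Your proposal is correct and follows essentially the same approach as the paper: both apply Markov's inequality to $\exp(tS)$ with the specific choice $t = 1/(4\sigma^2)$, compute $\mathbb{E}[e^{tX_i^2}] = \sqrt{2}$ via the Gaussian integral, factorize by independence to get $2^{k/2}$, and finish with the crude bound $e \geq 2$. The only cosmetic difference is that you first state the general formula $\mathbb{E}[e^{tX^2}] = (1-2t\sigma^2)^{-1/2}$ before specializing, whereas the paper computes directly at the chosen $t$.
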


\end{document}